\documentclass[10pt]{article}

\usepackage[T1]{fontenc}
\usepackage[a4paper, margin=.95in]{geometry}

\usepackage{graphicx}
\usepackage[english]{babel}
\usepackage{amsthm}
\newtheorem{theorem}{Theorem}[section]
\newtheorem{corollary}[theorem]{Corollary}
\newtheorem{lemma}[theorem]{Lemma}
\newtheorem{proposition}[theorem]{Proposition}
\newtheorem{definition}[theorem]{Definition}

\usepackage{thmtools}

\usepackage{amsmath, amsfonts}
\usepackage{enumitem}
\usepackage{bbm}
\usepackage{tikz}
\usepackage[colorlinks, pagebackref]{hyperref}
\def\tmp#1#2#3{%
  \definecolor{Hy#1color}{#2}{#3}%
  \hypersetup{#1color=Hy#1color}}
\tmp{link}{HTML}{800006}
\tmp{cite}{HTML}{2E7E2A}
\tmp{file}{HTML}{131877}
\tmp{url} {HTML}{8A0087}
\tmp{menu}{HTML}{727500}
\tmp{run} {HTML}{137776}
\def\tmp#1#2{%
  \colorlet{Hy#1bordercolor}{Hy#1color#2}%
  \hypersetup{#1bordercolor=Hy#1bordercolor}}
\tmp{link}{!60!white}
\tmp{cite}{!60!white}
\tmp{file}{!60!white}
\tmp{url} {!60!white}
\tmp{menu}{!60!white}
\tmp{run} {!60!white}
\usepackage{subcaption}
\usepackage{bm}
\usepackage{tikz}
\usepackage{pgfplots}
\pgfplotsset{compat=1.18}

\renewcommand{\vec}[1]{\boldsymbol{\mathbf{#1}}}
\renewcommand{\o}{\ensuremath{\Phi}}
\renewcommand{\t}{\ensuremath{\sigma}}
\DeclareMathOperator{\OPT}{OPT}
\DeclareMathOperator{\comp}{comp}
\DeclareMathOperator{\ALG}{ALG}
\DeclareMathOperator{\GRD}{GRD}
\DeclareMathOperator{\SW}{SW}
\DeclareMathOperator{\LW}{LW}
\DeclareMathOperator*{\argmax}{\mathop{argmax}}
\DeclareMathOperator{\supp}{supp}
\DeclareMathOperator*{\E}{\mathop{\mathbb{E}}}
\newcommand{\vtype}{\ensuremath{\vec{\tau}}}
\DeclareMathOperator{\rev}{REV}

\usepackage{natbib}

\usepackage[noblocks]{authblk}

\begin{document}

\title{\bfseries
Generalized Prophet Inequalities for Online Auctions}

\author[1]{Twan Kroll}
\author[1]{Rebecca Reiffenhäuser}

\affil[1]{Institute for Logic, Language and Computation, University of Amsterdam, The Netherlands.}

\date{}

\maketitle
\begin{abstract}
\noindent
Prophet Inequalities for online auctions, as many classical auction settings, traditionally consider agents who strive to maximize quasi-linear utility, i.e. their obtained value minus the price paid.
This simplified view of the agents' objectives does not capture real-world settings where agents are almost always constrained by budgets, and can have different objectives altogether (e.g., very prominently, value maximizers in autobidding).

We generalize classic prophet inequality auction settings to accommodate unknown, varying types of agent objectives and budgets, modeling them as stochastic input similar to the traditional valuation distributions.

In this new model, we analyze (anonymous) posted-price mechanisms, one main category of which are the \emph{balanced prices} central to many auction results in the prophet inequality model.
We demonstrate both the resilience and limits of balanced price mechanisms for generalized prophet inequalities with stochastic agent objectives, budgets and valuations: our generalization of the balanced prices framework incorporates budget constraints and mixed objectives. We show constant competitive ratios for some the most central settings, spanning all the way up to agents with subadditive valuation functions. On the way, we establish trade-offs between competitive ratio and agents' price-sensitivity, and give a general reduction from the budgeted setting to budget-free value maximizers.
\end{abstract}
\tableofcontents

\thispagestyle{empty}
\newpage
\addtocounter{page}{-1}
\section{Introduction} 
\subsection{Background and Motivation}
Prophet inequalities (PIs) span a vast field of problem variants and models, with some of their most central applications in \emph{online auctions} \cite{feldman2014,dutting2024,correa2023constant,braun2023simplified,alaei2014}. Here, a set of items is available for sale, and strategic buyers arrive in adversarial order. An online (posted-price) mechanism sets a (either fixed, or per-round) price for each possible bundle of items, with the aim of maximizing the sum of agents' values, the so-called \emph{social welfare}. 
Each buyer will, on arrival, draw their valuation from a known independent distribution and be allowed to purchase the bundle of available items that maximizes their personal objective, given the posted prices. The vast majority of literature assumes that buyers maximize \emph{quasilinear utility}, i.e., value of their assigned outcome minus the price paid.

However, the assumption that buyers are utility maximizers is often unrealistic.
Most prominently, the rise of \emph{autobidding} in ad auctions has made clear that beyond spending less than they gain, real agents might not care much about the price paid and instead strive to generate the highest-possible value \cite{aggarwal2019autobidding,aggarwal2024,balseiro2021landscape}. 
Complementing the existence of an optimal, 2-competitive mechanism for XOS combinatorial auctions with utility-maximizing agents \cite{feldman2014}, for value maximizers with XOS valuations, there exist prices that achieve an expected social welfare at least a quarter of the expected optimal social welfare \cite{deng2022}. 

Both results are obtained via \emph{posted-price} mechanisms \cite{chawla2010multi} as described above. Such mechanisms have many desirable properties: they are intuitive, algorithmically simple, and inherently \emph{truthful}, meaning that agents have no incentive to misreport their valuation. 

However, neither established model reaches far enough. Consider, e.g., an auction for spots on the recommendation lists of a number of websites.
While there might be private buyers interested in being featured on the very few, most profitable specific sites featured in the current auction, they might compete with large companies that routinely take part in a high number of such auctions via automated bidding systems. 
While agents of both kinds might be interested in simply maximizing their notoriety using a given advertising budget, they might also have outside options for advertising with certain value/price paid, leaving them to some degree price-sensitive. 
Finally, their valuations can be complex and, even when auto-bidding, far from additive: companies will, for example, aim to be recommended by a wide range of websites instead of several ones on the same topic. Reasonably, future autobidding systems will advance to accommodate much more detailed valuations.
Both phenomena, the presence of classic together with auto-bidding agents and intermediate agents, as well as value-maximizers with complex valuations, have so far rarely been considered. In this work, we introduce the first prophet inequalities accommodating all kinds of such different agents, at the same time.

No matter the agents' objectives and class of complement-free valuations considered, in practically all settings, it is also reasonable to assume that buyers are constrained by a budget. 
In the presence of budgets, the standard mechanism objective changes from sum of valuations to \emph{liquid welfare}, introduced in \cite{dobzinski2014efficiency}. Liquid welfare represents agents' \emph{ability to pay}, and is defined by capping the value of an assignment by the receiving agent's budget. For budget-constrained utility maximizers with XOS valuations, there exists a posted price mechanism that achieves an expected liquid welfare of at least a quarter of the expected offline optimal liquid welfare \cite{fotakis2019}.
Beyond the standard utility maximizers, budgets 
under the liquid-welfare benchmark have to the best of our knowledge not been studied in prophet inequalities. 
Accommodating both natural extensions of different agent objectives and budgets, we study a prophet inequality auction model where besides the valuation, each agent's objective and budget are also stochastic.

\subsection{Our results}
We introduce prophet inequalities for agents with stochastic objectives, i.e. which strive to each maximize a function of the valuation and price for their assigned outcome, drawn from a known distribution at the time of arrival. 
We study posted-price mechanisms in which agents face price listings for obtainable outcomes, like being assigned certain (sets of) items, and then choose an outcome maximizing their objective. Following the majority of the literature (see related work) and major applications, in our model, agents have varying objectives according to their \emph{price sensitivity}, parametrized by a $\sigma\in [0,1]$. This captures both the prevalent notions of \emph{value-maximizers} and \emph{utility-maximizers} as the extreme cases.
Furthermore, while seeking to maximize their objective, agents are constrained by budgets and a Return-on-Spend constraint as is common in autobidding. In the budget-free case, the mechanism strives to maximize the standard objective of social welfare, while for settings with budgets, we instead analyze the standard notion of liquid welfare (see e.g. \cite{dobzinski2014efficiency,aggarwal2024}). For budget-free settings, liquid welfare coincides with social welfare.

\paragraph{Budget Reduction.}
To obtain welfare guarantees in budgeted settings, we prove that budget constraints can be removed via performing an appropriate adjustment of agents' valuations and objective.
Our reduction idea is different from, but not unsimilar to the one in \cite{baldeschi2026}. 
Consider a distribution $\mathcal D$ over $(\vec v,\vec B,\vec \t)$, i.e., $\mathcal D$ is a distribution over valuation, budget and price sensitivity profiles where we assume independence across agents. We define a budget-free distribution $\hat{\mathcal D}$ as follows. Sample $(\vec v,\vec B,\vec \t)\sim \mathcal D$, then the value in the reduced realization becomes the budget-capped valuation of the original instance, i.e., one caps the valuation of the original instance by the budget. However, this transformation does not preserve the agent's decisions, even if the prices are preserved. 
In realizations where an agent's valuation exceeds their budget, capping the valuation would alter the objective of the agent. 
This means that the outcome that maximizes the original objective might not maximize the agent's new objective with capped valuations. 
To remedy this, we transform the agent's objective in such realizations to become value maximizing. Note that this naturally creates stochastic objectives: even if in the original instance, all agents were e.g. utility maximizers, the transformed instance includes agents that are value maximizers in some realizations, creating settings with mixed objectives. 

\paragraph{Utility-Bounded Prices} Inspired by offline smoothness \cite{syrgkanis2013composable}, we introduce the notion of \emph{$(\lambda,\mu)$-utility-bounded} (UB) pricing rules. Many guarantees of posted-price mechanisms exhibit a proof structure based on a balance or trade-off between revenue and utility. 
Similar to this, our analysis considers a \emph{tentative} feasible bundle each agent could have deviated to, and uses these deviations to lower-bound the aggregate proxy utility of the resulting allocation 
by some difference between the optimal welfare and revenue scaled by $\lambda$ and $\mu$, respectively. The notion of utility-boundedness expresses and quantifies this property and can thus capture many existing pricing rules in the literature. Let $\sigma_{\min}$ be the smallest price sensitivity in any realization of a budget-free distribution $\mathcal D$ and 0 if budgeted. 

\begin{theorem}[Informal version of Theorem~\ref{thm:utility-bounded-guarantee}]\label{thm:utility-bound-guarantee-informal}
    Any pricing rule that is $(\lambda,\mu)$-UB achieves a liquid welfare of at least 
    $$\frac{\lambda}{\max\{1,1+\mu-\sigma_{\min}\}}\E_{(\vec v,\vec B,\vec \t)\sim \mathcal D}[\LW(\OPT(\vec v,\vec B))],$$
    where $\OPT(\vec v,\vec B)$ is the offline liquid welfare maximizing allocation under the valuation and budget profiles $\vec v$ and $\vec B$, respectively.
\end{theorem}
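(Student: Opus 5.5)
We first reduce to the budget-free case and then run a smoothness-style argument. By the budget reduction described above, replace $\mathcal D$ with $\hat{\mathcal D}$: each valuation is capped at the budget, and in realizations where capping binds the agent becomes a value maximizer ($\t=0$). The reduction preserves agents' choices under the same prices. The liquid welfare of the mechanism on $\mathcal D$ equals the social welfare on $\hat{\mathcal D}$, and $\E[\LW(\OPT(\vec v,\vec B))]$ equals the expected optimal social welfare under $\hat{\mathcal D}$. The reduced instance may contain agents with $\t=0$, which is why $\sigma_{\min}$ is set to $0$ in the budgeted case. From now on we work budget-free with the smallest sensitivity $\sigma_{\min}$.

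Next we set up a per-agent accounting. With price sensitivity $\t_i$, agent $i$ maximizes the proxy utility $v_i(S)-\t_i p(S)$ subject to the Return-on-Spend constraint $p(S)\le v_i(S)$. Let $X_i$ be the bundle chosen and $p_i$ its price. Then
\[
v_i(X_i)=\bigl(v_i(X_i)-\t_i p_i\bigr)+\t_i p_i \ge u_i+\sigma_{\min}p_i ,
\]
where $u_i$ is the proxy utility. We also have $v_i(X_i)\ge p_i$ by RoS. Summing over agents gives $\SW\ge \sum_i u_i+\sigma_{\min}\rev$ and $\SW\ge\rev$.

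Now invoke $(\lambda,\mu)$-utility-boundedness, in the form of the formal definition preceding Theorem~\ref{thm:utility-bounded-guarantee}. Each agent deviates to its tentative bundle, and optimality of the chosen bundle gives the lower bound
\[
\E\Bigl[\sum_i u_i\Bigr]\ge \lambda\,\E[\SW(\OPT)]-\mu\,\E[\rev].
\]
Combining this with the first inequality yields
\[
\E[\SW]\ge \lambda\,\E[\SW(\OPT)]-(\mu-\sigma_{\min})\E[\rev].
\]
We then split into two cases.

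\textbf{Case $\mu\le\sigma_{\min}$.} The revenue term is nonnegative, so $\E[\SW]\ge\lambda\,\E[\SW(\OPT)]$, which is the bound with denominator $1$.

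\textbf{Case $\mu>\sigma_{\min}$.} Use $\E[\rev]\le\E[\SW]$ to move the revenue term to the left. This gives $(1+\mu-\sigma_{\min})\E[\SW]\ge\lambda\,\E[\SW(\OPT)]$.

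Together the two cases give the $\max\{1,1+\mu-\sigma_{\min}\}$ denominator.

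We expect the main obstacle to be the reduction step. It requires checking carefully that capping valuations and switching capped agents to value maximization keeps each agent's chosen bundle optimal. This includes the RoS and budget feasibility of bundles, and in particular that in capped realizations the budget constraint $p\le B_i$ coincides with RoS on capped values. A second point to check is that the utility-boundedness inequality is applied with respect to $\hat{\mathcal D}$, so that the benchmark matches the capped optimum. This should follow because $\LW$ and the capped social welfare agree pointwise.
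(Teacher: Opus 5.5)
Your proof is correct and follows the paper's argument. The chain is the same: you lower-bound the proxy objective via the UB deviations, use $\hat\sigma_i\ge\sigma_{\min}$ to get $\E[\LW]-\sigma_{\min}\rev\ge\sum_i\E[\hat\Phi_i]$, and split on $\mu$ versus $\sigma_{\min}$ using $\rev\le\E[\LW]$. The one difference is presentational: the paper never passes to $\hat{\mathcal D}$ as an instance (which would raise tie-breaking issues, $\hat{\mathcal D}$ being correlated across agents through $\vec z$, and possibly different distribution-dependent prices), but applies the UB definition, already stated with capped valuations on $\mathcal D$, together with Proposition~\ref{prop:deviation}, which is exactly the check you flag as the main obstacle.
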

This theorem has several interesting applications. First, from this result one can easily prove that the competitive ratio of $(\lambda,\mu)$-UB pricing rules in budget-free utility maximizing settings deteriorates with at most a factor of $2-\sigma_{\min}$ when applying this pricing rule to our general model. More specifically, budgets and value-maximizers worsen our guarantees by at most a factor of 2.

\paragraph{Combinatorial Auctions} One of the central applications/settings captured by prophet inequalities is that of combinatorial auctions. In a combinatorial auction, there is a set $M$ of items. For each agent, the outcome set is $2^M$ and allocations are feasible if all the bundles are pairwise disjoint. Let $\mathcal I_{\Sigma}$ denote the class of all instances in which agents' price sensitivities are sampled from some distribution over $\Sigma\subseteq [0,1]$ and $\mathcal I^{\infty}_{\Sigma}$ denote its budget-free subclass. 

\paragraph{XOS Combinatorial Auctions}
When agents have XOS valuations, using the additive support of a valuation profile $\vec v$ for the optimal offline allocation, the price of item $j$ can be 
set to half its expected contribution to an offline optimum, defined via the additive support value of $j$'s winning agent. This classic pricing rule results in an optimal competitive ratio of $2$, see \cite{feldman2014}. Since $\min(v(\cdot),B_i)$ remains XOS when $v(\cdot)$ is XOS, this naturally extends to budgeted settings and the liquid welfare benchmark. Moreover, instead of the factor of a half, we can consider any factor $\delta\in (0,1)$ and let $\vec p_{\delta}$ denote this pricing rule.

For general $\delta\in(0,1)$, we prove that $\vec p_{\delta}$ is $(1-\delta,\frac{1-\delta}{\delta})$-UB. Theorem~\ref{thm:utility-bound-guarantee-informal} then allows us to upper bound the competitive ratio of the pricing rule $\vec p_{\delta}$ over the class $\mathcal I_{\Sigma}^{\infty}$, for any $\delta\in(0,1)$, as
\begin{equation}\label{eq:CA-intro-tight}\comp(\mathcal I_{\Sigma}^{\infty},\vec p_{\delta})\leq \max\left\{\frac{1-\delta\cdot \sigma_{\min}}{\delta(1-\delta)},\frac{1}{1-\delta}\right\}.\end{equation}
We prove that this is tight for every $\delta\in (0,1)$ by constructing matching lower bounds. The pricing rule $\vec p_{\frac12}$, often called the standard \emph{half-pricing rule}, has been widely researched before. E.g. for budget-free combinatorial auctions with XOS-valuations, it was considered in \cite{Dutting2020} and \cite{deng2022}. Using $\delta=\frac12$ in the bound in Equation~\eqref{eq:CA-intro-tight} gives a bound that deteriorates linearly in $\sigma_{\min}$
$$\comp(\mathcal I_{\Sigma}^{\infty},\vec p_{\frac12})=4-2\sigma_{\min}.$$
When $\Sigma=\{1\}$ and $\delta=\frac12$, we recover the problem studied in \cite{feldman2014} where they prove a competitive ratio of 2, which our result matches. 
The setting in \cite{deng2022} matches ours when $\Sigma=\{0\}$ and $\delta=\frac12$, for which they proved an upper bound of 4 (again matching our results). 
Previously, to the best of our knowledge, this latter bound was not known to be tight.

The optimal $\delta$ in terms of $\t_{\min}$ can be computed from the tight bound in Equation~\eqref{eq:CA-intro-tight}. The way in which the optimal $\delta$, call it $\delta^\ast$, varies with respect to $\t_{\min}$ is captured in Figure~\ref{fig:optimal-delta}. 
The deterioration of the competitive ratio of the pricing rule $\vec p_{\delta^\ast(\t_{\min})}$ in terms of $\t_{\min}$ is illustrated in Figure~\ref{fig:comparison} (violet curve) in comparison to $4-2\t_{\min}$ (red curve).

\begin{figure}
\newlength{\mywidth}
\centering
\begin{subfigure}[t]{0.48\textwidth}
\caption{Competitive ratios for all $\t_{\min}$, for optimal choices $\delta^*$ vs. $\delta=1/2$ (i.e. the established half-pricing rule).}
\setlength{\mywidth}{0.84\textwidth}
  \begin{tikzpicture}
  [ declare function={
    func1(\x)= (4-2*\x);
    func2(\x)=and(\x>= 0.61803398875,\x<=1) * (1/(1+\x))   +
    and(\x>0, \x<0.61803398875) * (1-sqrt(1-\x))/(\x)+
    (\x==0) * (1/2);
    func3(\x)=and(\x>= 0.61803398875,\x<=1) * (1/(1-func2(\x)))   +
    and(\x>0, \x<0.61803398875) * (1-\x * func2(\x))/(func2(\x) * (1-func2(\x)))+
    (\x==0) * (4);}
    ]
    \begin{axis}[
      scale only axis,
      width=\mywidth,
      height=.75\mywidth,
      xmin=0, xmax=1.07,
      ymin=1.9, ymax=4.2,
      axis x line=bottom,
      axis y line=left,
      axis line style={line width=1pt},
      tick style={semithick},
      ticklabel style={font=\small},
      xlabel={$\t_{\min}$},
      xlabel style={at={(axis description cs:0.5,-0.12)},anchor=north},
      ylabel={},
      ylabel style={xshift=2.2cm, yshift=-1.2cm, rotate=-90},
      xtick={0,1},
      xticklabels={$0$, $1$},
      ytick={2,4},
      yticklabels={$2$,$4$},
      clip=false,
    ]
     \addplot[red, domain=0:1, samples = 100, very thick]{func1(x)}; 
     \addplot[blue, domain=0:1, samples = 100, very thick]{func3(x)}; 
      \addplot[only marks, mark=*] coordinates {(0,4)};
      \addplot[only marks, mark=*, node near coord style={anchor= south west}, nodes near coords={(1,2)}] coordinates {(1,2)};
      
    \node at (0.9,2.6) {$4-2\t_{\min}$}; 
    \node at (0.55,2.1) {$\comp(\mathcal I_{\Sigma}^{\infty},\vec p_{\delta^\ast})$};
    \end{axis}
  \end{tikzpicture}
  \label{fig:comparison}
\end{subfigure}
\hfill
\begin{subfigure}[t]{0.45\textwidth}
\caption{Optimal factors $\delta^*$ for all $\t_{min}$, replacing the established half-pricing rule from standard utility- or value-maximizers. }
\setlength{\mywidth}{0.84\textwidth}
  \begin{tikzpicture}
              [ declare function={
    func2(\x)= and(\x>= 0.61803398875,\x<=1) * (1/(1+\x))   +
    and(\x>0, \x<0.61803398875) * (1-sqrt(1-\x))/(\x)+
    (\x==0) * (1/2);}
    ]

    \begin{axis}[
      scale only axis,
      width=\mywidth,
      height=.75\mywidth,
      xmin=0, xmax=1.07,
      ymin=0.35, ymax=0.75,
      axis x line=bottom,
      axis y line=left,
      axis line style={line width=1pt},
      tick style={semithick},
      ticklabel style={font=\small},
      xlabel={$\t_{\min}$},
      xlabel style={at={(axis description cs:0.5,-0.12)},anchor=north},
      ylabel={},
      ylabel style={xshift=2.2cm, yshift=-1.2cm, rotate=-90},
      xtick={0,0.61803398875,1},
      xticklabels={$0$,$\frac{\sqrt{5}-1}{2}$, $1$},
      ytick={0.5,0.61803398875},
      yticklabels={$\frac12$,$\frac{\sqrt{5}-1}{2}$},
      clip=false,
    ]
     \addplot[red, domain=0:1, samples = 100, very thick]{func2(x)}; 
    \node at (0.88,0.6) {$\delta^\ast(\t_{\min})$};  
    \end{axis}
  \end{tikzpicture}
  \label{fig:optimal-delta}
\end{subfigure}
\end{figure}

For the class with budget-constrained settings, $\mathcal I_{\Sigma}$, Theorem~\ref{thm:utility-bound-guarantee-informal} implies the following bound on the competitive ratio of $\vec p_{\delta}$ for $\delta\in (0,1)$:
\begin{equation}\label{eq:budget-bound-CA}\comp(\mathcal I_{\Sigma},\vec p_{\delta})\leq \frac{1}{(1-\delta)\delta}.\end{equation}
For any $\delta\in (0,1)$ we construct a matching lower bound to prove that the bound in Equation~\eqref{eq:budget-bound-CA} is tight. The competitive ratio is minimized for $\delta=\frac12$, yielding $4$.
When $\Sigma=\{1\}$, our setting matches that studied in \cite{fotakis2019}, where they obtain an upper bound of $4$ specific to the pricing rule $\vec p_{\frac12}$. This upper bound for budgeted utility maximizers with budgets was, to the best of our knowledge, not known to be tight for the respective pricing rule.

\paragraph{Subadditive Combinatorial Auctions} For subadditive combinatorial auctions, the existence of constant-competitive prophet inequalities was only recently established. Based on the results of \cite{correa2023constant}, we show the existence of dynamic bundle prices that are $(\frac16,1)$-UB showing by Theorem~\ref{thm:utility-bound-guarantee-informal} that these prices are $6(2-\sigma_{\min})$-competitive for subadditive combinatorial auctions with stochastic agent objectives, further demonstrating the power and generality of our approach. Since $\sigma_{\min}=0$ for budgeted settings, it immediately follows that this pricing rule is $12$-competitive when agents are budget-constrained.

The correspondence in upper bounds between budget-free value maximizers and the budgeted setting is no coincidence, and our reduction technique sheds light on this phenomenon. 

\paragraph{Balanced Prices}
For \emph{utility maximizers}, and far beyond just auction settings, the concept of \emph{balanced prices} introduced 
by \cite{Dutting2020} is a central tool for obtaining posted-price prophet inequalities and fuel a range of important results in prophet inequalities. Balanced prices satisfy two key inequalities, representing that the total price paid for an overall outcome $x$ corresponds to at least an $1/\alpha$-fraction of the loss in possible welfare after choosing $x$ (prices are not too low), but at most a $\beta_1$-fraction of the optimal welfare achievable after choosing $x$, plus $\beta_2$ times the overall algorithm welfare (prices are not too high). If the pricing rule $\vec p^{(\vec v,\vec \t)}$ satisfies both, it is called $(\alpha,\beta_1,\beta_2)$-balanced. We extend balanced prices to the budgeted setting and prove that they are also captured by the utility-boundedness notion. More specifically, if a pricing rule is $(\alpha,\beta_1,\beta_2)$-balanced and satisfies some additional mild assumptions, then the expected pricing rule scaled by $\delta$ is $(1-\delta\beta_1-\delta\beta_2,\alpha \frac{1-\delta\beta_1}{\delta})$-UB. This in combination with Theorem~\ref{thm:utility-bound-guarantee-informal} allows us to import many results.

\paragraph{MPH-k and Matroid Constrained Auctions with Budgets}
Mirroring the results for utility maximizers in \cite{Dutting2020}, we extend balanced prices to both budgeted combinatorial auctions with MPH-$k$ valuations, introduced in \cite{FeigeFIILS15}, and to matroid-constraint auctions, also exploring computational aspects. 
For the budget-free setting with utility maximizers, \cite{Dutting2020} give an upper bound of $4k-2$ for combinatorial auctions with MPH-$k$ valuations.
Similarly, for budget-free auctions with matroid constraints, they show existence of a $2$-approx. pricing rule w.r.t. $\OPT$. Finally, they construct prices that can be computed in polynomial time, and yield a $4$-approximation. Applying our framework to these settings, for budgeted combinatorial auctions with MPH-$k$ valuations, we achieve an upper bound of $4k$.
For matroid-constrained auctions, based on their existential proof we get a $4$-approximation, while their computationally efficient pricing rule can be extended to budgeted settings to still achieve an $8$-approximation for submodular valuations.

\subsection{Further Related Work}\label{sec:further-related-works}
\paragraph{Prophet Inequalities} Initiated by \cite{krengel1977,krengel1978}, prophet inequalities are a central concept in optimal stopping theory. They proved existence of a strategy that in expectation achieves at least half of the expected reward of a prophet who knows all realizations in advance. \cite{Cahn1984} showed that the same bound of 1/2 can be obtained by accepting any value that is at least a suitable threshold. 
In the past decades, prophet inequalities have garnered increased interest from the TCS community, in large part due to their connection with online mechanism design, first observed by \cite{hajiaghayi2007automated}. \cite{chawla2010multi} developed a more general theory connecting prophet inequalities to sequential posted prices.

In fact, \cite{correa2019pricing} proved an equivalence between posted price mechanisms and prophet inequalities based on threshold strategies. Later, \cite{banihashem2024power} gave a general reduction to show that any prophet inequality has a posted price implementation. 

Closest to us in topic are online combinatorial auctions, where buyers arrive sequentially and choose bundles of items given posted prices. \cite{feldman2014} prove that for XOS valuations, a generalization of the original half-pricing rule is 2-competitive. Retaining constant competitive ratios, \cite{dutting2024} show that few samples of the distributions suffice for XOS combinatorial auctions. For subadditive buyers, combining \cite{feldman2014} with \cite{bhawalkar2011welfare} yields an $O(\log m)$ approximation (with $m$ being the number of items) for subadditive buyers. Later, this ratio was improved in \cite{zhang2022improved} to $O(\log m / \log \log m)$ which subsequently improved to $O(\log \log m)$ in \cite{dutting2020log} when finally a constant factor of 6 was achieved in \cite{correa2023constant}.

\paragraph{Balanced prices.} The balanced prices framework, introduced in \cite{Dutting2020}, offers a unifying perspective for many existing prophet inequalities. Its versatility allowed for deriving new and improved results in settings like online knapsack, matroids and integer packing problems. Using balancedness to construct prices/thresholds was an idea already explored by \cite{kleinberg2012matroid} through the notion of balanced thresholds. Later works, such as \cite{dutting2019posted}, extended balanced prices to give constant guarantees in limited information settings where the exact distribution of the buyer's valuation is unknown. 

\paragraph{Online Combinatorial Auctions and Autobidding.} \cite{aggarwal2019autobidding} initiated the study of autobidding settings and the model has since been analyzed through the lens of efficiency by \cite{deng2021towards,balseiro2021,mehta2022auction,deng2024}. For posted price mechanisms, \cite{deng2022} studied combinatorial auctions when all agents are RoS-constrained value maximizers and proved that the half-pricing rule is $4$-competitive up to XOS valuations. \cite{fotakis2019} studies posted price mechanisms for XOS combinatorial auctions with budget-constrained utility maximizers and proved that the half-pricing rule is 4-competitive.

\paragraph{Agents' objectives.} We go beyond settings in which all agents have the same objective, and allow the objectives to even be stochastic. The price sensitivity parametrization, $\sigma\in [0,1]$, of an agent's objective was considered by \cite{balseiro2021} and \cite{jiang2026revenue}. Relatedly, \cite{baldeschi2026} derive upper bounds on the price of anarchy of simultaneous first-price auctions in this generalized agent model by reducing budget-constrained instances to budget-free instances. Their bounds depend on both $\sigma_{\min}:=\min\Sigma$ and $\sigma_{\max}:=\max\Sigma$, while our bounds for $\sigma \in [0,1]$ depend only on $\sigma_{\min}$.

\section{Preliminaries} 
\paragraph{Problem Formulation} Our prophet inequality setting is similar to that of \emph{combinatorial allocation problems}
presented in \cite{Dutting2020}, but generalizes agents' objectives. In this setting, there is a set $N:=[n]$ of agents and for each agent $i\in [n]$ there is an outcome set $X_i$ containing $\emptyset$. 
The \emph{joint outcome space} and \emph{outcome profile} are denoted by $X=X_1\times\dots\times X_n$ and $\vec x=(x_i)_{i\in [n]}$, where $x_i\in X_i$, respectively.
Given an outcome profile $\vec x$ and a set of agents $S$, we use $\vec x_S$ to denote the \emph{partial outcome profile} in which each agent $i\in S$ receives $x_i$ and each $i\notin S$ receives $\emptyset$. Specifically, given an agent $i\in [n]$, we write $\vec x_{[i-1]}$ to denote the partial outcome profile where $i'<i$ receives $x_{i'}$ and each agent $i'\geq i$ receives $\emptyset$. 
There is a subset $\mathcal F\subseteq X$ of \emph{feasible outcomes} for which we assume that it is downward closed, i.e., if $\vec x\in \mathcal F$, then $\vec x_S\in \mathcal F$ for any $S\subseteq [n]$. 

\paragraph{Posted Price Mechanisms} A \emph{pricing rule} is a profile of functions $\vec p=(p_1,\dots p_n)$ that assigns nonnegative prices to outcomes. Let $p_i(x_i\mid\vec y)$ denote the price offered to agent $i$ for receiving outcome $x_i$, given the partial outcome profile $\vec y$. We require that $p_i(x_i\mid\vec y)=\infty$ whenever $(x_i,\vec y_{-i})\not\in \mathcal F$ and $p_i(\emptyset\mid \vec y)=0$. If prices do not depend on the partial outcomes (apart from feasibility) they are referred to as \emph{static}, and if they do not depend on the agent's identity they are called \emph{anonymous}. We assume w.l.o.g. that agents arrive in the order of their index, i.e. agent $i$ arrives before agent $j$ whenever $i<j$. Upon arrival and according to the prices $p_i$, agent $i$ chooses the outcome $x_i\in X_i$ that maximizes their objective $\o_i$.

\paragraph{Agents' Objectives} Each agent $i$ has a monotone valuation function $v_i:X_i\to \mathbb R_{\geq0}$, a budget $B_i\in (0,\infty]$, and a price sensitivity $\t_i\in [0,1]$. 
The valuations of agents are contained in some class $\mathcal V$ and we assume for the remainder of this paper that this class is closed under capping. More precisely, if $v_i\in \mathcal V$, then for any constant $B_i\in [0,\infty]$ we have that $\hat v_i(\cdot):=\min\{v_i(\cdot),B_i\}$ lies in $\mathcal V$.
We further assume that each agent type is specified by its valuation, budget, and price sensitivity. The type of agent $i$ is drawn from a distribution $\mathcal D_i$. Let $\mathcal D= \mathcal D_1\times\dots \times \mathcal D_n$ denote the product distribution over type profiles. Abusing notation slightly, we denote a sample of $\mathcal D$ as $\vtype =(\vec v,\vec B,\vec \t)\sim \mathcal D$ and write $\pi$ for the marginal distribution of $\mathcal D$ over $(\vec v,\vec B)$.
Also, we assume that distribution $\mathcal D$ is public knowledge, and pricing rules may depend on it. We often suppress the dependence on $\mathcal D$ to simplify notation. 
Given a partial outcome $\vec x_{[i-1]}$, agent $i$ chooses the outcome that maximizes their objective. This is captured by the following program
$$\begin{aligned}
    \max_{x_i\in X_i}\qquad & v_i(x_i)-\t_i\cdot p_i(x_i\mid\vec x_{[i-1]})  \\
    \text{subject to}\qquad  &  p_i(x_i\mid\vec x_{[i-1]})\leq v_i(x_i)\\
    &p_i(x_i\mid\vec x_{[i-1]})\leq B_i,
\end{aligned}$$
where we let $\o_i(\vec x):=v_i(x_i)-\t_i\cdot p_i(x_i\mid\vec x_{[i-1]})$ denote the \emph{objective} of agent $i$. We call the first constraint the return on spend (RoS) constraint. The second, we refer to as the budget constraint. Note that given a pricing rule $\vec p$, once $v_i,B_i,\t_i$ are sampled for all agents, the outcome profile is deterministic given an arbitrary tie-breaking rule. We use $\vec z(\vec v,\vec B,\vec \t)$ to denote this outcome profile, and omit dependence on $\vec p$ as it will be clear from context. Moreover, any bundle $x_i\in X_i$ satisfying the RoS and budget constraint is called a \emph{feasible bundle}. To define prices, our mechanisms frequently use an (offline) \emph{outcome rule} or \emph{allocation rule}, which is a function $\ALG$ that maps a realization $(\vec v,\vec B,\vec \t)$ to a feasible outcome profile.

\paragraph{Valuation Functions for Combinatorial Auctions}
For combinatorial auctions, we assume that each bidder $i$ has a \emph{monotone} valuation function $v_i: 2^{M} \to \mathbb{R}_{\ge 0}$ with $v_i(\emptyset) = 0$. Let $v_i: 2^M \rightarrow \mathbb{R}_{\ge 0}$ be a valuation function,
\begin{itemize}
\item $v_i$ is \emph{XOS} if there exists a class $\mathcal{L}_{i} = \{(v^\ell_{ij})_{j \in M} \in \mathbb{R}^m_{\ge 0}\}$ of additive valuations such that for every subset $S \subseteq M$, it holds that $v_i(S) = \max_{\ell \in \mathcal{L}_{i}} \sum_{j \in S} v^\ell_{ij}$. 
\item $v_i$ is subadditive if $v_i(S\cup T)\leq v_i(T)+v_i(S)$ for all $S,T\subseteq M$.
\end{itemize}
We remark that if $v_i(\cdot)$ is XOS or subadditive, then $\hat v_i(\cdot):=\min\{v_i(\cdot),B_i\}$ remains XOS or subadditive, respectively, for any $B_i\in (0,\infty]$.

\paragraph{Liquid Welfare Objective} The standard benchmark for measuring the efficiency of outcomes in budget-free settings (when $\vec B=\vec \infty$) is in terms of social welfare. For budgeted settings, we measure the efficiency of outcomes in terms of the \emph{liquid welfare} (following \cite{dobzinski2014efficiency}), which is also the standard benchmark in the autobidding literature, see \cite{aggarwal2024}. 
Intuitively, the liquid welfare of an allocation corresponds to the maximum amount of revenue that can be extracted from the agents. The liquid welfare of a feasible outcome profile $\vec x\in \mathcal F$ is determined by
$$\LW(\vec x):=\sum_{i\in [n]}\min\{v_i(x_i),B_i\}.$$
We shall use $\OPT(\vec v,\vec B, \mathcal H):=\argmax_{\vec x\in \mathcal H}\LW(\vec x)$ for the allocation that maximizes the liquid welfare w.r.t. the feasibility constraint $\mathcal H\subseteq X$ and omitting the dependence on $\mathcal H$ when it is clear from context. When it is not immediately clear in the context with respect to which valuations and budgets the liquid welfare is taken, we shall write $\LW_{(\vec v,\vec B)}(\vec x)$ as the liquid welfare under the allocation $\vec x$ with valuation and budget profiles $(\vec v,\vec B)$.
For an agent $i$ with valuation function $v_i$ and budget $B_i$, we use $\hat v_i(\cdot) := \min\{v_i(\cdot),B_i\}$ for the \emph{liquid valuation function}, or the \emph{budget-capped valuation,} of agent $i$. 
For budget-free settings, i.e., when $\vec B=\vec \infty$, we note that the social welfare function $\SW(\vec x):=\sum_{i\in [n]}v_i(x_i)$ coincides with the liquid welfare function $\LW(\vec x)$.

\paragraph{Prophet Inequalities} An instance of a \emph{combinatorial allocation problem} is defined by a distribution $\mathcal D$ and is called \emph{budget-free} if $\vec B$ is $\vec \infty$. The \emph{competitive ratio} of a pricing rule $\vec p$ w.r.t. a class of instances of online outcome problems $\mathcal C$, is then defined as
$$\comp(\mathcal C,\vec p)=\sup_{\mathcal D\in \mathcal C}\frac{\E_{(\vec v,\vec B,\vec \t)\sim \mathcal D}\left[\LW(\OPT(\vec v,\vec B))\right]}{\E_{(\vec v,\vec B,\vec \t)\sim \mathcal D}\left[\LW(\vec z(\vec v,\vec B,\vec \t))\right]},$$
where $\vec z(\vec v,\vec B,\vec \t)$ is the allocation induced by the posted price $\vec p$ on the realization $(\vec v,\vec B,\vec \t)\sim \mathcal D$.
In other words, it is the worst case ratio between the optimal offline liquid welfare and the liquid welfare achieved by the allocation induced by the posted price mechanism. Often, we shall also write $\OPT(\vtype)$ for simplicity's sake as it is independent of $\vec \t$ anyway.
We will consider instances where each agent's price sensitivity is contained in a set $\Sigma$  (chosen to be $\subseteq[0,1]$) and $E_{\vtype\sim \mathcal D}[\LW(\OPT)]<\infty$. Let $\mathcal I_{\Sigma}$ denote this instance class and let $\mathcal I_{\Sigma}^{\infty}$ denote its budget-free subclass.

\section{Approximation Guarantees}
\subsection{Budget-Reduction}
Consider an instance $\mathcal D$ with a pricing rule $\vec p$. We define a new distribution $\hat{\mathcal D}(\mathcal D,\vec p)$ where $(\hat{\vec v},\vec \infty,\hat{\vec \t})$ is sampled from $\hat{\mathcal D}(\mathcal D,\vec p)$ as follows. Sample $(\vec v,\vec B,\vec \t)\sim \mathcal D$, choose $\hat v_i(\cdot):=\min\{v_i(\cdot),B_i\}$ and partition the agents into two sets. Let $N_1=\{i\in [n]\mid v_i(z_i(\vec v,\vec B,\vec \t))\geq B_i\}$ and $N_2=[n]\setminus N_1$. For $i\in N_1$, we let $\hat \t_i=0$ while for $i\in N_2$ we let $\hat \t_i=\t_i$. We call this transformation of instances the \emph{budget reduction} and $\hat{\mathcal D}$ the \emph{reduced instance} or \emph{proxy instance}. We let $\hat\o_i(\vec x):=\hat v_i(x_i)-\hat\sigma_i p_i(x_i\mid \vec x_{[i-1]})$, $\hat u_i(\vec x):=\hat v_i(x_i)- p_i(x_i\mid \vec x_{[i-1]})$ and will often omit the dependence of $\mathcal D$ and $\vec p$ on $\hat{\mathcal D}$. Lastly, note that if $\mathcal D$ is budget-free, then $\hat{\mathcal D}=\mathcal D$. The following proposition is a useful behavioural reduction in the proxy setting.

\begin{restatable}{proposition}{propdeviation}\label{prop:deviation}
    Given an instance $\mathcal D$, a realization $(\vec v,\vec B,\vec \t)\sim \mathcal D$ with pricing rule $\vec p$, we have for every agent $i$ and any feasible bundle $y_i$ given $\vec z_{[i-1]}(\vec v,\vec B,\vec \t)$  the following inequality:
    \begin{equation}\label{eq:deviation}\hat \o_i(\vec z(\vec v,\vec B,\vec \t))\geq \hat v_i(y_i)-p_i(y_i\mid \vec z_{[i-1]}(\vec v,\vec B,\vec \t)).\end{equation}
\end{restatable}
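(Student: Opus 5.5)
We fix the realization, write $z_i$ for $z_i(\vec v,\vec B,\vec \t)$, and abbreviate $p(\cdot):=p_i(\cdot\mid \vec z_{[i-1]})$. Let $y_i$ be a feasible bundle, so $p(y_i)\le v_i(y_i)$ and $p(y_i)\le B_i$; together these give $p(y_i)\le \hat v_i(y_i)$. We also use that $z_i$ is itself feasible and maximizes $v_i(x)-\t_i p(x)$ over feasible $x$. Since $\hat\o_i(\vec z)=\hat v_i(z_i)-\hat\t_i p(z_i)$, the argument splits according to whether $i\in N_1$ or $i\in N_2$.

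\emph{Case $i\in N_1$.} Here $v_i(z_i)\ge B_i$, so $\hat v_i(z_i)=B_i$, and $\hat\t_i=0$, giving $\hat\o_i(\vec z)=B_i$. On the other side, $\hat v_i(y_i)\le B_i$ and $p(y_i)\ge 0$, so $\hat v_i(y_i)-p(y_i)\le B_i$. The inequality follows immediately.

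\emph{Case $i\in N_2$.} Here $v_i(z_i)<B_i$, so $\hat v_i(z_i)=v_i(z_i)$ and $\hat\t_i=\t_i$, giving $\hat\o_i(\vec z)=v_i(z_i)-\t_i p(z_i)$. By optimality of $z_i$ among feasible bundles,
\[
v_i(z_i)-\t_i p(z_i)\ge v_i(y_i)-\t_i p(y_i)\ge v_i(y_i)-p(y_i),
\]
where the last step uses $\t_i\le 1$ and $p(y_i)\ge0$. Finally, $v_i(y_i)\ge\hat v_i(y_i)$, so $v_i(y_i)-p(y_i)\ge \hat v_i(y_i)-p(y_i)$, which closes this case.

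The main point requiring care is Case $N_2$. Capping could in principle make the agent's original choice suboptimal for $\hat\o_i$, which is why we compare against the utility-type quantity $\hat v_i(y_i)-p(y_i)$ rather than against $\hat\o_i$ of the deviation. This comparison goes through only because the full price term, not the $\t_i$-scaled one, appears on the right-hand side.
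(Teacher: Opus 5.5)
Your proof is correct and follows the paper's argument essentially step for step. The paper uses the same split on $v_i(z_i)\ge B_i$ versus $v_i(z_i)<B_i$, the same bound $\hat\Phi_i(\vec z)=B_i\ge \hat v_i(y_i)-p_i(y_i\mid \vec z_{[i-1]})$ in the first case, and the same chain through optimality of $z_i$, $\sigma_i\le 1$ and $v_i\ge \hat v_i$ in the second; the paper additionally handles a redundant sub-case $p_i(y_i\mid \vec z_{[i-1]})>\min\{v_i(y_i),B_i\}$ (negative right-hand side), which you rightly need not treat since $y_i$ is assumed feasible.
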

\begin{proof}
    To simplify the notation, we abbreviate $\vec z := \vec z(\vec v,\vec B,\vec \t)$ and distinguish between two cases.

    \bigskip

    \noindent
    \textbf{Case 1: $v_i(z_i)\geq B_i$.} We immediately find that
    $$\hat \o_i(\vec z)=\hat v_i(z_i)-\hat \sigma_i \cdot p_i(z_i\mid \vec z_{[i-1]})=B_i\geq \hat v_i(y_i)- p_i(y_i\mid \vec z_{[i-1]}).$$
    The second equality follows because $\hat \sigma_i=0$ and by definition of $\hat v_i$. The inequality follows again by definition of $\hat v_i$ and the fact that $p_i(y_i\mid \vec z_{[i-1]})\geq 0$.

    \bigskip

    \noindent
    \textbf{Case 2: $v_i(z_i)< B_i$.} In this case, we argue
    $$\hat \o_i(\vec z)=\hat v_i(z_i)-\hat \sigma_i \cdot  p_i(z_i\mid \vec z_{[i-1]})=v_i(z_i)-\sigma_i \cdot  p_i(z_i\mid \vec z_{[i-1]}).$$
    Here, the second equality follows since $v_i(z_i)<B_i$ and thus $\hat \sigma_i=\sigma_i$. Now, if $p_i(y_i\mid \vec z_{[i-1]})\leq \min\{v_i(y_i),B_i\}$, then $v_i(z_i)-\sigma_i \cdot  p_i(z_i\mid \vec z_{[i-1]})\geq v_i(y_i)-\sigma_i \cdot p_i(y_i\mid \vec z_{[i-1]})$ by the definition of $\vec z$ being the bundle that maximizes the objective. Then, the inequality in Equation~\eqref{eq:deviation} readily follows by the definition of $\hat v_i$ and the fact that $\sigma_i\leq 1$. If $p_i(y_i\mid \vec z_{[i-1]})> \min\{v_i(y_i),B_i\}$, then the inequality from Equation~\eqref{eq:deviation} follows immediately as the RHS is negative.
\end{proof}

The reduction in Proposition~\ref{prop:deviation} says that the proxy objective of the chosen bundle from the original instance is lower bounded by the proxy utility of any feasible bundle. Thus, lower bound analyses for utility maximizers where agents deviate by buying a specifically chosen feasible bundle can be lifted through Proposition~\ref{prop:deviation} to hold more generally. Such arguments to lower bound the utility appear e.g. in \cite{feldman2014,Dutting2020,dutting2024}.

Let $\mathcal D$ be an instance and $\Sigma$ be the support of the marginal distribution over price sensitivities. Then, we define $\t_{\min}=\min \Sigma$ if $\mathcal D$ is budget-free and otherwise as $0$. The reason is that our budget reduction works via setting some agents' price sensitivity to 0, effectively equalizing $\sigma_{\min}$, and with it the resulting competitive ratios, to be worst-possible in the presence of budgets. This is a consequence of the necessity to match price's effect in reduced instances to that in original ones appropriately. 

\subsection{Utility-Bounded Prices}

A useful characterization of pricing rules that we study is captured in the following definition.

\begin{definition}\label{def:utility-bounded}
    Let $\vec p$ be a pricing rule and $\lambda,\mu\geq 0$. We say that a pricing rule $\vec p$ is $(\lambda,\mu)$-utility-bounded (UB) w.r.t. an allocation rule $\ALG$ if for every instance $\mathcal D$ and every realization $\vtype \sim \mathcal D$, there exists a distribution $Y_i^{\vtype}$ over feasible bundles available to agent i under $\vtype$
    \begin{equation}\label{eq:utility-bounded-bound}
        \E_{\substack{\vtype \sim \mathcal D\\ y_i^{\vtype}\sim Y_i^{\vtype}}}\left[\sum_{i\in N}\hat v_i(y_i^{\vtype})-p_i(y_i^{\vtype}\mid \vec z_{[i-1]}(\vtype))\right]\geq \lambda \E_{\vtype\sim \mathcal D}[\LW(\ALG(\vtype))]-\mu \rev.
    \end{equation}
Here, $\rev$ is the expected revenue that the pricing rule $\vec p$ attains over $\mathcal D$. If $\ALG=\OPT$, we shall just say that the pricing rule is $(\lambda,\mu)$-UB.
\end{definition}

Next, we prove that UB pricing rules have constant welfare guarantees.

\begin{theorem}\label{thm:utility-bounded-guarantee}
    Let $\mathcal D$ be an instance, $\lambda>0$ and $\mu\geq 0$. A pricing rule $\vec p$ that is $(\lambda,\mu)$-UB w.r.t. an allocation rule $\ALG$ achieves an expected liquid welfare of at least
\begin{equation}\label{eq:utility-bound-guarantee}\frac{\lambda}{\max\{1,1+\mu-\sigma_{\min}\}}
    \E_{\vtype\sim \mathcal D}[\LW(\ALG(\vtype))].
\end{equation}
\end{theorem}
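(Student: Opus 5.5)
Write $\vec z:=\vec z(\vtype)$ and $\rev:=\E[\sum_i p_i(z_i\mid\vec z_{[i-1]})]$. The plan is to sandwich the expected aggregate proxy objective $\E[\sum_i\hat\o_i(\vec z)]$. Proposition~\ref{prop:deviation} bounds it from below by utility-boundedness. Direct algebra bounds it from above by a combination of $\E[\LW(\vec z)]$ and $\rev$. Comparing the two bounds, and using $\rev\le\E[\LW(\vec z)]$, should give the claimed factor.

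\textbf{Lower bound.} For every realization $\vtype$ and every feasible bundle $y_i$, Proposition~\ref{prop:deviation} gives $\hat\o_i(\vec z)\ge \hat v_i(y_i)-p_i(y_i\mid\vec z_{[i-1]})$. This holds pointwise, so it also holds after taking expectation over $y_i\sim Y_i^{\vtype}$. Summing over $i$, taking expectation over $\vtype$, and applying the $(\lambda,\mu)$-UB inequality~\eqref{eq:utility-bounded-bound}, we get
\[\E\Big[\sum_i\hat\o_i(\vec z)\Big]\ \ge\ \lambda\,\E[\LW(\ALG(\vtype))]-\mu\,\rev.\]

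\textbf{Upper bound.} By definition, $\hat\o_i(\vec z)=\hat v_i(z_i)-\hat\sigma_i p_i(z_i\mid\vec z_{[i-1]})$, and summing the terms $\hat v_i(z_i)$ gives exactly $\LW(\vec z)$. I need $\sum_i\hat\sigma_i p_i\ge\sigma_{\min}\sum_i p_i$. In the budget-free case $\hat\sigma_i=\sigma_i\ge\sigma_{\min}$, so this holds. In the budgeted case $\sigma_{\min}=0$ by convention, so it holds trivially. Hence
\[\E\Big[\sum_i\hat\o_i(\vec z)\Big]\le\E[\LW(\vec z)]-\sigma_{\min}\rev.\]
Combining with the lower bound yields $\E[\LW(\vec z)]\ge\lambda\E[\LW(\ALG)]-(\mu-\sigma_{\min})\rev$.

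\textbf{Bounding the revenue.} This step needs care. Each agent buys a bundle satisfying both the RoS constraint and the budget constraint, so $p_i(z_i\mid\vec z_{[i-1]})\le\min\{v_i(z_i),B_i\}=\hat v_i(z_i)$. Therefore $\rev\le\E[\LW(\vec z)]$. Now split on the sign of $\mu-\sigma_{\min}$:
\begin{itemize}
\item If $\mu-\sigma_{\min}\ge0$, substituting the revenue bound gives $(1+\mu-\sigma_{\min})\E[\LW(\vec z)]\ge\lambda\E[\LW(\ALG)]$.
\item If $\mu-\sigma_{\min}<0$, the revenue term is nonnegative and can be dropped, giving $\E[\LW(\vec z)]\ge\lambda\E[\LW(\ALG)]$.
\end{itemize}
Together the two cases give the bound with $\max\{1,1+\mu-\sigma_{\min}\}$ in the denominator.

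\textbf{Main obstacle.} The main obstacle is checking that the feasibility and measurability details of Definition~\ref{def:utility-bounded} align with Proposition~\ref{prop:deviation}, namely that each $y_i^{\vtype}$ is feasible given $\vec z_{[i-1]}(\vtype)$. A second point to verify is that all expectations are finite: $\E[\LW(\OPT)]<\infty$, and $\rev\le\E[\LW(\vec z)]\le\E[\LW(\OPT)]$, so the rearrangements above are legitimate.
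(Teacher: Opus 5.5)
Your proof is correct and follows the paper's argument. The paper likewise lower-bounds the expected proxy objective via Proposition~\ref{prop:deviation} and utility-boundedness, upper-bounds it by $\E[\LW(\vec z)]-\sigma_{\min}\rev$ using $\hat\sigma_i\ge\sigma_{\min}$, and then splits on the sign of $\mu-\sigma_{\min}$, using $\rev\le\E[\LW(\vec z)]$ from the RoS and budget constraints. Your explicit check that $\hat\sigma_i\ge\sigma_{\min}$ holds in both the budget-free and budgeted cases, and your finiteness remark, are small clarifications the paper leaves implicit.
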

\begin{proof}
        Since $\vec p$ is $(\lambda,\mu)$-UB, there exist distributions $Y_i^{\vtype}$ such that the inequality in Equation~\eqref{eq:utility-bounded-bound} holds. Since $Y_i^{\vtype}$ is a distribution over feasible bundles, we can apply Proposition~\ref{prop:deviation} to obtain the following bound:
    \begin{align*}
        \sum_{i\in N}\E_{\vtype\sim \mathcal D}[\hat \Phi_i(\vec z(\vtype))]&\geq  \E_{\substack{\vtype \sim \mathcal D\\ y_i^{\vtype}\sim Y_i^{\vtype}}}\left[\sum_{i\in N}\hat v_i(y_i^{\vtype})-p_i(y_i^{\vtype }\mid \vec z_{[i-1]}(\vtype))\right]\\
        &\geq \lambda \E_{\vtype \sim \mathcal D}[\LW(\ALG(\vtype))]-\mu \rev.
    \end{align*}
    Now, since $\sigma_{\min}\leq \hat \sigma_i$ for any $i$ in any realization, the following lower bound on the difference between the expected liquid welfare achieved by the pricing rule and revenue scaled by $\sigma_{\min}$ holds
    \begin{align*}\E_{\vtype\sim \mathcal D}[\LW(\vec z(\vtype))]-\sigma_{\min}\rev&=\sum_{i\in N}\E_{\vtype\sim D}\left[\hat v_i(z_i(\vtype))-\sigma_{\min}p_i(z_i(\vtype)\mid \vec z_{[i-1]}(\vtype))\right]\\
    &\geq \sum_{i\in N}\E_{\vtype\sim D}\left[\hat v_i(z_i(\vtype))-\hat \sigma_i p_i(z_i(\vtype)\mid \vec z_{[i-1]}(\vtype))\right]\\
    &= \sum_{i\in N}\E_{\vtype\sim \mathcal D}[\hat \Phi_i(\vec z(\vtype))]\\
    &\geq \lambda \E_{\vtype \sim \mathcal D}[\LW(\ALG(\vtype))]-\mu \rev.\end{align*}
    Rearranging terms then gives
    $$\E_{\vtype\sim \mathcal D}[\LW(\vec z(\vtype))]+(\mu-\sigma_{\min})\rev\geq \lambda \E_{\vtype\sim \mathcal D}[\LW(\ALG(\vtype))].$$
    If $\mu\leq \sigma_{\min}$, then the $\rev$ term is non-positive and we obtain the first branch ($1/\lambda$-competitive relative to $\ALG$). If $\mu>\sigma_{\min}$, then the $\rev$ term is positive. Using the RoS and budget constraint, the expected revenue is upper bounded by the expected liquid welfare achieved by the pricing rule $\vec p$. Hence, the second branch is proven as follows:
    \begin{align*}(1+\mu-\sigma_{\min})\E_{\vtype\sim \mathcal D}[\LW(\vec z(\vtype))]&=\E_{\vtype\sim \mathcal D}[\LW(\vec z(\vtype))]+(\mu-\sigma_{\min})\E_{\vtype\sim \mathcal D}[\LW(\vec z(\vtype))]\\
    &\geq \E_{\vtype\sim \mathcal D}[\LW(\vec z(\vtype))]+(\mu-\sigma_{\min})\rev\\
    &\geq \lambda \E_{\vtype\sim \mathcal D}[\LW(\ALG(\vtype))].
    \end{align*}
\end{proof}

\begin{corollary}\label{cor:simple-bound}
    Suppose that a $(\lambda,\mu)$-UB pricing rule $\vec p$ is $C:=\frac{\max\{1,\mu\}}{\lambda}$-competitive for budget-free settings with utility maximizers. Then, in the general model, the competitive ratio deteriorates to at most $C(2-\sigma_{\min})$.
\end{corollary}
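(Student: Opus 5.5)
The plan is to apply Theorem~\ref{thm:utility-bounded-guarantee} with $\ALG=\OPT$ and then compare the resulting ratio with $C$. Since $\vec p$ is $(\lambda,\mu)$-UB, that theorem holds for every instance $\mathcal D$ of the general model (budgeted, mixed objectives). It yields the upper bound
\[
\frac{\max\{1,1+\mu-\sigma_{\min}\}}{\lambda}
\]
on the ratio $\E[\LW(\OPT)]/\E[\LW(\vec z)]$. Here $\sigma_{\min}$ is taken as $0$ in the budgeted case, as defined before the theorem. Taking the supremum over instances then bounds the competitive ratio in the general model. It remains to show the elementary inequality
\[
\max\{1,1+\mu-\sigma_{\min}\}\le (2-\sigma_{\min})\max\{1,\mu\} \qquad \text{for } \sigma_{\min}\in[0,1],\ \mu\ge 0.
\]
Dividing by $\lambda$ then gives $C(2-\sigma_{\min})$.

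For the inequality, I would split into two cases according to whether $\mu\le 1$ or $\mu>1$.

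\textbf{Case $\mu\le 1$.} Here $\max\{1,\mu\}=1$. The left side is at most $\max\{1,2-\sigma_{\min}\}=2-\sigma_{\min}$, using $\mu\le 1$ and $\sigma_{\min}\le 1$.

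\textbf{Case $\mu>1$.} Here $\max\{1,\mu\}=\mu$, and the left side equals $1+\mu-\sigma_{\min}$, since $\mu>1\ge\sigma_{\min}$ makes this exceed $1$. We need $1+\mu-\sigma_{\min}\le 2\mu-\sigma_{\min}\mu$. This rearranges to $(\mu-1)(1-\sigma_{\min})\ge 0$, which holds because both factors are nonnegative.

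The only point needing care is interpretive rather than technical. The hypothesis that $\vec p$ is $C$-competitive for budget-free utility maximizers is not what drives the bound. Instead, $C=\max\{1,\mu\}/\lambda$ is exactly the guarantee Theorem~\ref{thm:utility-bounded-guarantee} produces when $\sigma_{\min}=1$, since $\max\{1,\mu\}=\max\{1,1+\mu-1\}$. So the corollary is a comparison between the theorem's bound at $\sigma_{\min}=1$ and at general $\sigma_{\min}$. I would state this explicitly so the phrase ``deteriorates by at most a factor $2-\sigma_{\min}$'' is understood relative to $C$, with $C$ defined via $(\lambda,\mu)$. I do not expect a genuine obstacle beyond this case check.
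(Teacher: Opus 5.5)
Your proposal is correct and matches the paper's proof. Both apply Theorem~\ref{thm:utility-bounded-guarantee} with $\ALG=\OPT$ and compare $\max\{1,1+\mu-\sigma_{\min}\}/\lambda$ with $C$. Your two-case check, ending in $(\mu-1)(1-\sigma_{\min})\geq 0$, is simply an explicit justification of the paper's one-line claim that $\max\{1,1+\mu-\sigma_{\min}\}/\max\{1,\mu\}$ is maximized at $\mu=1$.
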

\begin{proof}
    Using Theorem~\ref{thm:utility-bounded-guarantee}, we see that in the general model the competitive guarantee deteriorates to $\frac{\max\{1,1+\mu-\sigma_{\min}\}}{\lambda}$. Dividing by $C$ gives the following expression:
    $$\frac{\max\{1,1+\mu-\sigma_{\min}\}}{\max\{1,\mu\}}.$$
    Now, this expression as a function in $\mu$ attains its maximum at $\mu=1$. Plugging this in, we immediately see that this ratio becomes $2-\sigma_{\min}$.
\end{proof}

From Corollary~\ref{cor:simple-bound}, it becomes immediately clear  that for $(\lambda,\mu)$-utility bounded pricing rules, budgets only worsen our competitive guarantees by a factor of at most 2.

\section{Combinatorial Auctions}\label{sec:combinatorial-auctions}
\subsection{XOS Combinatorial Auctions}
\paragraph{Example} As an example, we shall now prove that the half-pricing rule for XOS combinatorial auctions is $(\frac12,1)$-UB. More specifically, this pricing rule is defined as follows. For an XOS valuation and budget profile $(\vec v,\vec B)$, let $\vec{\tilde v}$ denote the additive support of $\vec{\hat v}$ for the allocation $\OPT(\vec v,\vec B)$\footnote{Note that we assume $\OPT$ to be unique, if not, we pick the first optimal allocation in lexicographic order: e.g. via a fixed ordering of agents and items.}. Then, for each item $j\in M$, we define
\begin{equation}\label{eq:pricing-rule-XOS}p^{(\vec v,\vec B)}(\{j\})=\sum_{\substack{i\in [n] \; : \;j\in \OPT_i(\vec v,\vec B)}}\tilde v_{ij},\end{equation}
and extend linearly to bundles. If an item is sold, its price is set to $\infty$. In other words, for the half-pricing rule, $\vec p_{\frac12}$, we let $p_j:=\frac12 \E_{(\vec v,\vec B,\vec \t)\sim \mathcal D}[p^{(\vec v,\vec B)}(\{j\})]$.

\begin{lemma}\label{lem:half-price}
    The half-pricing rule in XOS combinatorial auctions, $\vec p_{\frac12}$, is $(\frac12,1)$-UB.
\end{lemma}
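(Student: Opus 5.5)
We follow the classic Feldman--Gravin--Lucier argument, using a fresh independent sample to define the tentative deviation bundles. For each realization $\vtype=(\vec v,\vec B,\vec\t)$, draw an independent copy $\vtype'=(\vec v',\vec B',\vec\t')\sim\mathcal D$. Let agent $i$'s tentative bundle be
$$y_i := \OPT_i(v_i,B_i,\vec v'_{-i},\vec B'_{-i})\setminus S_{i-1},$$
where $S_{i-1}$ is the set of items already sold under $\vec z_{[i-1]}(\vtype)$. This defines the distribution $Y_i^{\vtype}$, since it depends only on $\vtype$ and the independent resample $\vtype'_{-i}$.

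First, $y_i$ is a feasible bundle. It uses only unsold items, so its price is finite. Its price equals $\sum_{j\in y_i}p_j$, where $p_j=\frac12\E[\tilde v_{\cdot j}]$. Let $\tilde v$ be the additive support of $\hat v_i$ for $\OPT_i$ in the profile $(v_i,\vec v'_{-i})$. The RoS and budget constraints then do not necessarily hold. However, the quantity in Definition~\ref{def:utility-bounded} is $\hat v_i(y_i)-p_i(y_i)$. We would instead take $y_i$ to be the subset of items $j$ with $\tilde v_{ij}\ge p_j$, or alternatively handle infeasibility via $\max(0,\cdot)$.

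The simplest plan is to note that $\hat v_i(y_i)-p(y_i)\ge\sum_{j\in y_i}(\tilde v_{ij}-p_j)$. We restrict $y_i$ to the items with $\tilde v_{ij}-p_j>0$. The resulting bundle has $\hat v_i\ge p$, hence $v_i\ge p$ and $B_i\ge \hat v_i\ge p$, so it is feasible. This restriction only increases the sum. Thus
$$\hat v_i(y_i)-p(y_i)\ge\sum_{j\in \OPT_i'\setminus S_{i-1}}(\tilde v_{ij}-p_j)^+\ge \sum_{j\in \OPT_i'\setminus S_{i-1}}(\tilde v_{ij}-p_j),$$
and the subsequent bounds go through.

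Next, take expectations. Since $S_{i-1}$ depends only on $\vtype_{[i-1]}$, we can apply a lower bound that relaxes to items unsold at the end. Specifically, let $S$ be the final set of sold items, so $\OPT_i'\setminus S\subseteq \OPT_i'\setminus S_{i-1}$ and the dropped terms $(\cdot)^+$ are nonnegative. This gives
$$\sum_i \E\Big[\sum_{j\in\OPT_i'\setminus S}(\tilde v_{ij}-p_j)\Big].$$
The key step is the renaming argument. The profile $(v_i,\vec v'_{-i})$ is distributed as $\mathcal D$ and is independent of $S$ \emph{only if} $S$ is computed from something independent of $v_i$, and it is not. To fix this, one uses the standard trick of swapping $v_i$ and $v'_i$. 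With $\vtype'$ and $\vtype$ swapped in coordinate $i$, $S$ is determined by the agents other than $i$ in the original sample. We therefore define $y_i$ using the swapped profile, which matches the standard proof. The result is that $\sum_i\E[\sum_{j\in\OPT_i\setminus S}\tilde v_{ij}]=\E[\sum_j 2p_j\mathbbm 1[j\notin S]]$. The terms involving $p_j$ give $\sum_j p_j\Pr[j\notin S]$, so the total is at least $\sum_j p_j\Pr[j\notin S]\ge \sum_jp_j-\rev$.

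Finally, $\sum_j p_j=\frac12\E[\LW(\OPT)]$ because the additive supports sum to $\hat v(\OPT)$. This yields the bound $\frac12\E[\LW(\OPT)]-\rev$, i.e.\ the rule is $(\frac12,1)$-UB.

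The main obstacle is this independence step. Expressing $\Pr[j\notin S]$ requires that the deviation bundle's support weights be independent of the sold set. Monotonicity in time is what makes the rest work: the sold set can only grow, so $S_{i-1}\subseteq S$.
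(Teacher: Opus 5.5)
Your deviation bundle (the remaining items of $\OPT_i(\tau_i,\vtype'_{-i})$ with $\tilde v_{ij}\ge p_j$), the feasibility argument via the XOS support, and the final accounting $\sum_j p_j(1-q_j)=\frac12\E[\LW(\OPT)]-\rev$ all match the paper. The probabilistic step, however, has a genuine gap.

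You relax pointwise from $M\setminus S_{i-1}$ to the final unsold set $M\setminus S$ \emph{inside} the expectation. You then claim that after swapping coordinate $i$, the set $S$ is independent of the profile $(\tau_i,\vtype'_{-i})$ that determines $\OPT_i'$ and $\tilde v_{ij}$. This is false. $S$ contains agent $i$'s own purchase $z_i(\vtype)$ and the purchases of all later agents, and these depend on $\tau_i$. The same $\tau_i$ must also appear in the profile defining $\OPT_i'$, since otherwise $\hat v_i(y_i)\ge\sum_{j\in y_i}\tilde v_{ij}$ fails. Renaming variables preserves the joint law, so no swap removes this correlation.

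The relaxed quantity can in fact be far too small. Take one agent and one item, with value $1/\epsilon$ with probability $\epsilon$ and $0$ otherwise. Then $p=\frac12$, and the agent buys exactly when $\tilde v_{1j}>0$. Hence $\E[\sum_{j\in\OPT_1'\setminus S}(\tilde v_{1j}-p_j)]\le 0$, while $\frac12\E[\LW(\OPT)]-\rev=\frac{1-\epsilon}{2}$. In this instance your claimed identity $\sum_i\E[\sum_{j\in\OPT_i\setminus S}\tilde v_{ij}]=\E[\sum_j 2p_j\mathbbm 1\{j\notin S\}]$ reads $0=1-\epsilon$.

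The fix is to use the observation you already made, that $S_{i-1}$ depends only on $\vtype_{[i-1]}$, \emph{before} any relaxation. The indicator $\mathbbm 1\{j\notin S_{i-1}\}$ is independent of $(\tau_i,\vtype'_{-i})$. After renaming $(\tau_i,\vtype'_{-i})$ as $\vtype$, the bound becomes $\E_{\vtype}[\sum_{j\in\OPT_i(\vtype)}\Pr[j\notin S_{i-1}]\,(\tilde v_{ij}(\vtype)-p_j)^+]$.

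Only at this point apply $\Pr[j\notin S_{i-1}]\ge\Pr[j\notin S]=1-q_j$. This is an inequality between probabilities, coming from $S_{i-1}\subseteq S$, and it needs the summands to be nonnegative. So you must keep the $(\cdot)^+$ until this step; then drop it and sum over $i$. This is exactly the order of steps in the paper's proof.
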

\begin{proof}
    Consider an instance $\mathcal D$ and let $\vtype\sim \mathcal D$. Let $R_i(\vtype):=M\setminus \bigcup_{k<i}z_k(\vec \tau)$ be the set of remaining items when agent $i$ arrives. Sample a profile $\vtype_{-i}'$ for all other agents and let $\tilde v_i(\tau_i,\vtype_{-i}')$ be the additive support of the capped valuation $\hat v_i$ for the allocation $\OPT_i(\tau_i,\vtype_{-i}')$. Now, define the deviating bundle
    $$y_i^{\vtype}(\vtype_{-i}')=R_i(\vtype)\cap \OPT_i(\tau_i,\vtype_{-i}')\cap \{j\in M\mid \tilde v_{ij}(\tau_i, \vtype_{-i}')\geq  p_j\}.$$

    Notice that $y_i^{\vtype}$ is RoS and budget-feasible by construction and the definition of XOS valuations. Now, if we let $q_j$ denote the probability that item $j$ is sold, then we can lower bound the proxy utility as follows:
    {\allowdisplaybreaks
    \begin{align*}\E_{\vtype,\vtype'\sim \mathcal D}[\hat v_i(y_i^{\vtype}(\vtype_{-i}'))-p_i(y_i^{\vtype}(\vtype_{-i}'))]
    &\geq \E_{\vtype,\vtype'\sim \mathcal D}\left[\sum_{j\in y_i^{\vtype}(\vtype_{-i}')}\tilde v_{ij}(\tau_i,\vtype_{-i}')-p_{j}\right]\\
    &= \E_{\vtype,\vtype'\sim \mathcal D}\left[\sum_{j\in \OPT_i(\tau_i,\vtype_{-i}')}(\tilde v_{ij}(\tau_i,\vtype_{-i}')-p_{j})^+\mathbbm{1}\{j\in R_i(\vtype)\}\right]\\
    &=\E_{\vtype,\vtype'\sim \mathcal D}\left[\sum_{j\in \OPT_i(\vtype)}(\tilde v_{ij}(\vtype)-p_{j})^+\mathbbm{1}\{j\in R_i(\tau_i, \vtype_{-i}')\}\right]\\
    &\geq \E_{\vtype\sim \mathcal D}\left[\sum_{j\in \OPT_i(\vtype)}(1-q_j)(\tilde v_{ij}(\vtype)-p_{j})^+\right]\\
    &\geq \E_{\vtype\sim \mathcal D}\left[\sum_{j\in \OPT_i(\vtype)}(1-q_j)(\tilde v_{ij}(\vtype)-p_{j})\right],
    \end{align*}}
    where $(t)^+=\max\{0,t\}$. The first inequality follows by definition of XOS. For the first equality, note that for $j\in \OPT_i(\tau_i,\vtype_{-i}')$, if $\tilde v_{ij}(\tau_i,\vtype_{-i'})< p_j$, such terms contribute 0. The second equality follows from $\vtype$ and $\vtype'$ being identically distributed and $R_i(\cdot)$'s independence of agent $i$ or any of the later agents. The second  inequality uses that the probability the item is not sold is at most the probability the item is still available when agent $i$ arrives. Summing over all agents and using the fact that each item price is exactly half its expected contribution to the liquid welfare then gives
    $$\E_{\vtype,\vtype'\sim \mathcal D}\left[\sum_{i\in [n]}\hat u_i(y_i^{\vtype}(\vtype_{-i}'))\right]\geq \frac12 \E_{\vtype\sim \mathcal D}[\LW(\OPT(\vtype))]-\sum_{j\in M}q_jp_j.$$
    Note that $\sum_{j\in M}q_jp_j$ is exactly the revenue, proving $(\frac12,1)$-utility-boundedness.
\end{proof}
\paragraph{General Upper Bounds} Extending the proof of Lemma~\ref{lem:half-price} to arbitrary $\delta\in (0,1)$ we easily see that the pricing rule $\vec p_{\delta}$ with $p_j:=\delta \E_{(\vec v,\vec B,\vec \t)\sim \mathcal D}[p^{(\vec v,\vec B)}(\{j\})]$ is $(1-\delta,\frac{1-\delta}{\delta})$-UB.
The competitive ratio of the pricing rule $\vec p_{\delta}$ can be upper bounded by Theorem~\ref{thm:utility-bounded-guarantee}. Using this, we obtain for budget-free instances the following bound
\begin{equation}\label{eq:UB-combinatorial}\comp(\mathcal I^{\infty}_{\Sigma},\vec p_{\delta})\leq \begin{cases}\frac{1-\delta \t_{\min}}{(1-\delta)\delta}, &\text{if $\delta\leq \frac{1}{1+\t_{\min}}$}\\
\frac{1}{1-\delta}, & \text{if $\delta >\frac{1}{1+\t_{\min}}$}.
\end{cases}\end{equation}
For $\delta=\frac12$ and $\Sigma=\{0\}$ we recover the bound of 4 found in \cite{deng2022} for value maximizers.

When agents are constrained by budgets, we simply set $\sigma_{\min}=0$ in Equation~\eqref{eq:UB-combinatorial} and obtain
\begin{equation}\label{eq:UB-budgets-combinatorial}\comp(\mathcal I_{\Sigma},\vec p_{\delta})\leq \frac{1}{(1-\delta)\delta}.\end{equation}
This bound is smallest-possible for $\delta=\frac12$, yielding again 4. For utility maximizers, i.e., $\Sigma=\{1\}$, the same ratio was also obtained in \cite{fotakis2019}.

\paragraph{Lower bounds.}
The bounds in Equations~\eqref{eq:UB-combinatorial}~\&~\eqref{eq:UB-budgets-combinatorial} are also tight. When agents are budget-free and valuations are additive, the price for item $j$ under $\vec p_{\delta}$ reduces to the following form that will be useful in our lower bound constructions
$$p_{\delta}(\{j\})=\delta \E_{(\vec v,\vec \t)\sim \mathcal D}\left[\max_{i\in [n]}v_{ij}\right],$$
note that this does not depend on $\vec \t$. 

\begin{restatable}{lemma}{budgetfreeLB}
        Let $\sigma_{\min}=\min \Sigma$. Then, for budget-free instances, we have that 
    $$\comp(\mathcal I^{\infty}_{\Sigma},\vec p_\delta)\geq \max\left\{\frac{1-\delta\cdot \sigma_{\min}}{(1-\delta)\delta},\frac{1}{1-\delta}\right\}.$$
\end{restatable}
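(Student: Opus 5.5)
The plan is to prove the two terms of the maximum separately. For each term I will build an explicit family of budget-free instances with additive valuations. For such instances the closed form $p_\delta(\{j\})=\delta\,\E[\max_i v_{ij}]$ lets me compute every price exactly, so the induced allocation $\vec z$ and both welfares can be written down by hand. In each family I will send a small parameter $\varepsilon\to 0$ and then optimize the remaining free parameters. Ties in the agents' programs are resolved in the adversarial direction by perturbing values by an arbitrarily small $\eta$.

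\emph{The term $\frac{1}{1-\delta}$.} This needs only one item and holds for any $\Sigma$, because with a single item an agent buys if and only if $v_{i}\ge p$, whatever its $\sigma_i$. Agent $2$ has value $1/\varepsilon$ with probability $\varepsilon$ and $0$ otherwise. Agent $1$ has the deterministic value $v_1=\frac{\delta}{1-\delta}-\eta$. Then $\E[\max_i v_i]\to v_1+1$, so the price is $p=\delta(v_1+1)\approx\frac{\delta}{1-\delta}>v_1$ and agent $1$ declines. The mechanism therefore earns about $1$, namely agent $2$'s expected value, while $\E[\OPT]\approx v_1+1=\frac{1}{1-\delta}$. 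Letting $\eta,\varepsilon\to0$ gives the ratio $\frac{1}{1-\delta}$.

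\emph{The term $\frac{1-\delta\sigma_{\min}}{(1-\delta)\delta}$.} This term must exploit the RoS constraint being aggregate over a bundle. Write $\sigma=\sigma_{\min}$ and let the first agent have price sensitivity $\sigma$. That agent holds a ``subsidy'' item $A$ with large surplus $v_{1A}-p_A$. The surplus lets it also buy ``decoy'' items $B$ whose values lie only slightly above $\sigma p_B$: adding such an item increases its objective $v-\sigma p$, and the RoS constraint still holds because of $A$'s slack. The decoys are exactly the items a later high-value agent (value $1/\varepsilon$ with probability $\varepsilon$) would have wanted, so the mechanism loses that agent's value. The resulting self-consistency equations are $p_B=\delta(\E[\text{late value}]+\sigma p_B)$ and $p_A=\delta\E[v_{1A}]$, together with the slack condition $v_{1A}-p_A\ge(1-\sigma)\sum_B p_B$.

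A first calculation with one deterministic item $A$ and one decoy only yields a ratio of roughly $1/\delta$ at $\sigma=0$, which is too weak. I will therefore amplify the construction in two ways:
\begin{enumerate}
\item make agent $1$'s value for $A$ rare but large (value $a/q$ with probability $q$), so the price of $A$ stays at $\delta a$ while the realized slack becomes huge;
\item use many decoy items, each with its own late high-value agent.
\end{enumerate}
With these changes agent $1$ blocks a total mass of late value comparable to the price scale $\frac{1}{1-\delta}$ while contributing only about $\sigma p_B$ per decoy. I will then set up the ratio $\E[\OPT]/\E[\LW(\vec z)]$ as a function of $a$, $q$ and the number of decoys, and choose the parameters so that it tends to $\frac{1-\delta\sigma}{\delta(1-\delta)}$. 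As sanity checks, at $\delta=\frac12$ it should tend to $4$ for $\sigma=0$ and to $2$ for $\sigma=1$.

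The main obstacle is this balancing in the second family. The prices are defined self-referentially through expectations that include agent $1$'s own values, so amplifying the slack through a rare, large value for $A$ must be done without the decoy prices or $\E[\OPT]$ absorbing the gain. I would first try to make the decoys' contribution to $\E[\max_i v_{ij}]$ negligible by having agent $1$ value them only in the rare event that it values $A$; if this changes the prices too much, I would instead tune the number of decoys against $q$.
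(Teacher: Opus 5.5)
\paragraph{Verdict.} Your proof of the term $\frac{1}{1-\delta}$ is correct, but the main term $\frac{1-\delta\sigma_{\min}}{\delta(1-\delta)}$ is not proved, and the construction you sketch cannot reach it.

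\paragraph{The term $\frac{1}{1-\delta}$.} This is essentially the paper's one-item instance. The paper uses a single agent with a two-point value distribution; splitting it into two agents is equivalent. You do need to order the limits. Exactly, $\E[\max_i v_i]=1+(1-\varepsilon)v_1$, so $p-v_1=(1-\delta+\delta\varepsilon)\eta-\varepsilon\frac{\delta^2}{1-\delta}$, and you must send $\varepsilon\to0$ before $\eta\to0$.

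\paragraph{The main term: the rare-value amplification.} You never fix an instance, and the two amplifications you propose cannot produce this value. Giving agent 1 value $a/q$ with probability $q$ for a single subsidy item is counterproductive. Agent 1 blocks the decoys only with probability $q$, yet about $1/q$ decoys are needed to absorb the realized slack. In the other realizations the late agents buy those decoys and the mechanism collects their value. For instance, take $\sigma=0$, $a=1$, late value $1$ per decoy, and the maximal number $K=\frac{1-\delta q}{\delta q}$ of decoys. Then $\E[\OPT]=\frac{1}{\delta q}$ while the mechanism gets $\frac{1-q+\delta q^2}{\delta q}$. The ratio $\frac{1}{1-q+\delta q^2}$ is at most $\frac1\delta$ for $\delta\le\frac12$, and at most $\frac{4\delta}{4\delta-1}$ otherwise.

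The paper instead spreads agent 1's large value over $k$ subsidy items, exactly one of them chosen uniformly at random. Every subsidy item then has price $O(1)$. Meanwhile a slack of about $\delta k(1-\sigma_{\min})$ is present in \emph{every} realization, and it exactly pays for $k$ decoys of price $\delta$ and value $\sigma_{\min}\delta+\epsilon$.

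\paragraph{The main term: a cap without a priced-out agent.} Even with deterministic slack, a construction whose only loss is agent 1 blocking decoys stays strictly below the target. Let $V$ be agent 1's subsidy value. Each decoy $B$ that agent 1 buys blocks expected value at most $p_B/\delta$, since $p_B=\delta\E[\max_i v_{iB}]$. The RoS constraint gives $(1-\sigma)\sum p_B\le V$ up to $\epsilon$-terms, and the mechanism keeps $V+\sigma\sum p_B$. Hence the ratio is at most $\frac{V+V/((1-\sigma)\delta)}{V/(1-\sigma)}=1+\frac1\delta-\sigma$. At $\delta=\frac12$, $\sigma=0$ this is $3$ rather than $4$, so your own sanity check would fail.

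The missing ingredient is a second agent who is \emph{priced out} of the subsidy items. In the paper, agent 2 values each of the $k$ subsidy items at $v_2=(1-\sigma_{\min})\frac{\delta^2}{1-\delta}-\eta$. Take $V=(\delta k+\frac{\delta^2}{1-\delta})(1-\sigma_{\min})$. Then $v_2$ sits just below the self-consistent price $\delta\bigl(\frac{V}{k}+(1-\frac1k)v_2\bigr)=(1-\sigma_{\min})\frac{\delta^2}{1-\delta}-\delta\eta(1-\frac1k)$. Agent 2 therefore buys nothing. The optimum still collects agent 2's value on the $k-1$ subsidy items agent 1 does not want. This adds exactly $(1-\sigma_{\min})\frac{\delta}{1-\delta}$ to the ratio, giving $\frac1\delta+\frac{1-\sigma_{\min}}{1-\delta}=\frac{1-\delta\sigma_{\min}}{\delta(1-\delta)}$.

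This construction needs $\sigma_{\min}<1$. For $\sigma_{\min}=1$ the target is $\frac1\delta$. A single decoy already gives this: an agent with value $\delta+\epsilon$ blocks a value-$1$ agent, which is how the paper treats that case separately.
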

\begin{proof}\label{proof:lemma-budget-free-LB}
    Consider $m=2k$ items and three agents with additive valuations. Let the price sensitivities of these agents be drawn from any distribution with support in $\Sigma$, but let agent 1 have a deterministic price sensitivity of $\sigma_1=\sigma_{\min}$. Agent 1 has a value of $\left(\delta\cdot k+\frac{\delta^2}{1-\delta}\right)(1-\sigma_{\min})$ for exactly one of the first $k$ items, each with a uniform probability. The last $k$ items they value at $v_{1j}=\sigma_{\min}\cdot \delta+\epsilon$ where $\epsilon>0$ such that $v_{1j}<1$ for $j>k$. Agent 2 only values the first $k$ items at $v_{2j}=(1-\sigma_{\min})\frac{\delta^2}{1-\delta}-\eta$ for $1\leq j \leq k$ for some $(1-\sigma_{\min})\frac{\delta^2}{1-\delta}>\eta>0$. Note that such an $\eta$ exists whenever $\sigma_{\min}<1$. We treat this case separately below. The last agent values only the last $k$ items at $v_{3j}=1$ for $j>k$.

    In this instance, optimally, exactly one of the first $k$ items is allocated to agent 1, the other $k-1$ of the first $k$ items are allocated to agent 2 while the last $k$ items are allocated to agent 3. The liquid welfare generated by this assignment is constant over all realizations and is
    \begin{align*}\OPT&=\left(\delta\cdot k+\frac{\delta^2}{1-\delta}\right)(1-\sigma_{\min})+(k-1)\left((1-\sigma_{\min})\frac{\delta^2}{1-\delta}-\eta\right)+k\\
    &=k\left(\delta+\frac{\delta^2}{1-\delta}\right)(1-\sigma_{\min})+k-(k-1)\eta.\\
    &=k\cdot \frac{\delta}{1-\delta}\cdot (1-\sigma_{\min})+k-(k-1)\eta
    \end{align*}

    Now, to determine the generated social welfare by the pricing rule, let us first determine the prices of each item. For the last $k$ items, we see that the price is easily $p_j=\delta$ since the maximum value for these items is 1 and agent 3 has additive valuations. For the first $k$ items, we have that with probability $\frac1k$ that item $j$ is allocated to agent 1 under the optimal allocation. Otherwise, it should be allocated to agent 2. Thus, we see that for $1\leq j \leq k$
    \begin{align*}p_j&=\frac{\delta}{k}\left(\delta\cdot k+\frac{\delta^2}{1-\delta}\right)(1-\sigma_{\min})+\delta\cdot \left(1-\frac1k\right)\cdot \left((1-\sigma_{\min})\frac{\delta^2}{1-\delta}-\eta\right)\\
    &=\left(\delta^2+\frac{\delta^3}{1-\delta}\right)(1-\sigma_{\min})-\delta\cdot \eta \cdot \left(1-\frac1k\right)\\
    &=(1-\sigma_{\min})\frac{\delta^2}{1-\delta}-\delta\cdot \eta \cdot \left(1-\frac1k\right)\\
    &>(1-\sigma_{\min})\frac{\delta^2}{1-\delta}- \eta\\
    &=v_{2j}.
    \end{align*}
    Hence, agent 2 cannot buy these items. From the first $k$ items, agent 1 will buy the item for which they have a non-zero value and also buy the last $k$ items because for $j>k$ we have $v_{1j}= \sigma_1\cdot \delta+\epsilon > \sigma_1\cdot p_j$. In other words, buying these items contributes to their objective. The payment for buying this bundle is
    $$(1-\sigma_{\min})\frac{\delta^2}{1-\delta}-\delta\cdot \eta \cdot \left(1-\frac1k\right)+\delta\cdot k.$$
    The value they attain from this bundle is
    $$\left(\delta\cdot k+\frac{\delta^2}{1-\delta}\right)(1-\sigma_{\min})+ (\sigma_{\min}\cdot \delta+\epsilon)\cdot k=(1-\sigma_{\min})\frac{\delta^2}{1-\delta}+\delta\cdot k+\epsilon \cdot k.$$
    Hence, buying this bundle is feasible for agent 1. If agent 1 is a value maximizer, $\sigma_1=0$, then they might choose to buy more items from the first $k$ items, but this generates no welfare. Thus, the social welfare generated by the pricing rule, $\SW$, is also constant over all realizations and given by
    $$\SW=(1-\sigma_{\min})\frac{\delta^2}{1-\delta}+\delta\cdot k+\epsilon\cdot k.$$
    Taking $\epsilon,\eta\to 0$ and $k\to \infty$ then gives the following ratio
    $$\lim_{k\to \infty}\lim_{\eta\to 0}\lim_{\epsilon\to 0}\frac{\OPT}{\SW}=\lim_{k\to \infty}\frac{k\cdot \frac{\delta}{1-\delta}\cdot (1-\sigma_{\min})+k}{(1-\sigma_{\min})\frac{\delta^2}{1-\delta}+\delta\cdot k}=\frac{ \frac{\delta}{1-\delta}\cdot (1-\sigma_{\min})+1}{\delta}=\frac{1-\delta\cdot \sigma_{\min}}{\delta(1-\delta)}.$$

    Next, suppose that $\t_{\min}=1$. Consider an instance with two agents and one item. Agent 1 has a value of $\delta+\epsilon$, such that $\delta+\epsilon<1$, while agent 2 has a value of $1$. In that case, $\E[\OPT]=1$, so the price becomes $\delta$. Hence, agent 1 buys the item. In this case, the ratio becomes exactly $\frac{1}{\delta+\epsilon}$. Taking the limit as $\epsilon\to 0$ then gives $\frac1\delta$. This coincides with $\frac{1-\delta\cdot \t_{\min}}{\delta(1-\delta)}$ when $\t_{\min}=1$.

    Lastly, consider the instance with one item, one agent, and arbitrary $\sigma$. The valuation of agent 1 is given $\frac1\epsilon$ with probability $\epsilon$ and otherwise $\frac{\delta}{1-\delta(1-\epsilon)}-\eta$. The price of this instance is given by
    $$\delta+\delta\cdot \left(\frac{\delta}{1-\delta(1-\epsilon)}-\eta\right)\cdot (1-\epsilon)=\frac{\delta}{1-\delta(1-\epsilon)}-\delta\cdot \eta\cdot (1-\epsilon)>\frac{\delta}{1-\delta(1-\epsilon)}-\eta.$$
    Thus, this agent cannot buy when their value is $\frac{\delta}{1-\delta(1-\epsilon)}-\eta$. Hence, the expected social welfare generated by the algorithm is 1. Taking $\eta\to 0$ and then $\epsilon\to 0$ gives the following ratio
    $$\lim_{\epsilon\to0}\lim_{\eta\to 0}\frac{\E[\OPT]}{\E[\SW]}=\frac{1}{1-\delta},$$
    as desired.
\end{proof}

Interestingly, the construction only makes use of stochastic additive valuations while the agents' objectives are deterministic. However, the upper bound holds for stochastic agent objectives and XOS-valuations, proving that stochastic objectives do not deteriorate the welfare guarantees.
For budgeted instances, it turns out that using additive valuations, deterministic objectives and deterministic budgets suffices to match the more general upper bound in Equation~\eqref{eq:UB-budgets-combinatorial}.
\begin{restatable}{lemma}{budgetLB}\label{lemma:budget-LB}
For any $\Sigma$ we have $\comp(\mathcal I_{\Sigma},\vec p_\delta)\geq \frac{1}{\delta(1-\delta)}$.  
\end{restatable}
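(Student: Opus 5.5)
The plan is to adapt the first construction in the proof of the budget-free lower bound to budgets. There, the factor $\frac{1}{\delta(1-\delta)}$ came from a value maximizer ($\sigma_1=0$) who spends its whole buying power on items that are worth much more to someone else. With budgets we can force \emph{any} agent, whatever its $\sigma\in\Sigma$, into this behaviour. We give agent 1 huge values, so every purchase raises its objective $v-\sigma p$. We then cap its liquid value with a budget equal to roughly what it pays. All agents get additive valuations, deterministic budgets and a deterministic price sensitivity $\sigma\in\Sigma$ (any element works).

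\textbf{Construction.} Take $m=2k$ items and three agents.
\begin{itemize}
\item Agent 1 values one uniformly random item among the first $k$ at some huge $H$, and each of the last $k$ items at $H$. Its budget is $B_1=\delta k+\frac{\delta^2}{1-\delta}+\epsilon$.
\item Agent 2 is budget-free and values each of the first $k$ items at $v_{2j}=\frac{\delta^2}{1-\delta}-\eta$.
\item Agent 3 is budget-free and values each of the last $k$ items at $1$.
\end{itemize}

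\textbf{Optimum.} The liquid-welfare optimum gives agent 1 only its special item, with liquid value $B_1$. It gives the other $k-1$ of the first items to agent 2 and the last $k$ items to agent 3. This is deterministic in value and equals
\[
\OPT = B_1+(k-1)\Big(\tfrac{\delta^2}{1-\delta}-\eta\Big)+k\approx \frac{k}{1-\delta}.
\]
Giving agent 1 more items is useless because of the cap. I will briefly verify that this is optimal and unique up to the tie-breaking convention.

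\textbf{Prices.} The additive support of $\hat v_1$ for the special item is $B_1$. Each last item therefore gets price $\delta$. Each of the first items gets price
\[
p_j=\delta\Big[\tfrac{B_1}{k}+\big(1-\tfrac1k\big)v_{2j}\Big]=\frac{\delta^2}{1-\delta}-\delta\eta\big(1-\tfrac1k\big)+O(\epsilon/k).
\]
Choosing $\epsilon\ll\eta/k$, exactly as in the budget-free computation, gives $p_j>v_{2j}$, so agent 2 is priced out.

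\textbf{Mechanism outcome.} Agent 1 arrives first. The bundle consisting of its special item plus all last $k$ items costs about $\frac{\delta^2}{1-\delta}+\delta k\le B_1$. It satisfies RoS since its value is at least $H$. Every item in it adds $H-\sigma p_j>0$ to the objective, so agent 1 buys the whole bundle for any $\sigma\in[0,1]$. It may also buy further first-$k$ items it values at $0$ only if $\sigma=0$ and budget remains, which adds no welfare. Agent 3 then finds nothing left, and agent 2 is priced out. The mechanism's liquid welfare is therefore $\min\{v_1(z_1),B_1\}=B_1\approx\delta k$.

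\textbf{Limit.} Letting $\epsilon\to0$, then $\eta\to0$, then $k\to\infty$ gives
\[
\frac{\OPT}{\LW}\to\frac{k/(1-\delta)}{\delta k}=\frac{1}{\delta(1-\delta)}.
\]

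\textbf{Main obstacle.} The delicate step is the simultaneous tuning of $\epsilon$ and $\eta$. The budget must be large enough for agent 1 to afford the entire intended bundle, yet $p_j$ must still exceed $v_{2j}$. Since $B_1$ enters $p_j$ only through $\delta B_1/k$, taking $\epsilon=o(\eta)$ should suffice. I will also double-check that no cheaper alternative bundle beats the intended one for agent 1. This holds because $H$ dominates all prices, so every affordable superset is weakly preferred.
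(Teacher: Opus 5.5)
\textbf{Gap: the claimed optimum fails for $\delta>\frac{\sqrt5-1}{2}$.} Your construction is the paper's, with one change: you raise agent 1's value for each of the last $k$ items from $\delta+\epsilon<1$ to a huge $H$. That change breaks your step ``the liquid-welfare optimum gives agent 1 only its special item'' whenever $\delta>\frac{\sqrt5-1}{2}$.

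Any single item that agent 1 values at $H$ already saturates its cap $B_1$. So the optimum gives agent 1 exactly one such item, namely the one whose removal costs the others least. The special item costs agent 2 the amount $\frac{\delta^2}{1-\delta}-\eta$, while a last item costs agent 3 only $1$. For $\delta>\frac{\sqrt5-1}{2}$ we have $\frac{\delta^2}{1-\delta}>1$. Hence, for small $\eta$, the optimum assigns agent 1 a fixed last item $j^*$ and gives all first $k$ items to agent 2.

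The prices then become $\delta B_1$ on $j^*$, $\delta$ on the other last items, and $\delta v_{2j}<v_{2j}$ on the first $k$ items. So agent 2 is no longer priced out. The mechanism then runs as follows:
\begin{itemize}
\item Agent 1 buys its special item and the $k-1$ last items priced at $\delta$. It cannot also afford $j^*$, since $\delta B_1\approx\delta^2 k$.
\item Agent 2 buys the remaining (about $k-1$) first items.
\item Agent 3 cannot afford $j^*$.
\end{itemize}
The mechanism's liquid welfare is $\approx \delta k+k\frac{\delta^2}{1-\delta}=\frac{\delta k}{1-\delta}$, while $\OPT\approx\frac{k}{1-\delta}$. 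The ratio therefore tends to $\frac1\delta$, strictly below $\frac{1}{\delta(1-\delta)}$. As written, your argument establishes the lemma only for $\delta\le\frac{\sqrt5-1}{2}$.

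\textbf{Repair.} Keep agent 1's values on the last items below $1$, as the paper does with $v_{1j}=\delta+\epsilon$. Then agent 3 is the efficient owner of those items for every $\delta\in(0,1)$, and the optimum is the one you describe. Agent 1 still buys everything you want it to:
\begin{itemize}
\item Each last item strictly increases agent 1's objective for every $\sigma\in[0,1]$, because $\delta+\epsilon-\sigma\delta>0$.
\item RoS holds, because the special item alone is worth at least $B_1$, which exceeds the bundle's total price.
\end{itemize}
The huge value $H$ buys you nothing here.

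\textbf{Minor point.} The $\epsilon$--$\eta$ tuning you flag as the main obstacle is a non-issue. The extra $\epsilon$ in $B_1$ only raises $p_j$, by $\delta\epsilon/k$. The budget slack after agent 1's intended purchase is $\epsilon(1-\frac{\delta}{k})+\delta\eta(1-\frac1k)>0$ for any $\epsilon\ge 0$. This is why the paper simply takes $B_1=\delta k+\frac{\delta^2}{1-\delta}$.
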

\begin{proof}
    Consider $m=2k$ number of items and three agents with additive valuations. Let the price sensitivities of these agents be drawn from any distribution with support contained in $\Sigma$. Agent 1 has a budget of $B_1=\delta\cdot k+\frac{\delta^2}{1-\delta}$ and has a value of $B_1$ for exactly one of the first $k$ items each with uniform probability. The last $k$ items they value at $v_{1j}=\delta+\epsilon$ with $\epsilon >0$ so that $v_{1j}<1$ for $j>k$. Agent 2 is budget-free and only values the first $k$ items at $v_{2j}=\frac{\delta^2}{1-\delta}-\eta$ for $1\leq j \leq k$ for some $\frac{\delta^2}{1-\delta}>\eta>0$. The last agent values only the last $k$ items at $v_{3j}=1$ for $j>k$ and is also budget-free.

    In this instance, optimally, exactly one of the first $k$ items is allocated to agent 1, the other $k-1$ of the first $k$ items are allocated to agent 2 while the last $k$ items are allocated to agent 3. The liquid welfare generated by this assignment is constant over all realizations and is
    $$\OPT=\delta \cdot k+\frac{\delta^2}{1-\delta}+(k-1)\left(\frac{\delta^2}{1-\delta}-\eta\right)+k=\delta\cdot k+k\cdot \frac{\delta^2}{1-\delta}+k-\eta\cdot (k-1).$$

    Now, to determine $\LW$, the liquid welfare generated by the pricing rule, let us first determine the prices of each item. For the last $k$ items, we see that the price is easily $p_j=\delta$ as the maximum value for these items is 1 and agent 3 has additive valuations. For the first $k$ items, we have that with probability $\frac1k$ that item $j$ is allocated to agent 1 under the optimal allocation. Otherwise, it should be allocated to agent 2. Thus, we see that for $1\leq j \leq k$ that
    \begin{align*}p_j&=\frac{\delta}{k}\left(\delta\cdot k + \frac{\delta^2}{1-\delta}\right)+\delta\cdot \left(1-\frac1k\right)\cdot \left(\frac{\delta^2}{1-\delta}-\eta\right)\\
    &=\delta^2+\frac{\delta^3}{1-\delta}-\delta\cdot \eta \cdot \left(1-\frac1k\right)\\
    &=\frac{\delta^2}{1-\delta}-\delta\cdot \eta \cdot \left(1-\frac1k\right)\\
    &>\frac{\delta^2}{1-\delta}- \eta\\
    &=v_{2j}.
    \end{align*}
    Hence, agent 2 cannot buy these items. Agent 1, will buy definitely buy one item from the first $k$ items and then also buy the last $k$ items because for $j>k$ we have $v_{1j}=\delta+\epsilon  > \sigma_1\cdot \delta = \sigma_1\cdot p_j$. In other words, agent 1 buys these items. The payment for buying this bundle is
    $\frac{\delta^2}{1-\delta}-\delta\cdot \eta \cdot \left(1-\frac1k\right)+\delta\cdot k$
    which is at most $B_1$. Hence, buying this bundle is budget-feasible for agent 1. If agent 1 is a value maximizer, $\sigma_1=0$, then they might choose to buy more items from the first $k$ items, but this generates no welfare. Hence, the liquid welfare generated by the pricing rule, $\LW$, is also constant over all realizations, and we see that
    $\LW=\delta\cdot k+\frac{\delta^2}{1-\delta}$.
    Taking $\eta\to 0$ and $k\to \infty$ then gives the following ratio
    $$\lim_{k\to \infty}\lim_{\eta\to 0}\frac{\OPT}{\LW}=\lim_{k\to \infty}\frac{\delta\cdot k+k\cdot \frac{\delta^2}{1-\delta}+k}{\delta\cdot k+\frac{\delta^2}{1-\delta}}=\frac{\delta+1+\frac{\delta^2}{1-\delta}}{\delta}=\frac{1}{\delta(1-\delta)}.$$    
\end{proof}

These lower bounds show that our upper bounds for XOS combinatorial auctions are tight, and thus also settle tightness of the upper bounds found for utility maximizers with budgets in \cite{fotakis2019} and budget-free value maximizers in \cite{deng2022} for the half-pricing rule $\vec p_{\frac12}$.

\paragraph{Optimal Factors}
When $\delta=\frac12$, we call $\vec p_{\frac12}$ the \emph{half-pricing rule}. If $\sigma_{\min}=0$ or $\sigma_{\min}=1$ this turns out to be the optimal choice for $\delta$. Fixing $\delta=\frac12$, we obtain $\comp(\mathcal I^{\infty}_{\Sigma},\vec p_{\frac12})=4-2\sigma_{\min}$, while with budgets, the competitive ratio remains $4$ for all $\Sigma$.
However, the half-pricing rule is never optimal for any $\sigma_{\min}\in (0,1)$ -
we state, omitting the (simple calculus) proof, the optimal $\delta$ in terms of $\t_{\min}$ in the following proposition.
\begin{proposition}
    For budget-free combinatorial auctions with XOS valuations the following choices for $\delta$ are optimal and give the corresponding competitive guarantees:
    $$\delta^\ast = \begin{cases}
        \frac{1}{1+\sqrt{1-\t_{\min}}}, & \text{if $0\leq \t_{\min}<\frac{\sqrt{5}-1}{2}$}\\
        \frac{1}{1+\t_{\min}}, & \text{if $\frac{\sqrt{5}-1}{2}\leq \t_{\min}\leq 1$}
    \end{cases}
    \qquad \comp(\mathcal I_{\Sigma}^{\infty},\vec p_{\delta^\ast}) = \begin{cases}
        (1+\sqrt{1-\t_{\min}})^2, & \text{if $0\leq \t_{\min}<\frac{\sqrt{5}-1}{2}$}\\
        \frac{1+\t_{\min}}{\t_{\min}}, & \text{if $\frac{\sqrt{5}-1}{2}\leq \t_{\min}\leq 1$.}
    \end{cases}$$
\end{proposition}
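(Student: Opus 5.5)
Since the upper bound in Equation~\eqref{eq:UB-combinatorial} and the matching lower bounds of the preceding lemmas together give the exact value
$$\comp(\mathcal I_{\Sigma}^{\infty},\vec p_{\delta})=f(\delta):=\max\left\{g(\delta),h(\delta)\right\},\qquad g(\delta)=\frac{1-\delta\t_{\min}}{\delta(1-\delta)},\quad h(\delta)=\frac{1}{1-\delta},$$
the proposition reduces to minimizing $f$ over $\delta\in(0,1)$ and evaluating it at the minimizer.

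\textbf{Step 1: which branch is active.} I would first compare $g$ and $h$. Note that $g(\delta)\ge h(\delta)$ iff $1-\delta\t_{\min}\ge\delta$, i.e. iff $\delta\le\frac{1}{1+\t_{\min}}$, which is consistent with the case split in Equation~\eqref{eq:UB-combinatorial}. The function $h$ is increasing on $(0,1)$. On the interval $(\frac{1}{1+\t_{\min}},1)$ we have $f=h$, so $f$ is increasing there, and its infimum on that piece is attained at the left endpoint $\delta_0:=\frac{1}{1+\t_{\min}}$. Consequently the optimum lies in $(0,\delta_0]$, where $f=g$.

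\textbf{Step 2: minimize $g$.} I would differentiate $g$. The numerator of $g'$ is, up to a positive factor,
$$-\t_{\min}\delta(1-\delta)-(1-\delta\t_{\min})(1-2\delta)=(1-\t_{\min})\,... $$
Expanding gives $\t_{\min}\delta^2-2\delta+1$. Hence the stationary points satisfy $\t_{\min}\delta^2-2\delta+1=0$, whose root in $(0,1)$ is
$$\delta_1=\frac{1-\sqrt{1-\t_{\min}}}{\t_{\min}}=\frac{1}{1+\sqrt{1-\t_{\min}}}.$$
For $\t_{\min}=0$ the equation is linear and gives $\delta_1=\tfrac12$, which agrees with the formula. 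The quadratic is positive for $\delta<\delta_1$ and negative for $\delta>\delta_1$, so $g$ decreases on $(0,\delta_1)$ and increases afterwards. Therefore:
\begin{itemize}
\item if $\delta_1\le\delta_0$, the optimum is $\delta^\ast=\delta_1$;
\item otherwise $g$ is decreasing on all of $(0,\delta_0]$, and the optimum is $\delta^\ast=\delta_0$.
\end{itemize}

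\textbf{Step 3: the threshold.} The comparison $\delta_1\le\delta_0$ is equivalent to $1+\t_{\min}\le1+\sqrt{1-\t_{\min}}$, i.e. $\t_{\min}^2+\t_{\min}-1\le0$, i.e. $\t_{\min}\le\frac{\sqrt5-1}{2}$. At equality the two candidates coincide, so assigning the boundary point to either case is harmless.

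\textbf{Step 4: evaluate the ratio.} In the second regime, $f(\delta_0)=h(\delta_0)=\frac{1+\t_{\min}}{\t_{\min}}$. In the first regime I would use the stationarity relation $\t_{\min}\delta_1^2=2\delta_1-1$. This gives $1-\delta_1\t_{\min}=\frac{1-\delta_1}{\delta_1}$, so
$$g(\delta_1)=\frac{1}{\delta_1^2}=\left(1+\sqrt{1-\t_{\min}}\right)^2.$$

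The main obstacle is mild: one must be careful at the edge cases $\t_{\min}\in\{0,1\}$, where the quadratic degenerates or $\delta_0=\tfrac12$. Beyond that, one must note that optimality over $\delta$ uses the tightness of the lower-bound lemmas, so that $f$ is the exact competitive ratio and not merely an upper bound on it.
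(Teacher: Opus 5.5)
Your proposal is correct and is exactly the ``simple calculus'' argument the paper omits: you use the tightness of Equation~\eqref{eq:UB-combinatorial} to make $\max\{g,h\}$ the exact ratio, reduce to $(0,\frac{1}{1+\sigma_{\min}}]$ because $h$ is increasing, and compare the stationary point $\frac{1}{1+\sqrt{1-\sigma_{\min}}}$ of $g$ with the kink, which gives the threshold $\sigma_{\min}^2+\sigma_{\min}-1\le 0$. One cosmetic fix: the numerator of $g'$ equals $-(\sigma_{\min}\delta^2-2\delta+1)$, i.e.\ the quadratic up to a \emph{negative} factor (your monotonicity conclusion already uses the right sign), and the dangling fragment ``$=(1-\sigma_{\min})\,\dots$'' should be deleted.
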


\subsection{Subadditive Combinatorial Auctions}
A breakthrough result by \cite{correa2023constant} recently showed existence of a $6$-competitive prophet inequality for the more general class of combinatorial auctions with subadditive valuations, establishing the following theorem.

\begin{theorem}[Corollary 1 \cite{correa2023constant}]\label{thm:constantfactorsubadditive} Given an instance with subadditive valuations and fixed arrival order of the agents, there exists an online algorithm, $\ALG^\ast$, such that
$$6\E_{\vec v\sim \mathcal D}[\SW(\ALG^\ast(\vec v))]\geq \E_{\vec v\sim \mathcal D}[\SW(\OPT(\vec v))].$$
\end{theorem}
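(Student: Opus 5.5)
This statement is quoted directly from \cite{correa2023constant}: it is their Corollary 1, and it concerns only budget-free utility-maximizing instances with subadditive valuations. So we do not reprove it from first principles but recall how it follows from their main result.

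\textbf{Step 1: the main theorem of \cite{correa2023constant}.} That theorem produces an online algorithm for subadditive combinatorial auctions under a fixed, known arrival order. Concretely, it is a sequence of dynamic bundle prices, sampled from a suitably chosen distribution, that are recomputed after each arrival as a function of the set of remaining items. Agent $i$ buys a utility-maximizing bundle among the remaining items at these prices.

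\textbf{Step 2: the two inequalities behind the factor $6$.} Their analysis compares the expected welfare to a sum of two terms.
\begin{itemize}
\item A \emph{revenue term}. The prices are chosen, in the spirit of balanced prices, so that the expected price of what is sold is tied to the expected loss in the optimal welfare still achievable on the remaining items.
\item A \emph{utility term}. For each agent, we exhibit a deviation bundle, obtained by intersecting a resampled optimal bundle with the remaining items, whose expected utility is large. Subadditivity is what lets one pass from the value of a random sub-bundle to a fraction of the full bundle's value.
\end{itemize}
Combining these by taking expectations over the independent resampled profile $\vec v'$, exactly as in the proof of Lemma~\ref{lem:half-price}, gives $\E[\SW(\ALG^\ast)] \geq \frac16 \E[\SW(\OPT)]$.

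\textbf{Step 3: the obstacle.} The hard part is constructing prices such that any bundle remains affordable enough in expectation while still covering the lost optimum. \cite{correa2023constant} handle this with their price-learning or fixed-point argument over random item subsets, and we invoke their Corollary 1 directly for it.

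For later use, we will reread their analysis to extract the intermediate inequality
\begin{equation*}
\E\Bigl[\textstyle\sum_i \bigl(v_i(y_i) - p_i(y_i \mid \vec z_{[i-1]})\bigr)\Bigr] \geq \tfrac16\,\E[\SW(\OPT)] - \rev .
\end{equation*}
This would show that their prices are $(\frac16,1)$-UB. That step needs care in two places:
\begin{itemize}
\item The deviation bundles $y_i$ must be RoS- and budget-feasible. We achieve this by applying the construction to the capped valuations $\hat v_i$, which remain subadditive, and dropping items whose marginal contribution is below their price.
\item Only quasi-linear utility may appear on the left-hand side, so that Proposition~\ref{prop:deviation} applies.
\end{itemize}
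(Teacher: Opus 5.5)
Citing Corollary~1 of \cite{correa2023constant} as a black box is exactly what the paper does; it gives no proof of this statement. So your conclusion stands. The problem is the reconstruction you attach to the citation, which should not be presented as the reason the statement holds.

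First, the statement asserts the existence of an online \emph{allocation algorithm} $\ALG^\ast$. Agents' objectives play no role in it, so it is not specifically about utility maximizers responding to prices. This is why the paper can later apply it to the capped instance, where objectives are mixed, without further comment.

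Second, your Step~2 would fail: combining a revenue term and a utility term ``exactly as in the proof of Lemma~\ref{lem:half-price}'' does not give $\frac16$. That proof starts from $\hat v_i(y)\geq\sum_{j\in y}\tilde v_{ij}$, i.e.\ an additive supporting function, and subadditive valuations do not have one. Item-by-item arguments of this kind are why combining the XOS analysis with an XOS approximation of subadditive functions only gives $O(\log m)$. The same objection applies to your plan of ``dropping items whose marginal contribution is below their price.'' The constant $6$ in \cite{correa2023constant} is not obtained from a balanced-price decomposition. To my recollection, it comes from an online algorithm built via a fixed-point argument over random sets of remaining items (as your Step~3 hints), which uses subadditivity directly.

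For the downstream use you anticipate, the paper never reopens \cite{correa2023constant}. Its argument runs as follows:
\begin{enumerate}
\item It prices bundles by the continuation values of the online optimum on the capped instance, $p_{i,R}(X)=L_{i+1}(R)-L_{i+1}(R\setminus X)$.
\item It takes as deviation a proxy-utility-maximizing bundle $y_i\in\argmax_{X\subseteq R_i}\{\hat v_i(X)-p_{i,R_i}(X)\}$. This bundle is automatically RoS- and budget-feasible, since $\emptyset$ yields $0$ and $\hat v_i\leq B_i$.
\item It telescopes $\sum_i\E[L_i(R_i)-L_{i+1}(R_i)]$ into exactly $\E[\LW(\OPT_{\mathrm{online}})]-\rev$. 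This is $(1,1)$-utility-boundedness with respect to the online optimum (Lemma~\ref{lemma:UB-subadditive}).
\item Only then does the cited theorem enter. It is applied to the capped valuations $\hat v_i=\min\{v_i,B_i\}$, which are still subadditive, to give $L_1(M)\geq\frac16\E[\LW(\OPT)]$.
\end{enumerate}
The last step upgrades $(1,1)$ with respect to the online optimum to $(\frac16,1)$ with respect to $\OPT$, and yields the $6(2-\sigma_{\min})$ bound. So the intermediate inequality you want to extract from their analysis is unnecessary, and the route you describe does not produce it.
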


In this section, we briefly explain the existence of personalized bundle prices that are $6(2-\sigma_{\min})$-competitive for subadditive agents with stochastic objectives and budgets. Let $L_i(X)$ denote the expected liquid welfare the optimal online algorithm attains from agent $i$ and onward given that the set of remaining items is $X$ upon agent $i$'s arrival. In that case, we have that $L_1(M)$ is the total expected liquid welfare the online optimal algorithm achieves and that for any $X\subseteq M$
\begin{align*}L_n(X)&=\E_{(v_n,B_n,\t_n)\sim \mathcal D_n}[\hat v_n(X)]\\
L_i(X)&=\E_{(v_i,B_i,\t_i)\sim \mathcal D_i}\left[\max_{Y\subseteq X}\left(\hat v_i(Y)+L_{i+1}(X\setminus Y)\right)\right].\end{align*}

Now, consider the dynamic personalized bundle price where if $R$ is the set of remaining items when agent $i$ arrives, the price for each $X\subseteq R$ is defined by:
\begin{equation}\label{eq:prices-subadditive}p_{i,R}(X):=L_{i+1}(R)-L_{i+1}(R\setminus X),\end{equation}
where we define $L_{n+1}(X):=0$ for all $X\subseteq M$.

In the budget-free setting with only utility maximizers, these prices are known to implement the online optimal algorithm. However, for arbitrary objectives and budgets this need not be true. We can, however, prove the following lemma.

\begin{lemma}\label{lemma:UB-subadditive}
    The pricing rule defined by the dynamic personalized bundle prices defined by Equation~\eqref{eq:prices-subadditive} is $(1,1)$-UB w.r.t. the online optimal algorithm.
\end{lemma}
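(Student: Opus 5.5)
The plan is to take, as the deviation for agent $i$ in realization $\vtype$, the bundle that the online optimal algorithm would choose for $\tau_i$ facing the current remaining set. Concretely, let $R_i=R_i(\vtype)$ be the set of items left when agent $i$ arrives under the actual mechanism. Let $y_i^{\vtype}$ be a maximizer of $\hat v_i(Y)+L_{i+1}(R_i\setminus Y)$ over $Y\subseteq R_i$.

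The key identity is that, by \eqref{eq:prices-subadditive},
\[
\hat v_i(y_i)-p_{i,R_i}(y_i)=\max_{Y\subseteq R_i}\bigl(\hat v_i(Y)+L_{i+1}(R_i\setminus Y)\bigr)-L_{i+1}(R_i).
\]
Taking expectation over $\tau_i$ only, and using that $R_i$ depends only on $\tau_1,\dots,\tau_{i-1}$ (independent of $\tau_i$), this equals $L_i(R_i)-L_{i+1}(R_i)$.

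Two things must be checked first.
\begin{itemize}
\item \emph{Feasibility.} We need $y_i$ to be a feasible bundle in the sense of Proposition~\ref{prop:deviation}, i.e.\ $p\le\min\{v_i(y_i),B_i\}=\hat v_i(y_i)$. The deviation lemma in the paper only needs the proxy-utility bound, and its proof handles infeasible $y$ trivially because then the right-hand side is negative. So I will either allow arbitrary bundles (noting that \eqref{eq:deviation} still holds) or replace $y_i$ by $\emptyset$ when its proxy utility is negative.
\item \emph{Nonnegativity.} The proxy utility of $y_i$ is in fact always $\ge 0$, since $Y=\emptyset$ gives value $L_{i+1}(R_i)$. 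So $p\le \hat v_i(y_i)\le B_i$ and $p\le v_i(y_i)$, and feasibility holds automatically.
\end{itemize}

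Next I sum over agents and relate the result to revenue. The realized revenue from agent $i$ is $p_{i,R_i}(z_i)=L_{i+1}(R_i)-L_{i+1}(R_{i+1})$, since $R_{i+1}=R_i\setminus z_i$. Summing these telescopes:
\[
\sum_i\bigl(L_{i+1}(R_i)-L_{i+1}(R_{i+1})\bigr)=\sum_i L_{i+1}(R_i)-\sum_{i\ge2}L_i(R_i),
\]
using $L_{n+1}\equiv0$. Hence
\[
\E\Bigl[\sum_i \bigl(L_i(R_i)-L_{i+1}(R_i)\bigr)\Bigr]=L_1(M)-\E\Bigl[\sum_i\bigl(L_{i+1}(R_i)-L_{i+1}(R_{i+1})\bigr)\Bigr]=L_1(M)-\rev.
\]
Since $L_1(M)=\E[\LW(\ALG^\ast)]$ for the online optimal algorithm (defined with capped valuations), this gives \eqref{eq:utility-bounded-bound} with equality and $\lambda=\mu=1$. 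The distribution $Y_i^{\vtype}$ is simply a point mass.

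The main subtlety, which I expect to be the obstacle, is the independence step. $L_i$ is defined as an expectation over $\tau_i$ with $X$ fixed, so I need $R_i$ to be measurable with respect to $\tau_{<i}$, which holds because arrivals are in index order. The expectation in Definition~\ref{def:utility-bounded} is then taken by conditioning on $\tau_{<i}$ via the tower property.

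I also need to confirm that "online optimal algorithm" means the algorithm whose expected liquid welfare is $L_1(M)$, i.e.\ the DP optimum on budget-capped valuations, so that $\ALG$ in the UB definition is this policy. Finally, I must check that the chosen $y_i$ avoids already-sold items, which holds because $y_i\subseteq R_i$, so prices are finite.
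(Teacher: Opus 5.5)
Your proposal is correct and follows essentially the same route as the paper. The paper also uses a point-mass deviation to a proxy-utility-maximizing bundle in $R_i(\vtype)$, which is exactly your maximizer of $\hat v_i(Y)+L_{i+1}(R_i\setminus Y)$. It likewise gets feasibility from the zero-price empty bundle, and telescopes $\sum_i \E[L_i(R_i)-L_{i+1}(R_i)]$ against the realized payments $L_{i+1}(R_i)-L_{i+1}(R_i\setminus z_i)$ to obtain $L_1(M)-\rev$ with equality. Your explicit conditioning on $\vtype_{<i}$ (the types of agents before $i$) spells out the independence step that the paper leaves implicit.
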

\begin{proof}
    Let $R_i(\vtype)=M\setminus \bigcup_{k<i}z_k(\vtype)$ be the set of remaining items when agent $i$ arrives and let $$y_i^{\vtype}\in \argmax_{X\subseteq R_i(\vtype)}\{\hat v_i(X)-p_{i,R_i(\vtype)}(X)\}.$$
    In other words, $y_i^{\vtype}$ is a proxy utility maximizing bundle. Since the payment for buying nothing is 0, this bundle is budget and RoS feasible. Let $Y_i^{\vtype}$ then be a degenerate distribution that is $y_i^{\vtype}$. Hence, in a realization $\vtype$, we see by the definition of the pricing rule in Equation~\eqref{eq:prices-subadditive} that the proxy utility can be rewritten as:
    $$\hat v_i(y_i^{\vtype})-p_{i,R_i(\vtype)}(y_i^{\vtype})=\max_{X\subseteq R_i(\vtype)}\{\hat v_i(X)-p_{i,R_i(\vtype)}(X)\}=\max_{X\subseteq R_i(\vtype)}\{\hat v_i(X)+L_{i+1}(R_i(\vtype)\setminus X)\}-L_{i+1}(R_i(\vtype)).$$
    Now, taking expectations, summing over all agents and using the definition of $L_i(R_i(\vtype))$, we see that
    $$\begin{aligned}\sum_{i\in [n]}\E_{\vtype\sim \mathcal D}[\hat v_i(y_i^{\vtype})-p_{i,R_i(\vtype)}(y_i^{\vtype})]&=\sum_{i\in [n]}\E_{\vtype\sim \mathcal D}[L_i(R_i(\vtype))-L_{i+1}(R_i(\vtype))]\\
    &=L_1(M)-\sum_{i=1}^{n-1}\E_{\vtype\sim \mathcal D}[L_{i+1}(R_{i}(\vtype))-L_{i+1}(R_{i+1}(\vtype))]\\
    &=L_1(M)-\sum_{i=1}^{n-1}\E_{\vtype\sim \mathcal D}[L_{i+1}(R_{i}(\vtype))-L_{i+1}(R_{i}(\vtype)\setminus z_i(\vtype))]\\
    &=L_1(M)-\sum_{i=1}^{n-1}\E_{\vtype\sim \mathcal D}[p_{i,R_i(\vtype)}(z_i(\vtype))]\\
    &=\E_{\vtype\sim \mathcal D}[\LW(\OPT_{\operatorname{online}}(\vtype))]-\rev,
    \end{aligned}$$
    where we used $\OPT_{\operatorname{online}}$ to denote the optimal online allocation algorithm.
    The last equality follows because $p_{n,R}(X)=0$ for any $X\subseteq R$ and any $R\subseteq M$.
\end{proof}

Note that if $v(\cdot)$ is subadditive, then $\min(v(\cdot),B_i)$ will also be subadditive. Hence, Theorem~\ref{thm:constantfactorsubadditive} extends to general model with stochastic budgets and the liquid welfare benchmark. Lemma~\ref{lemma:UB-subadditive} in combination with Theorem~\ref{thm:utility-bounded-guarantee}~\&~\ref{thm:constantfactorsubadditive} then show that the prices defined by equation~\eqref{eq:prices-subadditive} are $6(2-\sigma_{\min})$-competitive. The following theorem is thus immediate.

\begin{theorem}
    There exists a dynamic personalized bundle price rule for combinatorial auctions with subadditive agents so that its competitive ratio is at most $6(2-\sigma_{\min})$.
\end{theorem}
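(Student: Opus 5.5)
The plan is to chain three earlier results: Lemma~\ref{lemma:UB-subadditive}, Theorem~\ref{thm:utility-bounded-guarantee} and Theorem~\ref{thm:constantfactorsubadditive}. Fix an arbitrary instance $\mathcal D$ with subadditive valuations, stochastic budgets and price sensitivities. Use the dynamic personalized bundle prices of Equation~\eqref{eq:prices-subadditive}, built from the functions $L_i$ defined through the capped valuations $\hat v_i$.

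\textbf{Step 1 (utility-boundedness).} Lemma~\ref{lemma:UB-subadditive} shows these prices are $(1,1)$-UB with respect to $\ALG=\OPT_{\operatorname{online}}$. Apply Theorem~\ref{thm:utility-bounded-guarantee} with $\lambda=\mu=1$. Since $\sigma_{\min}\le 1$, we have $\max\{1,2-\sigma_{\min}\}=2-\sigma_{\min}$, so
\[
\E_{\vtype\sim\mathcal D}[\LW(\vec z(\vtype))]\geq \frac{1}{2-\sigma_{\min}}\,\E_{\vtype\sim\mathcal D}[\LW(\OPT_{\operatorname{online}}(\vtype))].
\]

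\textbf{Step 2 (online versus offline optimum).} Here I transfer Theorem~\ref{thm:constantfactorsubadditive} to the liquid welfare benchmark. Consider the budget-free instance whose valuations are the capped $\hat v_i=\min\{v_i,B_i\}$. These are subadditive, and they are independent across agents because types are drawn from a product distribution. Its social welfare equals the liquid welfare of the original instance, both for online allocations and for the offline optimum. Price sensitivities play no role in either benchmark, so they can be ignored. Theorem~\ref{thm:constantfactorsubadditive} then yields an online algorithm $\ALG^\ast$ with $6\,\E[\LW(\ALG^\ast)]\ge \E[\LW(\OPT)]$. The optimal online algorithm dominates $\ALG^\ast$, since $L_1(M)$ is the value of the optimal dynamic program over the same fixed arrival order. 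Hence $6\,\E[\LW(\OPT_{\operatorname{online}})]\ge \E[\LW(\OPT)]$.

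\textbf{Step 3 (combine).} Chaining Steps 1 and 2 gives $\E[\LW(\vec z)]\ge \frac{1}{6(2-\sigma_{\min})}\E[\LW(\OPT)]$. This is a competitive ratio of at most $6(2-\sigma_{\min})$. In budgeted instances $\sigma_{\min}=0$ by convention, which gives $12$.

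\textbf{Main obstacle.} I expect the delicate point to be Step 2. One must check that the optimal online algorithm used in the pricing, namely the dynamic program over $\hat v_i$, really is the object that Theorem~\ref{thm:constantfactorsubadditive} bounds after capping. Concretely, the capped instance must remain a valid independent subadditive instance with a fixed arrival order. One must also check that $L_1(M)=\E[\LW(\OPT_{\operatorname{online}})]$, which Lemma~\ref{lemma:UB-subadditive} already uses. The claim that capping preserves subadditivity is stated earlier in the paper, so I only need to cite it.
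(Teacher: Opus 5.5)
Your proposal is correct and follows the paper's argument. You combine Lemma~\ref{lemma:UB-subadditive} with Theorem~\ref{thm:utility-bounded-guarantee} at $\lambda=\mu=1$, which loses a factor $2-\sigma_{\min}$ against the optimal online algorithm, and you transfer Theorem~\ref{thm:constantfactorsubadditive} to liquid welfare because capping preserves subadditivity. You also state explicitly the step the paper leaves implicit, namely that the optimal online dynamic program dominates $\ALG^\ast$, and that step is exactly what closes the chain.
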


\section{Balanced Prices}
Following the framework introduced by \cite{Dutting2020}, a set of outcome profiles $\mathcal H\subseteq X$ is \emph{exchange-compatible} with $\vec x \in \mathcal F$ if for all $\vec y \in \mathcal H$ and all $i\in [n]$, $(y_i,\vec x_{-i})\in \mathcal F$. A family of sets $(\mathcal F_{\vec x})_{\vec x\in X}$ is called exchange-compatible if $\mathcal F_{\vec x}$ is exchange-compatible with $\vec x$ for all $\vec x\in X$. Now we introduce the definition of balanced prices for the budget-free setting.
\begin{definition}[Balanced Prices] \label{def:balancedprices}
       Let $\alpha>0$, $\beta_1,\beta_2\geq0$. Given a set of feasible outcomes $\mathcal F$, a valuation profile $\vec v$ and price sensitivity profile $\vec \t$, a pricing rule $\vec p$ is $(\alpha,\beta_1,\beta_2)$-balanced w.r.t. the allocation rule $\ALG$, an exchange-compatible family of sets $(\mathcal F_{\vec x})_{\vec x\in X}$, and an indexing of the agents $i=1,\dots,n$, if for all $\vec x\in \mathcal F$,
    \begin{enumerate}[label = (\alph*)]
        \item \begin{equation}\label{eq:high-enough}\sum_{i\in [n]}p_i(x_i\mid\vec x_{[i-1]})\geq \frac1\alpha\left(\SW(\ALG(\vec v,\vec \t))-\SW(\OPT(\vec v,\mathcal F_{\vec x}))\right)\end{equation}
        \item for all $\vec x'\in \mathcal F_{\vec x}$: 
        \begin{equation}\label{eq:low-enough}\sum_{i\in [n]}p_i(x_i'\mid\vec x_{[i-1]})\leq \beta_1\SW(\OPT(\vec v,\mathcal F_{\vec x}))+\beta_2 \SW(\ALG(\vec v,\vec \t)).\end{equation}
    \end{enumerate}
    Whenever $\beta_2=0$, we suppress the coordinate and say that the pricing rule is $(\alpha,\beta_1)$-balanced. Given a set of valuation and price sensitivity profiles $V$ and $S$, respectively, a collection of pricing rules $(\vec p^{(\vec v,\vec \t)})_{(\vec v,\vec \t)\in V\times S}$ is $(\alpha,\beta_1,\beta_2)$-balanced if there exists an exchange-compatible family of sets $(\mathcal F_{\vec x})_{\vec x\in X}$ such that for all $(\vec v,\vec \t)\in V\times S$, the pricing rule $\vec p^{(\vec v,\t)}$ is $(\alpha,\beta_1,\beta_2)$-balanced with respect to $(\mathcal F_{\vec x})_{\vec x\in X}$.
\end{definition}

Originally, the notion of balanced in Definition~\ref{def:balancedprices} was called weakly-balanced by \cite{Dutting2020}. We omit the distinction between weakly balanced prices and balanced prices.

Next, observe that the balanced prices may depend on the agent's objective. In other words, we could charge certain agents more or less depending on their objective. Somewhat surprisingly, to obtain prophet inequalities for budgeted settings in our model, it suffices to obtain prophet inequalities in budget-free settings with some additional mild assumptions. For this reason, balanced prices are only defined for budget-free settings. The following definitions state these mild assumptions.

\begin{definition}\label{def:preserved}
    An allocation rule $\ALG$ is said to be \emph{preserved} under the budget reduction if for any profile $(\vec v,\vec B,\vec \t)$ and its reduction $(\vec{\hat v},\vec \infty,\vec{\hat \t})$ we have that $\LW_{(\vec v,\vec B)}(\ALG(\vec v,\vec B,\vec \t))=\SW_{\vec{\hat v}}(\ALG(\vec{\hat v},\vec \infty,\vec{\hat \t}))$. Similarly, a pricing rule $\vec p$ that depends on $\mathcal D$ is said to be preserved under the budget reduction if $\vec p(\mathcal D)=\vec p(\hat{\mathcal D}(\mathcal D,\vec q))$, for any pricing rule $\vec q$.
\end{definition}

Note that the optimal allocation, $\OPT$, is preserved under the budget reduction. Suppose that a pricing rule $\vec p$ depends on $\mathcal D$ through $p_i(x_i\mid \vec y):=\E_{(\vec v,\vec B,\vec \t)\sim \mathcal D}\left[p^{(\vec v,\vec B,\vec \t)}_i(x_i\mid \vec y)\right]$ where $\vec p^{(\vec v,\vec B,\vec \t)}$ is a pricing rule that is defined for any tuple of profiles $(\vec v,\vec B,\vec \t)$. Then it is easy to see that a sufficient condition for $\vec p$ to be preserved under the budget reduction is $\vec p^{(\vec v,\vec B,\vec \t)}=\vec p^{(\vec {\hat v},\vec \infty,\vec {\hat \t})}$. So, for example, the pricing rule $\vec p_{\delta}$ for XOS combinatorial auctions is preserved under the reduction since it only depends on $(\vec v,\vec B)$ through $\hat{\vec v}$. We state the following lemma relating balanced prices to utility boundedness, following the original structure from \cite{Dutting2020}.

\begin{theorem}\label{lem:UB-balancedprices}
    Let $\vec p^{(\vec v,\vec B,\vec \t)}$ be a pricing rule defined for any $(\vec v,\vec B,\vec \t)$. Consider an instance $\mathcal D$ and the pricing rule $\vec p$, defined as
    $$p_i(x_i\mid \vec y):=\E_{(\vec v,\vec B,\vec \t)\sim \mathcal D}[p_i^{(\vec v,\vec B,\vec \t)}(x_i\mid \vec y)],$$ 
    and let $\alpha>0$, $\beta_1,\beta_2\geq 0$. Suppose that for the collection of pricing rules $(\vec p^{(\vec v,\vec B,\vec \t)})_{(\vec v,\vec B,\vec \t)\in \supp \mathcal D}$ for feasible outcomes $\mathcal F$, the collection of pricing rules $(\vec p^{(\vec{\hat v},\vec \infty,\vec{\hat \t})})_{(\vec {\hat v},\vec{\hat \t})\in \supp \hat{ \mathcal D}}$ is $(\alpha,\beta_1,\beta_2)$-balanced w.r.t. an allocation rule $\ALG$ and an exchange-compatible family of sets $(\mathcal F_{\vec x})_{\vec x\in X}$. Here, $\hat{\mathcal D}$ is the proxy instance w.r.t. the pricing rule $\delta \vec p$. If $\ALG$ and the pricing rule $\vec p$, are both preserved under the budget reduction, then the pricing rule $\delta \vec p$ is $(1-\delta\beta_1-\delta\beta_2,\alpha\frac{1-\delta\beta_1}{\delta})$-UB w.r.t. $\ALG$ for $0< \delta$ and $\delta(\beta_1+\beta_2)<1$.
\end{theorem}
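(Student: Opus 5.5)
The plan is to follow the utility-bound argument of \cite{Dutting2020} for balanced prices, carried out entirely in the proxy instance $\hat{\mathcal D}$ (built w.r.t. the pricing rule $\delta\vec p$). Preservation lets us move between the two instances. Since $\vec p$ is preserved, $\vec p(\mathcal D)=\vec p(\hat{\mathcal D})$, so $\delta\vec p$ is the expectation of $\delta\vec p^{(\hat{\vec v},\vec\infty,\hat{\vec\t})}$ over $\hat{\mathcal D}$. Since $\ALG$ is preserved, $\E[\LW(\ALG)]$ under $\mathcal D$ equals $\E[\SW_{\hat{\vec v}}(\ALG)]$ under $\hat{\mathcal D}$. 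Moreover, $\hat v_i$ is exactly the valuation appearing in Definition~\ref{def:utility-bounded}, and $\rev$ and $\vec z$ are unchanged by the reduction. So it suffices to exhibit, in the budget-free proxy world, deviations whose summed proxy utility $\hat u_i$ is at least $(1-\delta\beta_1-\delta\beta_2)\E[\SW(\ALG)]-\alpha\frac{1-\delta\beta_1}{\delta}\rev$.

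\textbf{Deviations.} Fix a realization $\vtype$ with outcome $\vec x=\vec z(\vtype)$, and draw an independent copy $\vtype'\sim\hat{\mathcal D}$. For agent $i$, let $\vec x^{*}=\OPT(\hat v_i,\hat{\vec v}'_{-i},\mathcal F_{\vec x'})$, where $\vec x'$ is the outcome on the hybrid profile $(\tau_i',\vtype_{-i})$ (the standard resampling trick). Agent $i$ deviates to $y_i=x^{*}_i$ if this bundle is RoS- and budget-feasible given $\vec z_{[i-1]}$, and to $\emptyset$ otherwise. Exchange-compatibility makes $x^*_i$ feasible together with $\vec z_{[i-1]}$, which depends only on agents before $i$. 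Replacing an unaffordable bundle by $\emptyset$ only increases $\hat u_i$: if $p_i>\min\{v_i,B_i\}$ then $\hat v_i-p_i<0=\hat u_i(\emptyset)$. Hence the distribution $Y_i^{\vtype}$ is supported on feasible bundles, as Definition~\ref{def:utility-bounded} requires, and we may lower bound by $\hat v_i(x^*_i)-\delta p_i(x^*_i\mid\vec z_{[i-1]})$. Swapping the identically distributed $\tau_i$ and $\tau'_i$, and using that $\vec z_{[i-1]}$ does not depend on $\tau_i$, we sum over $i$ and obtain
\[\textstyle\sum_i\E[\hat u_i]\ \ge\ \E\big[\SW(\OPT(\hat{\vec v},\mathcal F_{\vec x}))\big]-\delta\,\E\Big[\sum_i p_i(x^*_i\mid\vec x_{[i-1]})\Big].\]

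\textbf{Applying balancedness.} Write $p_i=\E_{\vtype''}[p_i^{\vtype''}]$ over a further independent proxy sample. Condition (b), applied for each $\vtype''$ to $\vec x^*\in\mathcal F_{\vec x}$, bounds the price term by $\delta\beta_1\E[\SW(\OPT(\mathcal F_{\vec x}))]+\delta\beta_2\E[\SW(\ALG)]$. Here we use that the optimum over $\mathcal F_{\vec x}$ has the same distribution as the one realised via the independent copy. This gives
\[\textstyle\sum_i\E[\hat u_i]\ \ge\ (1-\delta\beta_1)\,\E[\SW(\OPT(\mathcal F_{\vec x}))]-\delta\beta_2\,\E[\SW(\ALG)].\]
Condition (a), applied to each $\vec p^{\vtype''}$ at $\vec x=\vec z(\vtype)$ and averaged over $\vtype''$, gives
\[\rev=\delta\,\E\Big[\sum_i p_i(z_i\mid\vec z_{[i-1]})\Big]\ \ge\ \frac{\delta}{\alpha}\Big(\E[\SW(\ALG)]-\E[\SW(\OPT(\mathcal F_{\vec x}))]\Big).\]
Equivalently, $\E[\SW(\OPT(\mathcal F_{\vec x}))]\ge\E[\SW(\ALG)]-\frac{\alpha}{\delta}\rev$. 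Since $1-\delta\beta_1>0$ by the assumption $\delta(\beta_1+\beta_2)<1$, substituting yields
\[\textstyle\sum_i\E[\hat u_i]\ \ge\ (1-\delta\beta_1-\delta\beta_2)\,\E[\SW(\ALG)]-\alpha\frac{1-\delta\beta_1}{\delta}\,\rev,\]
which are exactly the claimed constants.

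\textbf{Main obstacle.} The delicate step is the second paragraph: making the independence and resampling argument rigorous so that $\E[\SW(\OPT(\mathcal F_{\vec x}))]$ in the utility bound is the same quantity as the one controlled by (a) and (b). This requires handling the dependence of $\mathcal F_{\vec x}$ on later agents' types and on the proxy objectives $\hat\sigma$, which themselves depend on $\vec z$. I would first try to resolve it by mirroring the exact coupling in \cite{Dutting2020}. Only the prefix $\vec z_{[i-1]}$ enters agent $i$'s prices, so the coupling should go through with the proxy types in place of the original ones.
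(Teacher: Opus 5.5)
Your proposal is correct and is essentially the paper's proof: a proxy utility bound from resampled, exchange-compatible deviations combined with condition (b), a revenue bound from condition (a), and elimination of the $\SW(\OPT(\mathcal F_{\vec x}))$ term using $1-\delta\beta_1>0$. That elimination is exactly the paper's choice $\mu=\alpha\frac{1-\delta\beta_1}{\delta}$. The obstacle you flag is resolved as in the paper by resampling original types $\vtype'\sim\mathcal D$ rather than from $\hat{\mathcal D}$: $\mathcal D$ is a product distribution on which $\vec z$ is defined, whereas $\hat{\mathcal D}$ is not a product, since $\hat\sigma_i$ depends on $z_i$. $\hat\sigma$ never enters the deviation bound, and $\hat{\mathcal D}$ is needed only as the independent sample that represents the prices and $\ALG$ via preservation.
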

To prove this Lemma, we first start by generalizing the utility bound from the proof of Theorem 3.2. in \cite{Dutting2020} to obtain the \emph{proxy utility bound}. Next, we also generalize their revenue bound to budget-constrained settings. We combine these bounds to prove utility-boundedness. The proofs of the proxy utility and revenue bounds can be found in the appendix since they are of similar structure to the utility and revenue bound proofs in \cite{Dutting2020}. For ease of notation, we shall write 
$$O:=\E_{\substack{\vtype\sim \mathcal D\\ (\vec v',\vec B')\sim \pi}}\left[\LW_{(\vec v',\vec B')}(\OPT(\vec v',\vec B',\mathcal F_{\vec z(\vtype)}))\right] \text{ and}\; A := \E_{ \vtype\sim \mathcal D}\left[\LW(\ALG( \vtype))\right].$$

\begin{restatable}[Proxy Utility Bound]{lemma}{lemmaobjective}\label{lemma:objectivebound}
Suppose that the pricing rule $\vec p$ satisfies the conditions from Theorem~\ref{lem:UB-balancedprices}. Then, there a distribution $Y_i^{\vtype}$ over feasible remaining bundles such that
\begin{equation}
    \begin{aligned}\E_{\substack{\vtype\sim \mathcal D\\ y_i^{\vtype}\sim Y_i^{\vtype}}}\left[\sum_{i\in [n]}\hat v_i(y_i^{\vtype})-\delta p_i(y_i^{\vtype}\mid \vec z_{[i-1]}(\vtype))\right]&\geq (1-\delta \beta_1)O -\delta \beta_2A.
    \end{aligned}
\end{equation}
\end{restatable}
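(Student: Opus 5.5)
We follow the utility-bound argument from the proof of Theorem 3.2 in \cite{Dutting2020}, with two changes: valuations are replaced by the capped valuations $\hat v_i$, and balancedness is applied in the proxy instance $\hat{\mathcal D}$ (built w.r.t.\ $\delta\vec p$).

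\emph{Deviation distribution.} Fix a realization $\vtype$ and let $\vec z=\vec z(\vtype)$. Draw an independent sample $\vtype'=(\vec v',\vec B',\vec\t')\sim\mathcal D$ and let $\vec x'=\OPT(\vec v',\vec B',\mathcal F_{\vec z})$. Agent $i$'s tentative bundle is obtained from $x'_i$ by resampling only agent $i$'s coordinate: set $y_i^{\vtype}:=x^{(i)}_i$, where $\vec x^{(i)}=\OPT((v_i,B_i),(\vec v',\vec B')_{-i},\mathcal F_{\vec z})$. This defines $Y_i^{\vtype}$.

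\emph{Availability and feasibility.} By exchange-compatibility, $(y_i,\vec z_{-i})\in\mathcal F$, and downward closedness gives $(y_i,\vec z_{[i-1]})\in\mathcal F$, so the price of $y_i$ given $\vec z_{[i-1]}$ is finite. If $y_i$ violates the RoS or budget constraint, replace it by $\emptyset$. The loss from doing so is harmless: we only need a lower bound on $\hat v_i(y_i)-\delta p_i(y_i\mid\vec z_{[i-1]})$, and Proposition~\ref{prop:deviation} already treats infeasible bundles as contributing at most $0$. Formally, I would bound the expectation by $\E[(\hat v_i(y_i)-\delta p_i)^+]\ge\E[\hat v_i(y_i)-\delta p_i]$, and the plus-part is attained by choosing $\emptyset$ whenever the difference is negative.

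\emph{Value term.} Swap the roles of $\tau_i$ and $\tau'_i$; these are identically distributed and independent of $\vec z_{[i-1]}$, and that independence is exactly what the resampling buys. After the swap, $\E[\sum_i\hat v_i(y_i)]=\E[\sum_i\hat v'_i(x'_i)]=\E[\LW_{(\vec v',\vec B')}(\vec x')]=O$. The capped objective is XOS-free here, so no additive support is needed; we just use $\LW$ of the optimum.

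\emph{Price term.} Since $\vec p$ is an expectation of pricing rules $\vec p^{(\vec v'',\vec B'',\vec\t'')}$, we have $\E[\sum_i p_i(y_i\mid\vec z_{[i-1]})]=\E\bigl[\sum_i p_i^{(\cdot)}(x'_i\mid\vec z_{[i-1]})\bigr]$, again after swapping $\tau_i$ with $\tau'_i$. The difficulty is that the price profile is indexed by an independent sample $\vtype''$, whereas balancedness (b) controls $\sum_i p_i(x'_i\mid\vec z_{[i-1]})$ only for $\vec x'\in\mathcal F_{\vec z}$ and for prices evaluated at the same realization that appears on the right-hand side.

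\emph{Main obstacle: invoking balancedness.} This step uses the preservation hypotheses. Because $\vec p$ is preserved, $\vec p(\mathcal D)=\vec p(\hat{\mathcal D})$, so we may write $p_i=\E_{(\hat{\vec v},\hat{\vec\t})\sim\hat{\mathcal D}}[p_i^{(\hat{\vec v},\vec\infty,\hat{\vec\t})}]$. Apply (b) to each realization in the support of $\hat{\mathcal D}$:
\[
\textstyle\sum_i p^{(\hat{\vec v},\vec\infty,\hat{\vec\t})}_i(x'_i\mid\vec z_{[i-1]})\le\beta_1\SW_{\hat{\vec v}}(\OPT(\hat{\vec v},\mathcal F_{\vec z}))+\beta_2\SW_{\hat{\vec v}}(\ALG(\hat{\vec v},\hat{\vec\t})).
\]
Take expectations. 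The first term is $\beta_1$ times $\E[\LW(\OPT(\cdot,\mathcal F_{\vec z}))]$ over an independent sample, which equals $\beta_1O$, since the capped social welfare is the liquid welfare and the $\OPT$ rule is preserved. The second term equals $\beta_2A$, because $\ALG$ is preserved.

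\emph{Conclusion.} Multiplying the price bound by $\delta$ and subtracting it from the value term gives $(1-\delta\beta_1)O-\delta\beta_2A$. The delicate points are the independence bookkeeping ($\vec z_{[i-1]}$ must not depend on $\tau_i$ or $\tau'_i$) and the fact that $\vec z$ is computed in the original instance while balancedness is applied in $\hat{\mathcal D}$.
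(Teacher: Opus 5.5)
\textbf{There is a genuine gap in how you define the deviation.} You compute $y_i$ from $\OPT\bigl((v_i,B_i),(\vec v',\vec B')_{-i},\mathcal F_{\vec z(\vtype)}\bigr)$, that is, relative to the family of the \emph{actual} run. But $\mathcal F_{\vec z(\vtype)}$ depends on the whole profile $\vec z(\vtype)$. It therefore depends on $\tau_i$ through $z_i$ and through every later agent's choice, not only through $\vec z_{[i-1]}$.

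So exchanging $\tau_i\leftrightarrow\tau_i'$ sends $y_i$ to $\OPT_i\bigl(\vec v',\vec B',\mathcal F_{\vec z(\tau_i',\vtype_{-i})}\bigr)$, not to $x_i'=\OPT_i(\vec v',\vec B',\mathcal F_{\vec z(\vtype)})$. This has two consequences:
\begin{enumerate}
\item The value identity $\E[\sum_i\hat v_i(y_i)]=O$ is false.
\item In the price term the bundles no longer come from one profile in $\mathcal F_{\vec z(\vtype)}$, because the family varies with $i$. Without the swap the $\vec x^{(i)}$ are $n$ different profiles. Either way, condition (b) cannot be invoked.
\end{enumerate}

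This is not just bookkeeping. Take one agent, one item, $v\in\{0,1\}$ uniformly, no budgets, and the $(1,1)$-balanced prices $p^{(v)}=v$, so $\delta p=\delta/2$. The agent buys if and only if $v=1$. In that case the disjoint-items family $\mathcal F_{\vec z}$ allows no item and your $y_1=\emptyset$. When $v=0$, every bundle has $\hat v_1=0$. Your left-hand side is therefore $0$, while $(1-\delta)O=(1-\delta)/4>0$.

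\textbf{The fix.} Take the family from a hypothetical run in which agent $i$'s own type is replaced by the fresh sample:
\[
\tilde y_i=\OPT_i\bigl((v_i,B_i),(\vec v',\vec B')_{-i},\mathcal F_{\vec z(\tau_i',\vtype_{-i})}\bigr),
\]
replaced by $\emptyset$ whenever $\delta p_i(\tilde y_i\mid\vec z_{[i-1]}(\vtype))>\hat v_i(\tilde y_i)$.
\begin{itemize}
\item Availability holds because $\vec z_{[i-1]}(\tau_i',\vtype_{-i})=\vec z_{[i-1]}(\vtype)$, together with exchange-compatibility and downward-closedness.
\item The swap $\tau_i\leftrightarrow\tau_i'$ now maps $\tilde y_i$ exactly to $z_i'=\OPT_i(\vec v',\vec B',\mathcal F_{\vec z(\vtype)})$ and leaves $\vec z_{[i-1]}(\vtype)$ unchanged.
\item Hence all agents' bundles come from the single profile $\vec z'\in\mathcal F_{\vec z(\vtype)}$. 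Condition (b) with $\vec x=\vec z(\vtype)$ and $\vec x'=\vec z'$ then applies pointwise inside the preserved representation $p_i=\E_{\hat{\mathcal D}}[\,\cdot\,]$.
\end{itemize}
The rest of your argument then goes through as written: the positive-part/$\emptyset$ replacement, and converting capped social welfare to liquid welfare via preservation of $\OPT$ and $\ALG$.
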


\begin{restatable}[Revenue Bound]{lemma}{lemmarevenue}\label{lemma:revenuebound}
    Suppose that the pricing rule $\vec p$ satisfies the conditions from Theorem~\ref{lem:UB-balancedprices}. Then, for the pricing rule $\delta \vec p$ the revenue can be bounded below as follows:
    $$\begin{aligned}\E_{\vtype\sim \mathcal D}\left[ \sum_{i\in [n]}\delta p_i(z_i(\vtype)\mid \vec z_{[i-1]}(\vtype))\right]&\geq \frac\delta\alpha A  - \frac\delta\alpha O.
    \end{aligned}$$
\end{restatable}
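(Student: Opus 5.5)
The plan is to rewrite the posted prices as an expectation over an independent \emph{reduced} sample, apply balancedness condition~\eqref{eq:high-enough} pointwise to the realized outcome $\vec z(\vtype)$, and then translate the resulting social-welfare quantities of the proxy instance back into liquid welfare of the original instance. Multiplying by $\delta$ at the end gives the claimed bound.

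\textbf{Step 1: rewrite the prices over the proxy instance.} Throughout, $\hat{\mathcal D}$ is the proxy instance with respect to $\delta\vec p$. Since $\vec p$ is preserved under the budget reduction (Definition~\ref{def:preserved}), $\vec p(\mathcal D)=\vec p(\hat{\mathcal D})$. Hence for every $i$, $x_i$ and partial profile $\vec y$,
$$p_i(x_i\mid \vec y)=\E_{(\vec{\hat v}',\vec\infty,\vec{\hat\t}')\sim\hat{\mathcal D}}\bigl[p_i^{(\vec{\hat v}',\vec\infty,\vec{\hat\t}')}(x_i\mid\vec y)\bigr].$$
I will realize $\hat{\mathcal D}$ concretely. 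I sample a fresh $\vtype'=(\vec v',\vec B',\vec\t')\sim\mathcal D$, independent of $\vtype$, and map it through the budget reduction. This gives $\hat v'_i=\min\{v'_i,B'_i\}$, and $\hat\t'$ is determined via $\vec z(\vtype')$. Then the revenue equals
$$\delta\,\E_{\vtype}\E_{\vtype'}\Bigl[\sum_i p_i^{(\vec{\hat v}',\vec\infty,\vec{\hat\t}')}\bigl(z_i(\vtype)\mid\vec z_{[i-1]}(\vtype)\bigr)\Bigr].$$

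\textbf{Step 2: apply balancedness pointwise.} For fixed $\vtype,\vtype'$, the realized outcome $\vec z(\vtype)$ lies in $\mathcal F$, since every chosen bundle had finite price. The reduced profile lies in $\supp\hat{\mathcal D}$, where the collection is $(\alpha,\beta_1,\beta_2)$-balanced. So condition~\eqref{eq:high-enough} with $\vec x=\vec z(\vtype)$ gives
$$\sum_i p_i^{(\vec{\hat v}',\vec\infty,\vec{\hat\t}')}(z_i\mid\vec z_{[i-1]})\ \ge\ \tfrac1\alpha\Bigl(\SW_{\vec{\hat v}'}\bigl(\ALG(\vec{\hat v}',\vec\infty,\vec{\hat\t}')\bigr)-\SW_{\vec{\hat v}'}\bigl(\OPT(\vec{\hat v}',\mathcal F_{\vec z(\vtype)})\bigr)\Bigr).$$

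\textbf{Step 3: return to the original instance.} Preservation of $\ALG$ turns the first term into $\LW_{(\vec v',\vec B')}(\ALG(\vtype'))$. For the second term, note that for every outcome $\vec x$ we have $\SW_{\vec{\hat v}'}(\vec x)=\LW_{(\vec v',\vec B')}(\vec x)$. Maximizing both sides over $\mathcal F_{\vec z(\vtype)}$ therefore gives $\SW_{\vec{\hat v}'}(\OPT(\vec{\hat v}',\mathcal F_{\vec z(\vtype)}))=\LW_{(\vec v',\vec B')}(\OPT(\vec v',\vec B',\mathcal F_{\vec z(\vtype)}))$.

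\textbf{Step 4: take expectations.} Taking expectations over the independent pair $(\vtype,\vtype')$, the first term does not depend on $\vtype$ and yields exactly $A$. The second term depends on $\vtype'$ only through $(\vec v',\vec B')\sim\pi$, independent of $\vtype$, so it yields exactly $O$. Multiplying by $\delta$ gives the stated inequality.

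\textbf{Main obstacle.} The delicate point is Step 1, namely justifying that the price expectation may be taken over the proxy instance. The proxy $\hat{\mathcal D}$ is defined through the outcome $\vec z(\vtype')$ under $\delta\vec p$, which in turn depends on $\vec p$. This looks circular, but the preservation hypothesis is exactly what allows swapping $\mathcal D$ for $\hat{\mathcal D}$ in the definition of $\vec p$. Independence of $\vtype'$ from $\vtype$ is then what lets the second expectation collapse to the marginal $\pi$ appearing in $O$.
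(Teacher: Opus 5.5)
Your proposal is correct and follows essentially the same route as the paper. You rewrite the posted price as an expectation over the proxy instance $\hat{\mathcal D}$ using preservation of $\vec p$, apply balancedness condition~\eqref{eq:high-enough} pointwise at $\vec x=\vec z(\vtype)$, and then turn the proxy social-welfare terms into $A$ and $O$ using preservation of $\ALG$ and $\OPT$ and the coupling of $\hat{\mathcal D}$ to an independent sample of $\mathcal D$ (your Steps 3--4 spell out the expectation rewriting the paper defers to the proof of Lemma~\ref{lemma:objectivebound}).
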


\begin{proof}[Proof of Theorem~\ref{lem:UB-balancedprices}]
    
    Let $\mu\geq 0$, then using the bounds from Lemmas~\ref{lemma:objectivebound}~\&~\ref{lemma:revenuebound} we find that
    $$\begin{aligned}\E_{\substack{\vtype\sim \mathcal D\\ y_i^{\vtype}\sim Y_i^{\vtype}}}\left[\sum_{i\in [n]}\hat v_i(y_i^{\vtype})-\delta p_i(y_i^{\vtype}\mid \vec z_{[i-1]}(\vtype))\right]+\mu\rev &\geq (1-\delta\beta_1)O-\delta\beta_2 A+\mu\left(\frac{\delta}{\alpha}A-\frac{\delta}{\alpha}O\right)\\
    &=\left(1-\delta\beta_1-\mu \frac{\delta}{\alpha}\right)O+\left(\mu\frac{\delta}{\alpha}-\delta\beta_2\right)A.
    \end{aligned}$$
    Now, choosing $\mu = \alpha \frac{1-\delta\beta_1}{\delta}$ we see that the $O$ term vanishes and we are left with
    $$\E_{\substack{\vtype\sim \mathcal D\\ y_i^{\vtype}\sim Y_i^{\vtype}}}\left[\sum_{i\in [n]}\hat v_i(y_i^{\vtype})-\delta p_i(y_i^{\vtype}\mid \vec z_{[i-1]}(\vtype))\right]+\left(\alpha \frac{1-\delta\beta_1}{\delta}\right)\rev\geq (1-\delta\beta_1-\delta\beta_2)\E_{ \vtype\sim \mathcal D}\left[\LW(\ALG( \vtype))\right],$$
    as desired.
\end{proof}

We illustrate the strength of Theorem~\ref{lem:UB-balancedprices} with our guiding example: XOS combinatorial auctions. Consider the pricing rule from Equation~\eqref{eq:pricing-rule-XOS} and extend it linearly to bundles. Next, \cite{Dutting2020} prove that this rule in the budget-free setting is $(1,1)$-balanced w.r.t $\OPT$ and the exchange-compatible family $(\mathcal F_{\vec x})_{\vec x\in X}$ where 
\begin{equation}\label{eq:exchange-combinatorial}\mathcal F_{\vec x}=\left\{\vec y\in \mathcal F\mid \left(\bigcup_{i\in [n]} y_i\right) \cap\left( \bigcup_{i\in [n]}x_i\right)=\emptyset \right\}.
\end{equation} 
This then clearly extends to being balanced in the sense of definition~\ref{def:balancedprices} since it is independent of $\vec \t$. Using Theorem~\ref{lem:UB-balancedprices} with $\alpha=1=\beta_1$ and $\beta_2=0$, we see that the pricing rule $\delta \vec p$ with $\delta<1$ is $(1-\delta,\frac{1-\delta}{\delta})$-UB, like we had already seen in Section~\ref{sec:combinatorial-auctions}.

\section{Conclusion}\label{sec:conclusion}
We initiate the study of online auctions in the prophet inequality model when agents also have stochastic objectives and budgets. In the setting of XOS combinatorial auctions we derive tight bounds on the competitive ratio of a natural class of pricing rules and show, in the budget-free setting, that the competitive ratio of the half-pricing rule interpolates linearly between 2 and 4 in the price sensitivity of the least price sensitive agent.
On the way, we introduce the concept of utility-bounded prices which capture many pricing rules that appear in the literature, e.g. any balanced pricing and allocation rule that are preserved under the budget-reduction. This allows us to establish strong connections between value maximizers and budget-constrained agents and, interestingly, show that the half-pricing rule is not optimal under intermediate objectives. This demonstrates that the additional uncertainty faced by the algorithm, despite the model's generality, causes only small deterioration in competitive ratios, which can be further bounded by the minimum price sensitivity in absence of budgets. For future study, it would be interesting to investigate whether a 2-competitive pricing rule exists for objectives beyond utility maximizers.

\section*{Acknowledgements and AI Disclosure} 

The authors used a large language model (ChatGPT 5.6) to assist with editing and improving the exposition. 
All mathematical results, proofs, and technical content were developed by the authors.

\bibliographystyle{alpha}

\newcommand{\etalchar}[1]{$^{#1}$}

\appendix

\section{Further Applications}
\subsection{Combinatorial Auctions with MPH-\texorpdfstring{$k$}{} valuations}
In this section, we cover prophet inequalities for combinatorial auctions with Maximum over Positive Hypergraphs (MPH) valuations, introduced in \cite{FeigeFIILS15}.

Given a set of items $M$, a hypergraph representation of a function $v:2^M\to \mathbb R$ is a normalized set function $w:2^M\to \mathbb R$ such that $v(S)=\sum_{T\subseteq S}w(T)$. The hypergraph representation is called positive, or non-negative, when $w(T)\geq 0$ for all $T\subseteq M$. A \emph{hyperedge} of $w$ is a set $S\subseteq M$ such that $w(S)\neq 0$ and the rank of a hypergraph representation is the largest cardinality of any hyperedge. Similarly, the rank of $v$ is the rank of the hypergraph representation of $v$ and if the rank of $w$ is at most $k$ we call $v$ a hypergraph-$k$ valuation. If the hypergraph representation is non-negative, then we refer to $v$ as a positive hypergraph-$k$ function (PH-$k$).

\begin{definition}
    A monotone set function $v:2^M\to \mathbb R_{\geq0}$ is Maximum over Positive Hypergraph-$k$ (MPH-$k$) if it can be expressed as a maximum over a set of PH-$k$ valuations. That is, there exist PH-$k$ functions $\{v_\ell\}_{\ell\in \mathcal L}$ such that for every set $S\subseteq M$,
    $$v(S)=\max_{\ell \in \mathcal L}v_\ell(S),$$
    where $\mathcal L$ is an arbitrary index set.
\end{definition}

\begin{proposition}\label{prop:MPH-k}
    Given an MPH-$k$ valuation $v$ and a constant $B\in (0,\infty)$, the budget-capped valuation $\hat v(S)=\min\{v(S),B\}$ is also MPH-$k$.
\end{proposition}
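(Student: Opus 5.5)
The plan is to exhibit $\hat v$ directly as a maximum over PH-$k$ functions. I will use the identity $\min\{\max_{\ell}v_\ell(S),B\}=\max_{\ell}\min\{v_\ell(S),B\}$. A maximum over a union of families of PH-$k$ functions is again a maximum over PH-$k$ functions. So it suffices to prove the statement for a single PH-$k$ function $v$ with nonnegative hypergraph representation $w$ of rank at most $k$. Monotonicity of $\hat v$ is immediate.

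The capped function $\min\{v,B\}$ is generally not itself PH-$k$. Instead, for every set $S\subseteq M$ I will construct a PH-$k$ function $u_S$ that satisfies two properties:
\begin{itemize}
\item $u_S(T)\le\min\{v(T),B\}$ for all $T\subseteq M$;
\item $u_S(S)=\min\{v(S),B\}$.
\end{itemize}
Then $\min\{v,B\}=\max_{S\subseteq M}u_S$, which is the required MPH-$k$ representation with index set $2^M$.

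The natural candidate is a scaled restriction of $v$ to $S$:
$$u_S(T):=c_S\, v(T\cap S),\qquad c_S:=\min\{1,B/v(S)\},$$
with $c_S:=1$ if $v(S)=0$. Its hypergraph representation assigns weight $c_S w(E)\ge 0$ to each hyperedge $E\subseteq S$ of $w$ and weight $0$ to all other sets. This is nonnegative, normalized and of rank at most $k$, so $u_S$ is PH-$k$.

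The two properties are then checked as follows.
\begin{itemize}
\item Upper bound by $v$: $u_S(T)\le v(T\cap S)\le v(T)$, using $c_S\le1$ and monotonicity of $v$.
\item Upper bound by $B$: $u_S(T)\le c_S v(S)\le B$, again by monotonicity.
\item Tightness at $S$: $u_S(S)=c_S v(S)=\min\{v(S),B\}$.
\end{itemize}

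The only real obstacle is noticing that capping destroys the positive-hypergraph structure of a single PH-$k$ function. This is why one has to pass to a maximum of scaled restrictions rather than trying to cap the hyperedge weights directly. The remaining details are routine: the $v(S)=0$ case, and the fact that $B\in(0,\infty)$ makes $c_S$ well defined.
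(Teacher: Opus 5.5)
Your proof is correct and uses the same construction as the paper. For each $S$, the scaled restriction $\min\{1,B/v(S)\}\,v_\ell(S\cap\cdot)$ is PH-$k$ via the restricted hypergraph weights, lies below $\hat v$ everywhere, and agrees with it at $S$; so $\hat v$ is the maximum over $S$ of these functions. The only difference is organizational: you first reduce to a single PH-$k$ function via $\min\{\max_\ell v_\ell,B\}=\max_\ell\min\{v_\ell,B\}$, giving the index set $\mathcal L\times 2^M$, whereas the paper picks for each $S$ an $\ell$ with $v(S)=v_\ell(S)$ and indexes by $2^M$ alone.
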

\begin{proof} 
    Let $\mathcal L$ be the index set such that $v(S)=\max_{\ell \in \mathcal L}v_{\ell}(S)$ for any $S\subseteq M$. Now, fix $S\subseteq M$ arbitrarily and define $\ell\in \mathcal L$ such that $v(S)=v_\ell(S)$. If $v(S)\neq 0$, define $\hat v_{S}(X)=\min\{1,\frac{B}{v(S)}\}v_\ell(S\cap X)$ and note that $\min\{1,\frac{B}{v(S)}\}$ is a constant as we fixed $S$. Otherwise, let $\hat v_S(X)=0$. Furthermore, since $v_\ell$ is PH-$k$, we let $w_{\ell}$ denote its hypergraph representation of rank $k$. We now show that $v_\ell(S\cap X)$ is a PH-$k$ as a function of $X$ when $v(S)\neq 0$, since when $v(S)=0$ this is trivial. Let $\hat w_{\ell,S}(T):=w_{\ell}(T)$ when $T\subseteq S$ and otherwise 0. Then, we obtain
    $$v_{\ell}(S\cap X)=\sum_{T\subseteq S\cap X}w_{\ell}(T)=\sum_{T\subseteq X}\mathbbm{1}_{T\subseteq S}w_{\ell}(T)=\sum_{T\subseteq X}\hat w_{\ell,S}(T),$$
    where, clearly, $\hat w_{\ell,S}$ is non-negative and the rank of $\hat w_{\ell,S}$ is less than or equal to that of $w_{\ell}$ which is $k$. Thus, $\hat w_{\ell,S}$ defines a hypergraph representation of $v_{\ell}(S\cap X)$ proving that $v_{\ell}(S\cap X)$ remains PH-$k$. Furthermore, we have $\hat v_S(S)=\min\{v_\ell(S),B\}=\min\{v(S),B\}=\hat v(S)$ and using $\min\{1,\frac{B}{v(S)}\}\leq 1$, we obtain for $X\subseteq M$ that $\hat v_S(X)\leq v_\ell(S\cap X)\leq v(S\cap X)\leq v(X)$ since $v$ is monotone. Similarly, by the monotonicity of $v$, we also see that $\hat v_S(X)\leq \frac{B}{v(S)}v_\ell(S\cap X)\leq B$, showing that $\hat v_S(X)\leq \hat v(X)$ for any $X\subseteq M$. Now, we define the index set $\hat{\mathcal L}=2^M$ and see for any $S\subseteq M$ that $\hat v(S)=\max_{X\in 2^M}\hat v_X(S)$, as desired.
\end{proof}

For the pricing rule, we consider an arbitrary allocation rule $\ALG$. Given a valuation and budget profile $(\vec v,\vec B)$, we write $\vec{ \tilde v}$ for the supporting hypergraph-$k$ valuations of $\vec{\hat v}$, whose existence is covered by Proposition~\ref{prop:MPH-k}, for the allocation $\ALG(\vec v,\vec B)$ and write $\hat w_i$ for the hypergraph representation of $\tilde v_i$. Now, for each item $j$, we define
\begin{equation}\label{eq:price-MPH}p^{(\vec v,\vec B)}(\{j\}):=\sum_{i\in [n]}\sum_{\substack{T\ni j\\T\subseteq \ALG_i(\vec v,\vec B)}}\hat w_i(T),\end{equation}
and extend these prices linearly to bundles. In other words, we obtain the anonymous static price $p_i^{(\vec v,\vec B)}(x)=\sum_{j\in x}p^{(\vec v,\vec B)}(\{j\})$ for all $i\in [n]$ if all items in $x$ are still available. For a partial allocation $\vec y$, we let $p_i^{(\vec v,\vec B)}(x_i\mid \vec y)=p_i^{(\vec v,\vec B)}(x_i)$ if $x_i$ is disjoint from $\vec y$, otherwise it is $\infty$.

\begin{theorem}[Theorem C.2. \cite{Dutting2020}]\label{thm:balancedness-MPH-k}
When $\vec B=\vec \infty$, the pricing rule $\vec p^{\vec v}$ defined in Equation~\eqref{eq:price-MPH} is, for any MPH-$k$ valuation profile $\vec v$, $(1,1,k-1)$-balanced w.r.t. any allocation rule $\ALG$ and exchange-compatible family of sets $(\mathcal F_{\vec x})_{\vec x\in X}$ defined in Equation~\eqref{eq:exchange-combinatorial}.
\end{theorem}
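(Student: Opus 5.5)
We need to verify the two balancedness conditions of Definition~\ref{def:balancedprices} for the price $p^{\vec v}(\{j\})=\sum_i\sum_{T\ni j,\,T\subseteq \ALG_i(\vec v)}w_i(T)$, with $\alpha=1$, $\beta_1=1$, $\beta_2=k-1$. Here $w_i$ is the positive hypergraph representation of the supporting PH-$k$ function $\tilde v_i$, which satisfies $\tilde v_i(\ALG_i)=v_i(\ALG_i)$ and $\tilde v_i\le v_i$ pointwise. Since prices are static and additive, $\sum_i p_i(x_i\mid \vec x_{[i-1]})=p(\bigcup_i x_i)$, and the same holds for any $\vec x'\in\mathcal F_{\vec x}$. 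So everything reduces to statements about the item set $U:=\bigcup_i x_i$ and its complement.

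\textbf{Condition (a).} We want $p(U)\ge \SW(\ALG)-\SW(\OPT(\vec v,\mathcal F_{\vec x}))$. The plan is to exhibit the candidate $\vec y\in\mathcal F_{\vec x}$ with $y_i:=\ALG_i\setminus U$. It is feasible and disjoint from $U$, so $\SW(\OPT(\vec v,\mathcal F_{\vec x}))\ge\sum_i v_i(\ALG_i\setminus U)\ge \sum_i \tilde v_i(\ALG_i\setminus U)$. Positivity of $w_i$ gives
\[
\tilde v_i(\ALG_i)-\tilde v_i(\ALG_i\setminus U)=\sum_{T\subseteq \ALG_i,\,T\cap U\neq\emptyset}w_i(T)\le \sum_{j\in U\cap \ALG_i}\sum_{T\ni j,\,T\subseteq\ALG_i}w_i(T).
\]
Summing over $i$, the right-hand side is exactly $p(U)$, because each item lies in at most one $\ALG_i$. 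This yields (a) with $\alpha=1$.

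\textbf{Condition (b).} For $\vec x'\in\mathcal F_{\vec x}$, let $U':=\bigcup_i x'_i$, which is disjoint from $U$. Then
\[
p(U')=\sum_i\sum_{T\subseteq\ALG_i}w_i(T)\,|T\cap U'|.
\]
We split each hyperedge into two cases. If $T\subseteq U'$, then $|T\cap U'|\le k$. We bound this case by $w_i(T)$, which is charged to $\OPT(\vec v,\mathcal F_{\vec x})$, plus $(k-1)w_i(T)$, which is charged to $\SW(\ALG)$. For the first charge, the allocation $(\ALG_i\cap U')_i$ is feasible and disjoint from $U$, so it lies in $\mathcal F_{\vec x}$ by Equation~\eqref{eq:exchange-combinatorial}. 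Its welfare is at least $\sum_i\tilde v_i(\ALG_i\cap U')\ge\sum_i\sum_{T\subseteq \ALG_i\cap U'}w_i(T)$. If $T\not\subseteq U'$, then $|T\cap U'|\le k-1$, and we charge the whole term to $(k-1)w_i(T)$. In total,
\[
p(U')\le \SW(\OPT(\vec v,\mathcal F_{\vec x}))+(k-1)\sum_i\sum_{T\subseteq \ALG_i}w_i(T)=\SW(\OPT(\vec v,\mathcal F_{\vec x}))+(k-1)\SW(\ALG).
\]
This gives $\beta_1=1$ and $\beta_2=k-1$.

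The main obstacle is the accounting in (b). Each hyperedge must be counted with multiplicity $|T\cap U'|$ and split correctly between the $\OPT$ term and the $\ALG$ term. The argument also requires that $\mathcal F_{\vec x}$ contain every allocation disjoint from $U$ (including $(\ALG_i\cap U')_i$), and that $\tilde v_i$ supports $v_i$ exactly at $\ALG_i$. Since the prices ignore $\vec\t$, balancedness holds uniformly over price sensitivities, which completes the proof.
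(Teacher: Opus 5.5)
Your proof is correct. It is essentially the standard hyperedge-counting argument behind the result of \cite{Dutting2020}, which the paper imports without reproducing: (a) follows from the candidate allocation $(\ALG_i\setminus U)_i\in\mathcal F_{\vec x}$ together with positivity of the $w_i$, and (b) follows from $p(U')=\sum_i\sum_{T\subseteq\ALG_i}w_i(T)\,|T\cap U'|$, the split into $T\subseteq U'$ versus $T\not\subseteq U'$, and the fact that $(\ALG_i\cap U')_i\in\mathcal F_{\vec x}$.
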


Then since $\OPT$ is preserved under the budget reduction, we see that the prices will be preserved under the reduction when choosing $\ALG=\OPT$ as its dependence on $(\vec v,\vec B)$ is equivalent to its dependence on $\vec{\hat v}$. Using Theorem~\ref{thm:balancedness-MPH-k} in combination with Theorem~\ref{lem:UB-balancedprices} and Theorem~\ref{thm:utility-bounded-guarantee} gives the following result.

\begin{theorem}\label{thm:budget-bound-MPH}
    Let $0<\delta< \frac{1}{k}$ and consider the class $\mathcal I_{\Sigma}$ of combinatorial auctions where agents have MPH-$k$ valuations, are constrained by budgets, and have price sensitivities contained in $\Sigma$. Then, for the competitive ratio of the posted price mechanism with pricing rule $\delta \vec p$ where $p_i(x_i):=\E_{(\vec v,\vec B,\vec \t)\sim \mathcal D}[p_i^{(\vec v,\vec B)}(x_i)]$ over the class $\mathcal I_{\Sigma}$, the following holds
    $$\comp(\mathcal I_{\Sigma},\delta \vec p)\leq \frac{1}{(1-\delta k)\delta}.$$
    Thus, for the (optimal) choice of $\delta=\frac{1}{2k}$ the competitive ratio is bounded above by $4k$.
\end{theorem}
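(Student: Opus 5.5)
The proof chains three earlier results: Theorem~\ref{thm:balancedness-MPH-k} gives balancedness, Theorem~\ref{lem:UB-balancedprices} turns it into utility-boundedness, and Theorem~\ref{thm:utility-bounded-guarantee} turns that into a competitive ratio. All that remains is to check the hypotheses and do the arithmetic.

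\textbf{Balancedness of the proxy collection.} Take $\ALG=\OPT$. By Proposition~\ref{prop:MPH-k}, every capped valuation $\hat v_i=\min\{v_i,B_i\}$ is again MPH-$k$. Hence every realization $(\vec{\hat v},\vec\infty,\vec{\hat\t})$ in the support of the proxy instance $\hat{\mathcal D}$ (taken w.r.t. $\delta\vec p$) is a budget-free MPH-$k$ profile. Theorem~\ref{thm:balancedness-MPH-k} then says that $\vec p^{\vec{\hat v}}$ is $(1,1,k-1)$-balanced w.r.t. $\OPT$ and the family in Equation~\eqref{eq:exchange-combinatorial}. This uses one fixed exchange-compatible family for all profiles, and the prices do not depend on $\vec{\hat\t}$. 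So the whole collection is $(1,1,k-1)$-balanced in the sense of Definition~\ref{def:balancedprices}.

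\textbf{Preservation under the budget reduction.} $\OPT$ is preserved, as noted after Definition~\ref{def:preserved}. For the prices, note that $p^{(\vec v,\vec B)}$ depends on $(\vec v,\vec B)$ only through $\vec{\hat v}$: it uses $\OPT(\vec v,\vec B)$ and the supporting hypergraph-$k$ representations of $\vec{\hat v}$. Both coincide with the objects computed from $(\vec{\hat v},\vec\infty)$, provided we fix the same tie-breaking for $\OPT$ and for the choice of supporting representation. Therefore $p^{(\vec v,\vec B,\vec\t)}=p^{(\vec{\hat v},\vec\infty,\vec{\hat\t})}$, which is the sufficient condition given after Definition~\ref{def:preserved}, so $\vec p$ is preserved. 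I expect this to be the only delicate point, and it is handled by a consistent tie-breaking convention.

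\textbf{Applying the two theorems.} Set $\alpha=1$, $\beta_1=1$, $\beta_2=k-1$. The condition $\delta(\beta_1+\beta_2)=\delta k<1$ is exactly the assumption $\delta<1/k$. Theorem~\ref{lem:UB-balancedprices} then shows that $\delta\vec p$ is $(\lambda,\mu)$-UB w.r.t.\ $\OPT$ with
\[
\lambda=1-\delta k,\qquad \mu=\frac{1-\delta}{\delta}.
\]
The instances are budgeted, so $\sigma_{\min}=0$. Since $\mu\ge 0$, we have $\max\{1,1+\mu\}=1+\mu=1/\delta$. Theorem~\ref{thm:utility-bounded-guarantee} gives expected liquid welfare at least $\delta(1-\delta k)\,\E[\LW(\OPT)]$, so the competitive ratio is at most $\frac{1}{\delta(1-\delta k)}$.

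\textbf{Choice of $\delta$.} The product $\delta(1-\delta k)$ is a concave quadratic in $\delta$, maximized at $\delta=\frac1{2k}$ with value $\frac1{4k}$. This gives the bound $4k$.
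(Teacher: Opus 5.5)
Your proposal is correct and matches the paper's argument. Like the paper, you chain Theorem~\ref{thm:balancedness-MPH-k} (applied to the capped profiles, which are MPH-$k$ by Proposition~\ref{prop:MPH-k}), Theorem~\ref{lem:UB-balancedprices} with $(\alpha,\beta_1,\beta_2)=(1,1,k-1)$, and Theorem~\ref{thm:utility-bounded-guarantee} with $\sigma_{\min}=0$, after checking that $\OPT$ and the prices are preserved under the budget reduction. Your arithmetic ($\lambda=1-\delta k$, $\mu=(1-\delta)/\delta$, ratio $1/(\delta(1-\delta k))$, minimized at $\delta=1/(2k)$ to give $4k$) is also right.
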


When $k>1$ and agents are not budget-constrained and are utility maximizers, the competitive ratio improves to
$$\comp(\mathcal I_{\{1\}}^{\infty},\delta \vec p)\leq \frac{1-\delta}{(1-\delta k)\delta}.$$
For the choice of $\delta = \frac{1}{2k-1}$, we recover the result of $4k-2$ from \cite{Dutting2020}. Note that this is not necessarily the optimal choice, which is given $\delta = 1-\sqrt{1-\frac{1}{k}}$, yielding a slight improvement from $4k-2$ to $2k+2\sqrt{k(k-1)}-1$. This improvement was also found in \cite{braun2023simplified} using an LP-duality while here we obtain it explicitly. Interestingly, budget constraints barely worsen the upper bound for large $k$ when comparing the budget-free competitive ratio in terms of social welfare to the competitive ratio in terms of liquid welfare.

\subsection{Matroid Constrained Auctions}
In this section, we apply the results from the main text to matroids. A matroid $\mathcal M=(E,\mathcal I)$ consists of a ground set $E$ and downward-closed set $\mathcal I$ satisfying the matroid exchange axiom: for all $I,J\in \mathcal I$ such that $|I|<|J|$, there exists an element $x\in J\setminus I$ such that $I\cup \{x\}\in \mathcal I$. 
The sets in $\mathcal I$ are called independent sets. Now, for our setting, let $E$ be partitioned into subsets $E_1,\dots,E_n$, corresponding to agents, and for each $i$ we let the set of outcomes $X_i$ be $2^{E_i}$. 
An allocation $\vec x$ is feasible whenever the chosen elements form an independent set, i.e. $\bigcup_{i\in [n]}x_i\in \mathcal I$. We let $\OPT(\vec v,\vec B\mid S)$ denote the liquid welfare maximizing allocation $T\in \mathcal I$ such that $T\cap S=\emptyset$ and $T\cup S\in \mathcal I$. For this setting, we restrict ourselves to XOS valuations. 
Given a valuation and budget profile $(\vec v,\vec B)$, we write $\vec{\tilde v}$ for the supporting additive valuations of $\vec{\hat v}$ for the allocation $\OPT(\vec v,\vec B)$. Then, letting $Y=\bigcup_{j\in [n]}y_j$, the pricing rule is given
\begin{equation}\label{eq:price-matroid}p_i^{(\vec v,\vec B)}(x_i\mid \vec y)=\begin{cases}\sum_{\ell\in [n]}\tilde v_\ell\left(\OPT_\ell\left(\vec{\tilde v}\mid Y\right)\right)-\sum_{\ell\in [n]}\tilde v_\ell\left(\OPT_\ell\left(\vec{\tilde v}\mid Y\cup x_i\right)\right), &\text{if $Y\cup x_i\in \mathcal I$}\\
\infty, &\text{otherwise}.
\end{cases}\end{equation}

For additive valuations in budget-free settings, this pricing rule reduces to the one in presented in $\cite{Dutting2020}$ and is proven over budget-free instances to be $(1,1)$-balanced w.r.t. $\OPT$. They extended these results to XOS valuations giving the following theorem.

\begin{theorem}[Theorem E.1. \cite{Dutting2020}]\label{thm:balanced-matroids}
    Let $\vec v$ be any XOS valuation profile and $\vec B=\vec \infty$. Then the pricing rule $\vec p^{\vec v}$ defined above is for any XOS-valuation profile $\vec v$, $(1,1)$-balanced w.r.t. $\OPT$ and exchange-compatible family of sets $(\mathcal F_{\vec x})_{\vec x\in X}$ defined by
    \begin{equation}\label{eq:exchangecompatible-matroid}\mathcal F_{\vec x}:=\left\{\vec y\in \prod_{i\in [n]}2^{E_i}\mid\left(\bigcup_{i\in [n]} y_i\right) \cap\left( \bigcup_{i\in [n]}x_i\right)=\emptyset \; \text{and}\; \left(\bigcup_{i\in [n]} y_i\right) \cup\left( \bigcup_{i\in [n]}x_i\right)\in \mathcal I\right\}.\end{equation}
\end{theorem}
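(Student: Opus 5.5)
The plan is to reduce both balancedness conditions to statements about one auxiliary set function. Throughout, fix an XOS profile $\vec v$ with $\vec B=\vec\infty$. Let $\vec{\tilde v}$ be the additive supporting valuations of $\vec v$ at $\OPT(\vec v)$. Assign each element $e\in E_\ell$ the weight $w(e):=\tilde v_\ell(\{e\})$. For a set $S\in\mathcal I$, define
$$F(S):=\max\Big\{w(T)\;:\;T\cap S=\emptyset,\ T\cup S\in\mathcal I\Big\},$$
so $F(S)=\sum_\ell \tilde v_\ell(\OPT_\ell(\vec{\tilde v}\mid S))$. With this notation the price in Equation~\eqref{eq:price-matroid} reads $p_i(x_i\mid\vec y)=F(Y)-F(Y\cup x_i)$.

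Two basic comparisons with $\vec v$ will be used repeatedly. First, $F(\emptyset)\ge w(\OPT(\vec v))=\SW(\OPT(\vec v))$, because $\vec{\tilde v}$ supports $\vec v$ at $\OPT$. Second, for every $T$ we have $w(T\cap E_\ell)=\tilde v_\ell(T\cap E_\ell)\le v_\ell(T\cap E_\ell)$ by the XOS property. Hence for $X=\bigcup_i x_i$,
$$F(X)\le \SW(\OPT(\vec v,\mathcal F_{\vec x})),$$
since any $T$ feasible for $F(X)$ yields a profile in $\mathcal F_{\vec x}$ as defined in Equation~\eqref{eq:exchangecompatible-matroid}.

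\textbf{Condition (a).} This is a telescoping argument. Writing $X_{<i}:=\bigcup_{k<i}x_k$, we get
$$\sum_i p_i(x_i\mid \vec x_{[i-1]})=\sum_i\big(F(X_{<i})-F(X_{<i}\cup x_i)\big)=F(\emptyset)-F(X).$$
By the two comparisons above, this is at least $\SW(\OPT(\vec v))-\SW(\OPT(\vec v,\mathcal F_{\vec x}))$, which gives (a) with $\alpha=1$.

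\textbf{Condition (b).} Fix $\vec x'\in\mathcal F_{\vec x}$ and write $X'=\bigcup_i x_i'$, $X'_{<i}:=\bigcup_{k<i}x'_k$. The key structural claim I would prove is a monotone-loss property of $F$. For disjoint sets $S\subseteq S'$ and $A$ with $S'\cup A\in\mathcal I$,
$$F(S)-F(S\cup A)\;\le\;F(S')-F(S'\cup A),$$
that is, contracting $A$ costs more weight in a more contracted matroid. Granting this, take $S=X_{<i}$ and $S'=X\cup X'_{<i}$. Each price term is then bounded by a term of a telescoping sum:
$$\sum_i p_i(x'_i\mid\vec x_{[i-1]})\le\sum_i\big(F(X\cup X'_{<i})-F(X\cup X'_{<i}\cup x'_i)\big)=F(X)-F(X\cup X')\le F(X).$$
By the second comparison, $F(X)\le\SW(\OPT(\vec v,\mathcal F_{\vec x}))$, which is (b) with $\beta_1=1$ and $\beta_2=0$.

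\textbf{Main obstacle.} The real work is the monotone-loss property. I would prove it by induction on $|A|$ and $|S'\setminus S|$, so it suffices to treat single elements $A=\{a\}$ and $S'=S\cup\{c\}$. In that case the inequality is an exchange-type inequality for maximum-weight independent sets in the contractions $\mathcal M/S$, $\mathcal M/(S\cup a)$, $\mathcal M/(S\cup c)$ and $\mathcal M/(S\cup\{a,c\})$.

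To establish it, I would use the greedy characterization of max-weight independent sets. In $\mathcal M/S$, contracting one element $a$ removes exactly one element from the greedy optimum, namely the lightest element on the fundamental circuit through $a$, or nothing if $a$ is a loop. I would then compare which element is removed before and after also contracting $c$. The aim is to show that after contracting $c$ the removed element can only become heavier, or that a matching exchange pairs the removed elements accordingly.

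If this direct exchange argument gets messy, I would fall back on a rank-function argument. This rewrites $F$ through the weighted rank function $r_w$ and derives the inequality from the submodularity of $r_w$, which is a standard fact for matroids.
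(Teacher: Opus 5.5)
Your reduction is correct. It is the standard argument for this result; the paper gives no proof of its own and simply imports Theorem E.1 from D\"utting et al. The following steps are all sound:
\begin{itemize}
\item passing to the supporting weights $w$;
\item the two comparisons $F(\emptyset)\ge\SW(\OPT(\vec v))$ and $F(X)\le\SW(\OPT(\vec v,\mathcal F_{\vec x}))$;
\item the telescoping for (a);
\item for (b), the shift of each price term from $X_{<i}$ to $X\cup X'_{<i}$, followed by a second telescoping sum.
\end{itemize}
In (b) you also correctly need $x_i'$ disjoint from $X\cup X'_{<i}$, which holds because the $E_i$ partition $E$. You also need every set involved to lie below $X\cup X'\in\mathcal I$, which it does.

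What is still only a plan is the monotone-loss lemma, and that lemma is the entire matroid content of the theorem. It is true, and it needs neither induction nor exchange arguments, because it is exactly submodularity of $F$ on $\mathcal I$. For $S\in\mathcal I$, run greedy on $\mathcal M/S$ and use $r_{\mathcal M/S}(A)=r(A\cup S)-|S|$. This gives the layer-cake identity
\[
F(S)=\int_0^\infty\bigl(r(S\cup E_\theta)-|S|\bigr)\,d\theta,\qquad E_\theta:=\{e\in E: w(e)\ge\theta\}.
\]
For fixed $\theta$ the integrand is submodular in $S$: the map $S\mapsto r(S\cup C)$ is submodular and $|S|$ is modular. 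So the right-hand side is a submodular function on all of $2^E$ that agrees with $F$ on $\mathcal I$. Apply submodularity to the pair $S\cup A$ and $S'$. Since $S\subseteq S'$ and $A\cap S'=\emptyset$, their union is $S'\cup A$ and their intersection is $S$. This yields $F(S)-F(S\cup A)\le F(S')-F(S'\cup A)$ directly.

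Two cautions about the routes you sketched. First, the fallback cannot be a direct rewriting of $F$ through the weighted rank $r_w$ of $\mathcal M$, because weighted rank does not commute with contraction. Take $U_{1,2}$ on $\{s,a\}$ with $w(s)=0$ and $w(a)=1$. Then $F(\{s\})=0$, whereas $r_w(E)-r_w(\{s\})=r_w(\{a\})=1$. Reducing to the unweighted rank by thresholding, as above, is what makes a submodularity argument go through. Second, in the greedy route, if $a$ already belongs to the optimum of $\mathcal M/S$, the removed element is $a$ itself and the loss is $w(a)$. Your fundamental-circuit description does not cover this case, so the comparison after contracting $c$ would need it as well.
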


Next, since this pricing rule only depends on $(\vec v,\vec B)$ through $\vec{\hat v}$, we see that the pricing rule is preserved under the reduction. Furthermore, for $\mathcal F_{\vec x}$ defined in Equation~\eqref{eq:exchangecompatible-matroid}, we see that $\mathcal F_{\vec x}\subseteq \mathcal F$ as $\vec y\in \mathcal F_{\vec x}$ implies by the downward-closed property of $\mathcal I$ that $\bigcup_{i\in [n]}y_i\in \mathcal I$. Thus, the conditions of Theorem~\ref{lem:UB-balancedprices} are satisfied and applying it in combination with Theorems~\ref{thm:balanced-matroids}~\&~\ref{thm:utility-bounded-guarantee} gives the following result.
\begin{theorem}\label{thm:budget-bound-matroid}
    Let $\mathcal I_{\Sigma}$ be a class of instances where agents have price sensitivities contained in $\Sigma$, are constrained by budgets and have XOS valuations. Then, for $0<\delta<1$, the competitive ratio of the posted price mechanism with pricing rule $\delta \vec p$ where $p_i(x_i|\vec y):=\E_{(\vec v,\vec B,\vec \t)\sim \mathcal D}[p_i^{(\vec v,\vec B)}(x_i|\vec y)]$ over the class $\mathcal I_{\Sigma}$ satisfies the following bound
    $$\comp(\mathcal I_{\Sigma},\delta \vec p)\leq \frac{1}{(1-\delta)\delta}.$$
    For $\delta=\frac12$ this ratio is minimized and gives $4$.
\end{theorem}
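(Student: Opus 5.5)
The bound follows by chaining three earlier results. Theorem~\ref{thm:balanced-matroids} gives balancedness of the budget-free prices, Theorem~\ref{lem:UB-balancedprices} turns this into utility-boundedness of $\delta\vec p$, and Theorem~\ref{thm:utility-bounded-guarantee} converts utility-boundedness into a competitive ratio.

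\textbf{Step 1: verify the hypotheses of Theorem~\ref{lem:UB-balancedprices}.} Fix $\mathcal D\in\mathcal I_\Sigma$ and let $\hat{\mathcal D}$ be the proxy instance with respect to $\delta\vec p$. Every realization of $\hat{\mathcal D}$ is budget-free, with valuations $\hat v_i=\min\{v_i,B_i\}$. These remain XOS, since XOS is closed under capping (the capped function is the max over the additive supports scaled down, as in the combinatorial case).

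The prices $\vec p^{(\vec v,\vec B)}$ in Equation~\eqref{eq:price-matroid} depend on $(\vec v,\vec B)$ only through the additive supports of $\vec{\hat v}$ at $\OPT(\vec v,\vec B)$. Hence $\vec p^{(\vec v,\vec B,\vec\t)}=\vec p^{(\vec{\hat v},\vec\infty,\vec{\hat\t})}$. By the remark following Definition~\ref{def:preserved}, $\vec p$ is therefore preserved under the budget reduction. $\OPT$ is also preserved, because $\LW_{(\vec v,\vec B)}=\SW_{\vec{\hat v}}$ pointwise and the two maximizations coincide (under a consistent tie-breaking rule).

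Theorem~\ref{thm:balanced-matroids}, applied to each budget-free XOS profile $\vec{\hat v}$, shows that the collection of reduced pricing rules is $(1,1)$-balanced with respect to $\OPT$ and the family in Equation~\eqref{eq:exchangecompatible-matroid}. Here $\alpha=\beta_1=1$ and $\beta_2=0$. This family is exchange-compatible and satisfies $\mathcal F_{\vec x}\subseteq\mathcal F$ by downward-closedness, as noted before the statement.

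\textbf{Step 2: utility-boundedness and the competitive ratio.} For $0<\delta<1$ we have $\delta(\beta_1+\beta_2)=\delta<1$. Theorem~\ref{lem:UB-balancedprices} then shows that $\delta\vec p$ is $(1-\delta,\frac{1-\delta}{\delta})$-UB with respect to $\OPT$.

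Apply Theorem~\ref{thm:utility-bounded-guarantee} with $\sigma_{\min}=0$, which is the convention for budgeted instances. Since $\mu=\frac{1-\delta}{\delta}\ge0$, the denominator is $\max\{1,1+\mu\}=1+\mu=\frac1\delta$. The expected liquid welfare is therefore at least $(1-\delta)\delta\,\E[\LW(\OPT)]$, i.e.\ the ratio is at most $\frac{1}{(1-\delta)\delta}$. This holds uniformly over $\mathcal D\in\mathcal I_\Sigma$, which gives the bound on $\comp(\mathcal I_\Sigma,\delta\vec p)$. The function $(1-\delta)\delta$ is maximized at $\delta=\frac12$, giving the value $4$.

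\textbf{Main obstacle.} The delicate step is the preservation check in Step 1. The reduction $\hat{\mathcal D}$ depends on the induced allocation under $\delta\vec p$, and hence on $\vec p$ itself. This is why we need $\vec p(\mathcal D)=\vec p(\hat{\mathcal D})$ outright. It holds because the per-realization prices are identical on $(\vec v,\vec B)$ and $(\vec{\hat v},\vec\infty)$, and the reduction only alters $\vec\t$, on which the prices do not depend. One must also confirm that the additive supports of $\vec{\hat v}$ used in the prices are exactly the XOS supports required by Theorem~\ref{thm:balanced-matroids} for the capped profile; this is immediate since the theorem holds for any XOS profile.
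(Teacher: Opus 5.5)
Your proposal is correct and follows the paper's argument essentially verbatim. You check that the prices and $\OPT$ are preserved under the budget reduction and that $\mathcal F_{\vec x}\subseteq\mathcal F$. You then combine Theorem~\ref{thm:balanced-matroids} with Theorem~\ref{lem:UB-balancedprices} to obtain $(1-\delta,\frac{1-\delta}{\delta})$-utility-boundedness, and finish with Theorem~\ref{thm:utility-bounded-guarantee} at $\sigma_{\min}=0$, which yields the ratio $\frac{1}{(1-\delta)\delta}$, minimized at $\delta=\frac12$.
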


As noted in \cite{Dutting2020}, the construction is NP-hard for submodular or XOS valuations. Let $\GRD$ denote the allocation algorithm that allocates items greedily by allocating items that locally increase $\LW(\vec x)$ the most subject to the matroid constraint. Note that this is equivalent to greedily allocating items that locally increase $\SW(\vec x)$ w.r.t. the valuation profile $\vec{\hat v}$. Thus, since $\hat v(\cdot)=\min\{v(\cdot),B\}$ is submodular whenever $v(\cdot)$ is, we see that $\GRD$ is preserved under the budget reduction and that it remains a $2$-approximation to $\OPT$ for submodular valuations in budget-constrained settings.
In \cite{Dutting2020}, using their composability results, they prove for the pricing rule in Equation~\eqref{eq:price-matroid} over budget-free realizations that if the additive support is taken for the allocation algorithm $\GRD(\vec v,\vec B)$ instead of $\OPT(\vec v,\vec B)$, the pricing rule becomes $(1,1)$-balanced w.r.t. $\GRD$ for any submodular valuation profile $\vec v$. Combining this with Theorem~\ref{lem:UB-balancedprices} and Theorem~\ref{thm:utility-bounded-guarantee} immediately gives computationally feasible balanced prices under the oracle assumptions as \cite{Dutting2020}. The resulting posted price mechanism yields an $8$-approximation to the expected optimal liquid welfare when agents are constrained by budgets and have submodular valuations, up to an additive sampling error.

\section{Missing Proofs}\label{sec:missing-proofs}
\subsection*{Proof of Lemma~\ref{lemma:objectivebound}}
\lemmaobjective*
\begin{proof}    We write $\vec z'(\vec v,\vec B,\vec \t,\vec v',\vec B'):=\OPT(\vec v',\vec B',\mathcal F_{\vec z(\vec v,\vec B,\vec \t)})$ for the optimal allocation with the valuation and budget profiles $(\vec v',\vec B')$ under the feasibility constraint $\mathcal F_{\vec z(\vec v,\vec B,\vec \t)}$. We sample $(\vec v',\vec B',\vec \t')\sim \mathcal D$ and define
    $$\tilde y_i^{\vtype}(\vtype'):= \OPT_i\left((v_i,\vec v_{-i}'),(B_i,\vec B_{-i}'),\mathcal F_{\vec z\left((v_i',\vec v_{-i}),(B_i',\vec B_{-i}),(\t_i',\vec \t_{-i})\right)}\right),$$
    of which the observed price is
    $$\delta\cdot p_i\left(\OPT_i\left((v_i,\vec v_{-i}'),(B_i,\vec B_{-i}'),\mathcal F_{\vec z\left((v_i',\vec v_{-i}),(B_i',\vec B_{-i}),(\t_i',\vec \t_{-i})\right)}\right)\mid \vec z_{[i-1]}(\vec v,\vec B,\vec \t)\right).$$ 
    So, let $Y_i^{\vtype}$ be the distribution of $y_i^{\vtype}(\vtype')$ where $\vtype'\sim \mathcal D$. Suppose that agent $i$ deviates unilaterally from choosing $\vec z(\vec v,\vec B,\vec \t)$ to buying
    $$y_i^{\vtype}(\vtype'):= \begin{cases}
        \tilde y_i^{\vtype}(\vtype'),& \text{if $\delta p_i( \tilde y_i^{\vtype}(\vtype') \; \mid \; \vec z_{[i-1]}(\vtype))\leq \hat v_i( \tilde y_i^{\vtype}(\vtype'))$}\\
        \emptyset & \text{else}.
    \end{cases}$$
    Notice that this deviation satisfies the RoS and budget constraints. Furthermore, the bundle is available since
    $$\OPT\left((v_i,\vec v_{-i}'),(B_i,\vec B_{-i}'),\mathcal F_{\vec z\left((v_i',\vec v_{-i}),(B_i',\vec B_{-i}),(\t_i',\vec \t_{-i})\right)}\right)\in \mathcal F_{\vec z\left((v_i',\vec v_{-i}),(B_i',\vec B_{-i}),(\t_i',\vec \t_{-i})\right)}$$ 
    and because of the definition of exchange compatible. Thus, the deviation is feasible.
    In that case, we see that 
    $$\begin{aligned}\E_{\vtype,\vtype'\sim \mathcal D}[\hat u_i(y_i^{\vtype}(\vtype'))]&\geq
    \E_{\substack{(\vec v,\vec B,\vec \t)\sim \mathcal D\\ (\vec v',\vec B',\vec \sigma')\sim \mathcal D}}\bigg[\hat v_i\left(\OPT_i\left((v_i,\vec v_{-i}'),(B_i,\vec B_{-i}'),\mathcal F_{\vec z\left((v_i',\vec v_{-i}),(B_i',\vec B_{-i}),(\t_i',\vec \t_{-i})\right)}\right)\right)\\ 
    &-\delta\cdot p_i\left(\OPT_i\left((v_i,\vec v_{-i}'),(B_i,\vec B_{-i}'),\mathcal F_{\vec z\left((v_i',\vec v_{-i}),(B_i',\vec B_{-i}),(\t_i',\vec \t_{-i})\right)}\right)\mid \vec z_{[i-1]}(\vec v,\vec B,\vec \t)\right)\bigg]\\
    &= \E_{\substack{(\vec v,\vec B,\vec \t)\sim \mathcal D\\ (\vec v',\vec B',\vec \sigma')\sim \mathcal D}}\bigg[\hat v_i'(\OPT_i(\vec v',\vec B',\mathcal F_{\vec z(\vec v,\vec B,\vec \t)}))\\
    &-\delta\cdot p_i\left(\OPT_i(\vec v',\vec B',\mathcal F_{\vec z(\vec v,\vec B,\vec \t)})\mid\vec z_{[i-1]}(\vec v,\vec B,\vec \t)\right)\bigg]\\
    &=\E_{\substack{(\vec v,\vec B,\vec \t)\sim \mathcal D\\ (\vec v',\vec B')\sim \pi}}\bigg[\hat v_i'\left( z_i'(\vec v,\vec B,\vec \t,\vec v',\vec B')\right)-\delta\cdot p_i\left( z_i'(\vec v,\vec B,\vec \t,\vec v',\vec B')\mid\vec z_{[i-1]}(\vec v,\vec B,\vec \t)\right)\bigg],\end{aligned}$$
    where in the second equality we used that $\vec z_{[i-1]}(\vec v,\vec B,\vec \t)$ does not depend on $v_i$ and $B_i$ and $\sigma_i$. Furthermore, in the last equality, we drop the dependence on $\vec \sigma'$.
    Summing over all agents gives    \begin{equation}\label{eq:continue1}\begin{aligned}\sum_{i\in [n]}\E_{\vtype,\vtype'\sim \mathcal D}[\hat u_i(y_i^{\vtype}(\vtype'))]&\geq \E_{\substack{(\vec v,\vec B,\vec \t)\sim \mathcal D\\ (\vec v',\vec B')\sim \pi}}\left[\SW_{\vec{\hat v}'}(\OPT(\vec v',\vec B',\mathcal F_{\vec z(\vec v,\vec B,\vec \t)}))\right]\\
    &\qquad -\E_{\substack{(\vec v,\vec B,\vec \t)\sim \mathcal D\\ (\vec v',\vec B')\sim \pi}}\left[\sum_{i\in[n]}\delta\cdot p_i\left( z_i'(\vec v,\vec B,\vec \t,\vec v',\vec B')\mid\vec z_{[i-1]}(\vec v,\vec B,\vec \t)\right)\right]
    \end{aligned}\end{equation}

    Now, note that  $\vec z'(\vec v,\vec B,\vec \t,\vec v',\vec B')\in \mathcal F_{\vec z(\vec v,\vec B,\vec\t)}$. Thus, by the second inequality of balanced prices in Equation~\eqref{eq:low-enough}, we see that
    $$\begin{aligned}\sum_{i\in[n]}\delta p_i\left( z_i'(\vec v,\vec B,\vec \t,\vec v',\vec B')\mid \vec z_{[i-1]}(\vec v,\vec B,\vec \t)\right)&=\sum_{i\in [n]}\delta \E_{(\tilde{\vec v},\tilde{\vec \t})\sim \hat{\mathcal D}}\left[p^{(\tilde{\vec v},\tilde{\vec \t})}\left( z_i'(\vec v,\vec B,\vec \t,\vec v',\vec B')\mid \vec z_{[i-1]}(\vec v,\vec B,\vec \t)\right)\right]\\
    &\leq \delta \beta_1\E_{(\tilde{\vec v},\tilde{\vec \sigma})\sim \hat{\mathcal D}}\left[\SW_{\tilde{\vec v}}(\OPT(\tilde{\vec v},\mathcal F_{\vec z(\vec v,\vec B,\vec \t)}))\right]\\
    &\qquad+\delta\beta_2\E_{(\tilde{\vec v},\tilde{\vec \t})\sim \hat{\mathcal D}}\left[\SW_{\tilde{\vec v}}(\ALG(\tilde{\vec v},\tilde{\vec \t}))\right],
    \end{aligned}$$
    where first equality holds because the pricing rule is preserved under the reduction.

    Using this and continuing from the inequality in Equation~\eqref{eq:continue1} gives
    $$\begin{aligned}\sum_{i\in [n]}\E_{\vtype,\vtype'\sim \mathcal D}[\hat u_i(y_i^{\vtype}(\vtype'))]&\geq \E_{\substack{(\vec v,\vec B,\vec \t)\sim \mathcal D\\ (\vec v',\vec B')\sim \pi}}\left[\SW_{\vec{\hat v}'}(\OPT(\vec v',\vec B',\mathcal F_{\vec z(\vec v,\vec B,\vec \t)}))\right]\\
    &\qquad -\delta \beta_1\E_{\substack{(\vec v,\vec B,\vec \t)\sim \mathcal D\\ (\tilde{\vec v},\tilde{\vec \t})\sim \hat{\mathcal D}}}\left[\SW_{\tilde{\vec v}}(\OPT(\tilde{\vec v},\mathcal F_{\vec z(\vec v,\vec B,\vec \t)}))\right]\\
    &\qquad -\delta \beta_2\E_{ (\tilde{\vec v},\tilde{\vec \t})\sim\hat {\mathcal D}}\left[\SW_{\tilde{\vec v}}(\ALG(\tilde{\vec v},\tilde{\vec \t}))\right].
    \end{aligned}$$
    Now, note that since the realizations of $\hat{\mathcal D}$ are coupled to the realizations of $\mathcal D$, we see that
    $$\E_{ (\tilde{\vec v},\tilde{\vec \t})\sim\hat {\mathcal D}}\left[\SW_{\tilde{\vec v}}(\ALG(\tilde{\vec v},\tilde{\vec \t}))\right]=\E_{(\vec v,\vec B,\vec \sigma)\sim \mathcal D}\left[\SW_{\vec{\hat v}}(\ALG(\vec{\hat v},\vec{\hat \sigma}))\right]=\E_{(\vec v,\vec B,\vec \sigma)\sim \mathcal D}\left[\LW(\ALG(\vec v,\vec B,\vec \sigma))\right],$$
    where in the last equality we used that $\ALG$ is preserved under the reduction.
    Furthermore, by an analogous line of reasoning, we obtain that
    $$\begin{aligned}\E_{\substack{(\vec v,\vec B,\vec \t)\sim \mathcal D\\ (\tilde{\vec v},\tilde{\vec \t})\sim \hat{\mathcal D}}}\left[\SW_{\tilde{\vec v}}(\OPT(\tilde{\vec v},\mathcal F_{\vec z(\vec v,\vec B,\vec \t)}))\right]&=\E_{\substack{(\vec v,\vec B,\vec \t)\sim \mathcal D\\ (\vec v',\vec B',\vec \t')\sim \mathcal D}}\left[\SW_{\vec{\hat v}'}(\OPT(\vec{\hat v}',\mathcal F_{\vec z(\vec v,\vec B,\vec \t)}))\right]\\
    &=\E_{\substack{(\vec v,\vec B,\vec \t)\sim \mathcal D\\ (\vec v',\vec B')\sim \pi}}\left[\LW_{(\vec v',\vec B')}(\OPT(\vec v',\vec B',\mathcal F_{\vec z(\vec v,\vec B,\vec \t)}))\right].
    \end{aligned}$$
    Similarly, the first term on the RHS of the inequality in Equation~\eqref{eq:continue1} can be turned into liquid welfare and will equal to the above. Combining everything together, one obtains
    $$\begin{aligned}\sum_{i\in [n]}\E_{\vtype,\vtype'\sim \mathcal D}[\hat u_i(y_i^{\vtype}(\vtype'))]&\geq (1-\delta \beta_1)\E_{\substack{\vtype\sim \mathcal D\\ (\vec v',\vec B')\sim \pi}}\left[\LW_{(\vec v',\vec B')}(\OPT(\vec v',\vec B',\mathcal F_{\vec z(\vtype)}))\right]-\delta \beta_2\E_{\vtype\sim \mathcal D}\left[\LW(\ALG(\vtype))\right].
    \end{aligned}$$
\end{proof}

\subsection*{Proof of Lemma~\ref{lemma:revenuebound}}
\lemmarevenue*
\begin{proof}[Proof of Lemma~\ref{lemma:revenuebound}]
Immediately applying the first property of balanced prices in Equation~\eqref{eq:high-enough} gives
$$\begin{aligned}\sum_{i\in[n]}\delta p_i\left( z_i(\vec v,\vec B,\vec \t)\mid \vec z_{[i-1]}(\vec v,\vec B,\vec \t)\right)&=\sum_{i\in [n]}\delta \E_{(\tilde{\vec v},\tilde{\vec \t})\sim \hat{\mathcal D}}\left[p^{(\tilde{\vec v},\tilde{\vec \t})}\left( z_i(\vec v,\vec B,\vec \t)\mid \vec z_{[i-1]}(\vec v,\vec B,\vec \t)\right)\right]\\
&\geq \frac{\delta}{\alpha}\E_{(\tilde{\vec v},\tilde{\vec \sigma})\sim \hat{\mathcal D}}\left[\SW_{\tilde{\vec v}}(\ALG(\tilde{\vec v},\tilde{\vec \t}))\right]\\
&\qquad-\frac{\delta}{\alpha}\E_{(\tilde{\vec v},\tilde{\vec \t})\sim \hat{\mathcal D}}\left[\SW_{\tilde{\vec v}}(\OPT(\tilde{\vec v},\mathcal F_{\vec z(\vec v,\vec B,\vec \t)}))\right].
\end{aligned}$$
Here, the first equality holds since the pricing rule is preserved under the reduction. Next, taking the expectation on both sides and rewriting the expectations like in the proof of Lemma~\ref{lemma:objectivebound} gives the result.
\end{proof}

\end{document}